\DeclareSymbolFont{tipa}{T3}{cmr}{m}{n}
\DeclareMathAccent{\invbreve}{\mathalpha}{tipa}{16}
\newcommand{\symmdiff}{\bigtriangleup} 
 \newcommand{\lovasz}{Lov\'asz}
\newcommand{\ggrow}{\hat{g}} 
\newcommand{\gshrink}{\check{g}}  
\newcommand{\gbar}{\bar{g}} 
\newcommand{\growsymb}[1]{\hat{#1}} 
\newcommand{\shrinksymb}[1]{\check{#1}}
\newcommand{\barsymb}[1]{\bar{#1}}
\newcommand{\inpolygrow}{\ensuremath{\hat{\partial}^f}}
\newcommand{\inpolyshrink}{\ensuremath{\check{\partial}^f}}
\newcommand{\inpolybar}{\ensuremath{\bar{\partial}^f}}
\newcommand{\inpolygrowshrink}{\ensuremath{\mathring{\partial}^f}}
\DeclareMathOperator*{\argmax}{argmax}
\DeclareMathOperator*{\argmin}{argmin}
\newcommand{\lex}[1]{{\ensuremath \breve #1}}
\newcommand{\mex}[1]{{\ensuremath \tilde #1}}
\newcommand{\cex}[1]{{\ensuremath \invbreve #1}}
\newcommand{\perm}{\ensuremath{\sigma}}
\providecommand{\dodraft}{false}
\newcommand{\rishabh}[2]{{\color{blue}{#1 $\to$ {\bf Rishabh}}: #2}}
\newcommand{\jeff}[2]{{\color{red}{#1 $\to$ {\bf Jeff}}: #2}}
\newcommand{\rishabh}[2]{}
\newcommand{\jeff}[2]{}
\newtheorem{theorem}{Theorem}[section]
\newtheorem{lemma}[theorem]{Lemma}
\newtheorem{proposition}[theorem]{Proposition}
\newtheorem{observation}{Observation}[section]
\newtheorem{corollary}[theorem]{Corollary}
\newtheorem{example}{Example}[section]
\begin{document}
\title{Polyhedral aspects of Submodularity, Convexity and Concavity}
\author{Rishabh Iyer \\
Dept. of Electrical Engineering\\  
University of Washington\\ 
Seattle, WA-98175, USA
\and Jeff Bilmes \\
Dept. of Electrical Engineering\\  
University of Washington\\ 
Seattle, WA-98175, USA}
\maketitle

\begin{abstract}
  The seminal work by Edmonds~\cite{edmondspolyhedra} and
  \lovasz{}~\cite{lovasz1983} shows the strong connection between
  submodular functions and convex functions. Submodular functions have
  tight modular lower bounds, and a subdifferential
  structure~\cite{fujishige1984subdifferential} in a manner akin to
  convex functions. They also admit polynomial time algorithms for
  minimization and satisfy the Fenchel duality
  theorem~\cite{fujishige1984theory} and the discrete separation
  theorem~\cite{frank1982algorithm}, both of which are fundamental
  characteristics of convex functions. Submodular functions also have
  properties similar to concavity. For example, submodular function
  maximization, though NP hard, admits constant factor approximation
  guarantees. Concave functions composed with modular functions are
  submodular, and they also show the diminishing returns property. In
  this manuscript, we try to provide a more complete picture on the
  relationship between submodularity and both convexity and concavity
  --- we do this by extending many of the results connecting
  submodularity with convexity~\cite{lovasz1983, frank1982algorithm,
    fujishige1984theory, edmondspolyhedra,
    fujishige1984subdifferential} to the concave aspects of submodular
  functions.  We first show the existence of superdifferentials (a
  polyhedral partitioning of $\mathbb R^V$) and efficiently computable
  tight modular upper bounds of a submodular function. While we show
  that it is hard to characterize these polyhedra, we obtain inner and
  outer bounds on the superdifferential along with certain specific
  and useful supergradients. We then investigate forms of concave
  extensions of submodular functions and show interesting
  relationships to submodular maximization. We next show connections
  between optimality conditions over the superdifferentials and
  submodular maximization, and show how forms of approximate
  optimality conditions translate into approximation factors for
  maximization. We end this paper by studying versions of a ``concave'' discrete
  separation theorem and the Fenchel duality theorem when seen from
  the concave point of view. In every case, we relate our results to
  the existing results from the convex point of view, thereby
  improving the analysis of the relationship between submodularity,
  convexity, and concavity.
\end{abstract}

\section{Introduction}
\label{sec:introduction}

Long known to be an important property for problems in
combinatorial optimization, economics, operations research, and game
theory, submodularity is gaining popularity in a number of new areas including machine learning. Along with its natural connection to many application domains, it also admits a number of interesting theoretical characterizations. A function $f: 2^V \to \mathbb R$ over a ground set $V = \{1, 2,
\cdots, n\}$ is \emph{submodular} if for all subsets $S, T \subseteq
V$, it holds that, 
\begin{align}
f(S) + f(T) \geq f(S \cup T) + f(S \cap
T). 
\end{align}
Equivalently, a submodular set function satisfies
\emph{diminishing marginal returns}: Define $f(j | S) \triangleq f(S \cup \{ j \}) -
f(S)$ as the marginal cost of element $j\in V$ with respect to $S
\subseteq V$.\footnote{We also use this notation for sets $A,B$ as in $f(A|B) = f(A\cup B) - f(B)$.}
The diminishing returns property states that, 
\begin{align}
f(j | S) \geq f(j | T), \forall S \subseteq T \text{ and } j \notin T.
\end{align} 
Through the rest of the paper below, we shall also assume without loss of generality that $f(\emptyset) = 0$.

\paragraph{Submodularity and convexity:}
Submodular functions have been strongly associated with convex functions, to the extent that submodularity is sometimes regarded as a discrete analogue
of convexity~\cite{fujishige2005submodular}. 
This relationship is evident by the fact that submodular function
minimization is easy in that there exist strongly polynomial time
algorithms which achieve it. This is akin to convex minimization which
is also easy. A number of recent results, however, make this
relationship much more
formal. 
For example, similar to convex functions, submodular functions have
tight modular lower bounds and admit a subdifferential
characterization~\cite{fujishige1984subdifferential}. Moreover, it is
possible~\cite{fujishige1984theory} to provide optimality conditions,
in a manner analogous to the Karush-Kuhn-Tucker (KKT) conditions from
convex programming, for submodular function minimization. Furthermore,
the Fenchel duality theorem and the discrete separation theorem, both
of which are known to hold for convex functions have been shown to 
hold also for submodular functions~\cite{fujishige1984theory,
  frank1982algorithm}. Submodular functions also admit a
natural convex extension, known as the \lovasz{} extension, that is
easy to evaluate~\cite{lovasz1983} and optimize. The \lovasz{}
extension, moreover, also has no integrality gap and minimizing a
submodular function is equivalent to minimizing its \lovasz{}
extension. All these results show that submodularity is indeed
closely related to convexity, and seems to verify the claim
that submodularity is ``the'' discrete analog of convexity.

\paragraph{Submodular functions and concavity:} Submodular functions
also have properties that are unlike convexity and are more akin to
concavity. Submodular function maximization is known to be NP
hard. However, there exist a number of constant factor approximation
algorithms based on simple greedy or local search
heuristics~\cite{janvondrak, lee2009non, nemhauser1978} and some
recent continuous approximation methods~\cite{chekuri2011submodular,
  feldman2011unified}. This is unlike convexity where maximization can
be hopelessly
difficult~\cite{sahni1974computationally}.\rishabh{Jeff}{add a
  citation to where this is proven.}\jeff{Rishabh}{Done.}
Furthermore, submodular functions have a diminishing returns property
which is similar to concavity, and concave over modular functions are
known to be submodular. In addition, submodular functions have been
shown to have tight modular upper
bounds~\cite{rkiyersemiframework2013, rkiyersubmodBregman2012,
  rkiyeruai2012, jegelkacvpr, jegelkanips}, and as we show, possess
superdifferentials and supergradients very much like concave
functions. The multi-linear extension of a submodular function, which
is useful~\cite{chekuri2011submodular} for example in the context of
submodular maximization, is known to be concave when restricted to a
particular direction. All these seem to indicate that submodular
functions are related both to convexity and to concavity. In some
sense, submodular functions are strange and lucky --- convex and
concave functions each have distinct and useful properties, while
submodular functions have best of both worlds.  In this paper, we
formalize these relationships.\looseness-1

\begin{figure}
\centering
\includegraphics[page=1,width = 0.4\textwidth]{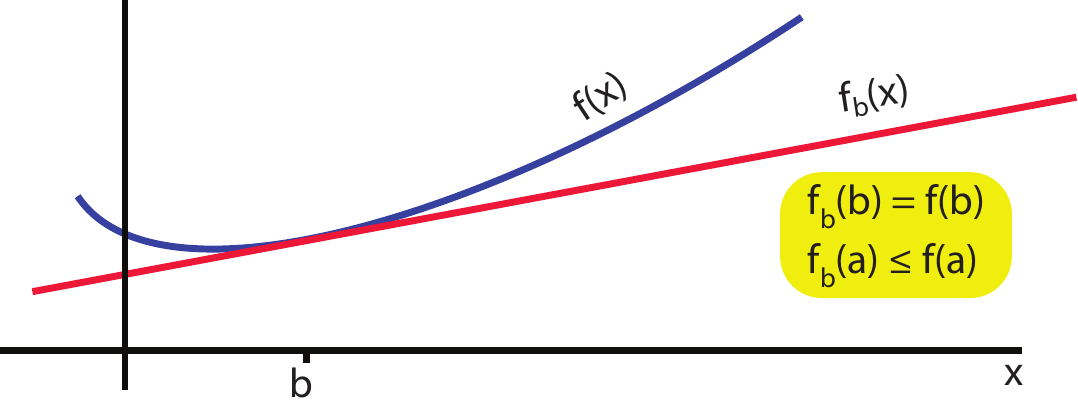}
~~~~~~
\includegraphics[page=2,width = 0.4\textwidth]{subgradsupergradconvexconcave.pdf}
\caption{Convex and Concave functions with sub and super gradients respectively} 
\end{figure}

\subsection{Motivation and Past Work}

For more than four decades, researchers have been investigating
theoretical and algorithmic aspects of submodular functions. The bulk
of this work~\cite{fujishige2005submodular, edmondspolyhedra,
  lovasz1983, fujishige1984subdifferential, frank1982algorithm,
  fujishige1984submodular} has been in relating submodular functions
to convexity from a polyhedral perspective, thereby culminating in
efficient algorithms for submodular minimization. From a polyhedral
perspective, Fujishige, Edmonds and others
\cite{fujishige2005submodular, edmondspolyhedra,
  fujishige1984subdifferential}, provided a characterization of the
submodular polyhedron, the base polytope, and subdifferentials of
submodular functions.  \lovasz~\cite{lovasz1983} then provided an
efficient characterization of the convex extension of a submodular
function, which has become known as the \lovasz{} extension and also
the Choquet integral~\cite{choquet1953theory}.  The connection between
submodularity and convexity was made still more precise when it was
shown~\cite{frank1982algorithm, fujishige1984submodular} that the
discrete separation theorem, Fenchel duality theorem, and the
Minkowski sum theorems hold for submodular functions, when seen as
analogous to convexity. From a computational perspective, these results
have helped provide several algorithms for submodular function
minimization. In particular, \cite{fujishige2005submodular, frbach1}
use the submodular polyhedron and the convex extension to provide an
exact algorithm for submodular minimization. Similarly,
\cite{schrijver2000combinatorial,iwata2001combinatorial, iwata2003faster,orlin2009faster,iwata2009simple} and others have used
many of these ideas to provide exact algorithms for submodular
minimization.
 
While submodular functions are related to concavity (as discussed
above), the polyhedral aspects of submodular functions from the
perspective of maximization (and that we address in this paper) have
not been nearly as well studied.  Most work on submodular maximization
has been on the exploration of approximation algorithms.  The first
set of results for submodular maximization were shown
in~\cite{nemhauser1978, nemhauser78}, where they provide a $1 - 1/e$
approximation algorithm (in the form of a simple greedy heuristic) for
maximizing a monotone submodular function under a cardinality
constraint. Further variants of the greedy algorithm were also
extended to matroid and knapsack constraints~\cite{fisher1978analysis,
  sviridenko2004note, krause2005note, linbudget}. The factor $1 - 1/e$
was also shown to be optimal under the value oracle
model~\cite{feige1998threshold, nemhauser78}.  The first systematic
study on non-monotone submodular function maximization was performed
by Fiege \textit{et al}~\cite{janvondrak}, where they obtain a $1/3$
and a randomized $2/5$ approximation for unconstrained submodular
maximization. They also show an absolute hardness of $1/2$ for this
problem.  They raised an open question, however, whether there exists
a tight $1/2$ approximation algorithm for this problem. This question
was resolved in~\cite{feldman2012optimal}, where they show that a
simple randomized linear time algorithm achieves an approximation
factor of $1/2$ in expectation.  Many of these results can be extended
to matroid and knapsack constraints in~\cite{lee2009non,
  matroidimproved}.


Polyhedral aspects of submodular maximization and the concave
extension of a submodular function have been studied but only in a relatively limited
context~\cite{janvondrak, vondrakcontinuousgreedy,
  vondrak2007submodularity, dughmi2009submodular, nemhauser1978,
  rkiyersubmodBregman2012, rkiyeruai2012, jegelkacvpr, jegelkanips,
  Boros2002155}.  \rishabh{Jeff}{there might be some earlier refs.\ to
  concave extensions in general in the pseudo-Boolean function
  literature, see Boros and Hammer.}\jeff{Rishabh}{Done.}  For
example, a recent chain of work by Jan Vondr{\'a}k and
others~\cite{vondrakcontinuousgreedy, vondrak2007submodularity,
  dughmi2009submodular} investigated concave extensions of a
submodular function, which were shown to be NP hard to evaluate
\cite{vondrak2007submodularity}. Similarly the 
submodular
semidifferentials has
gained a lot of attention from the machine learning community.  
In
particular, the subgradients and supergradients of a submodular
function have inspired a unifying Majorization-Minimization framework
for submodular
optimization~\cite{narasimhanbilmes,rkiyersubmodBregman2012,
  rkiyeruai2012, jegelkacvpr, jegelkanips, nipssubcons2013,
  curvaturemin}. These semidifferentials have also been used in the
context of approximate inference in a class of probability
distributions defined via submodular functions~\cite{SPP2014discml,
 djolonga14map}, and have also been used to define a class of Bregman
divergences using submodular functions~\cite{rkiyersubmodBregman2012}.

In this paper, we attempt to provide a first unifying characterization of the concave aspects of submodular functions from a polyhedral perspective, thereby extending many of the observations made in~\cite{lovasz1983}. In this effort, we discover a number of interesting connections between these different aspects of submodular functions connecting concavity, and contrast them to known results of submodularity and convexity.

\subsection{Our Contributions}
The main contributions of this work is in providing the first systematic theoretical study related to polyhedral aspects of submodular function maximization and connections to concavity. The following provides a summary of the main components and contributions of this paper.
\begin{itemize}
\item We show that submodular functions have tight modular (additive) upper bounds, thereby proving the existence of the superdifferential of a submodular function. We show that characterizing this subdifferential is NP hard in general. However, we provide a series of (successively tighter) outer and also inner polyhedral bounds, all obtainable in polynomial time, and also show that we can obtain some specific practically useful supergradients in polynomial time.
Along the way, we relate this to $M^{\natural}$-concave submodular functions 
\cite{murota2003discrete}
defined on $2^V$.
 
\item We also extend the notion of the submodular polyhedron (which
  consists of the set of modular lower bounds of a submodular
  function, and for reasons that will become clear, we will refer to
  as the ``submodular {\bf lower} polyhedron''). We then define the submodular
  {\bf upper} polyhedron (which consists of the set of modular upper bounds of the
  submodular function).

\item We define the concave extension of a submodular function, in a
  manner similar to the convex extension, namely as a linear program
  over the submodular upper polyhedron. We show that this is
  identical to the concave extensions considered in the
  past~\cite{vondrakcontinuousgreedy, vondrak2007submodularity}. We
  also provide a family of concave extensions based on bounds on the
  submodular upper polyhedra, some of which can be efficiently
  computed in polynomial time. We relate these extensions to
  submodular function maximization.

 \item We then show how we can define forms of 
 optimality conditions
\jeff{Jeff}{I think that optimality conditions are more general than KKT, so is
it exact to call them KKT-like conditions or perhaps more generally, just
``optimality conditions''?}\rishabh{Jeff}{this comment is for you too.}
for submodular maximization through the submodular superdifferential. We also show how optimality conditions related to approximations to the superdifferential lead to a number of familiar approximation guarantees for these problems.\looseness-1
 
\item Finally we study the Fenchel duality and discrete separation
  theorems for submodular functions seen in connection to
  concavity. While in general this does not hold, we show that these
  hold under certain quite mild conditions.  We also show how the
  Minkowski-Sum theorem also holds under certain restricted
  conditions.\rishabh{Jeff}{I'm leaving this in for now, lets talk
    about it.}\jeff{Rishabh}{This is now a part of the paper.}

\item Throughout this paper, we point to interesting connections
  regarding how our results generalize many of the results of
  $M^{\natural}$-concave submodular
  functions~\cite{murota2003discrete} on $2^V$, where many of these
  characterizations are exact.
 \end{itemize}

\rishabh{Jeff}{
(3/31/2015: Note the following is a comment from 2/23/2015, from the pdf, but I'm adding it here. Is this now addressed?)

In the abstract, or the intro, mention very precisely and explicitly how this work generalizes some of the results that existed for M-natural convex functions of Murota (i.e,. that is special case of submodular function where a lot of the results hold exactly) but here where it holds approximately.

I.e., right away, make sure that it is clear how this generalizes over the Murota work when things do not hold exactly,b ut how they hold approximately (and how one can from a superdifferential perspective get many of the existing approximation bounds relating to submodular max).
}\jeff{Rishabh}{I added a bit on this as the last item above.}

\rishabh{Jeff}{Another case to be made for the utility of the paper is the following, which I added as motivation during the talk. In big-data continuous
optimization problems, gradient based methods have become the only practical
scalable way to perform optimization, as perhaps they are most efficient
in terms of work done per flop. While we've already made use
of these gradients for optimization, starting with the SSP, and with
the discrete mirror descent stuff, one hope for the current paper
is that by expanding the study of semidifferentials of submodular functions,
still new optimization strategies based perhaps on a mirror-descent-like
frame work that, even if they don't' have the very best approximation guarantee,
might be quite useful because of their practicality and scalability. After
all, in the continuous optimization world, the stochastic approximation
methods developed for convex functions tend to work well also for non-convex
functions. Perhaps some of the mirror-descent methods based on semigradients
would end up doing well in practice even for non-submodular functions which, as you know, are also quite important. I think we should turn the above message into text for the paper. I can work on that in a next pass.}\jeff{Rishabh}{Yes, I think that's a good idea.}

\subsection{Road-Map of this paper}
In Sections~\ref{polysubconvex}, \ref{submodcvxext},
\ref{submodminopt} and \ref{submodfdtdstmin}, we review the
connections between submodularity and convexity. Most of the results
in these sections are from~\cite{lovasz1983, fujishige2005submodular},
and in some cases we provide some generalizations. In
Section~\ref{polysubconvex}, we review polyhedral aspects of
submodularity and convexity, and investigate the submodular
polyhedron, submodular subdifferentials, etc. In
Section~\ref{submodcvxext}, we study the convex extensions of a
submodular function, while in Section~\ref{submodminopt} we review the
optimality conditions of submodular function minimization from a
polyhedral perspective. In Section~\ref{submodfdtdstmin}, we review
the \emph{discrete separation theorem}, the Fenchel duality theorem,
and the Minkowski sum theorem, all from the perspective of the convex
analogy of submodular functions.  In Section~\ref{polysubconcave} we
define and investigate the polyhedral aspects of submodularity and
concavity --- we do this by defining the submodular upper polyhedron
and the submodular superdifferentials. In Section~\ref{submodccvext},
we provide a characterization of the concave extension of a submodular
function. In Section~\ref{submodmaxopt}, we study the optimality
conditions of submodular function maximization from a polyhedral
perspective. Finally, in Section~\ref{submodfdtdstmax}, we provide
versions of the discrete separation theorem, the Fenchel duality
theorem and the Minkowski sum theorem but from the perspective of
concavity of a submodular function.

\section{Polyhedral aspects of Submodularity and Convexity}
\label{polysubconvex}

Most of the results in this section are covered
in~\cite{edmondspolyhedra,lovasz1983, fujishige2005submodular} and the
references contained therein, so for more details please refer to
these texts. We use this section to review existing work on the
polyhedral connections between submodularity and convexity and to help
contrast these with the corresponding results on the polyhedral
connections between submodularity and concavity starting in
Section~\ref{polysubconcave}.  \looseness-1

\subsection{Submodular (Lower) Polyhedron}
\label{submodpoly}

\begin{figure}
\centering
\includegraphics[page=2,width = 0.3\textwidth]{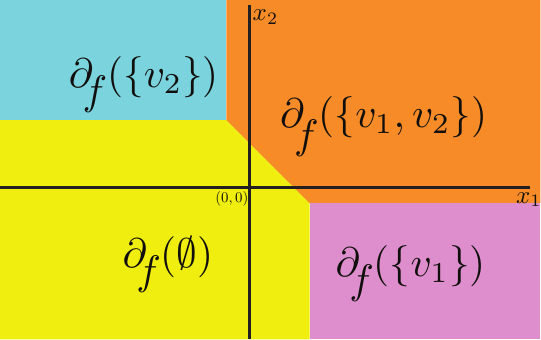}
\caption{The Submodular Polyhedron $\mathcal P_f$ and the Base Polytope $\mathcal B_f$ in two dimensions}
\end{figure}

For a submodular function $f$, the submodular (lower)
polyhedron\footnote{Since the submodular polyhedron consists of
  modular lower bounds of a submodular function, we shall also call it
  the submodular \emph{lower} polyhedron to contrast with
the submodular \emph{upper}  polyhedron
we introduce in Section~\ref{uppersubpolysec}.} and the base polytope of a
submodular function \cite{fujishige2005submodular} are defined,
respectively, as:
\begin{align}
\mathcal P_f \triangleq \{ x \in \mathbb R^V : x(S) \leq f(S), \forall S \subseteq V \} 
\qquad
\text{ and }
\qquad
\mathcal B_f \triangleq \mathcal P_f \cap \{ x \in \mathbb R^V : x(V) = f(V) \},
\end{align}
where $x(S) = \sum_{i \in S} x_i$ for any $S \subseteq V$. The submodular polyhedron has a
number of interesting properties, one important one being that the
extreme points and facets can easily be
characterized even though the polyhedron itself is described by a
exponential number of inequalities. In fact, surprisingly, every
extreme point of the submodular polyhedron is an extreme point of the
base polytope. These extreme points admit an interesting
characterization in that they can be computed via a simple greedy
algorithm \cite{edmondspolyhedra} --- let $\perm$ be a permutation of $V = \{1, 2, \cdots,
n\}$. Each such permutation defines a chain with elements $S^{\perm}_0
= \emptyset$, $S^{\perm}_i = \{ \perm(1), \perm(2), \dots, \perm(i)
\}$ such that $S^{\perm}_0 \subseteq S^{\perm}_1 \subseteq \cdots
\subseteq S^{\perm}_n$. This chain defines an extreme point
$h^{\perm}$ of $\mathcal P_f$ with entries
\begin{align}
\label{eq:permmod}
h^{\perm}(\perm(i)) = 
f(S^{\perm}_i) - f(S^{\perm}_{i-1}). 
\end{align}
Each permutation of $V$ characterizes an extreme point of $\mathcal P_f$ and all possible extreme points of $\mathcal P_f$ can be characterized in this manner~\cite{fujishige2005submodular}. Furthermore, the problem $\max_{y \in \mathcal P_f} y^{\top} x$,
which is a linear program over a submodular polyhedron, can be very efficiently computed through the greedy algorithm~\cite{edmondspolyhedra}. The following lemma gives the greedy algorithm for finding this.
\begin{lemma}~\cite{edmondspolyhedra, lovasz1983} \label{greedy}
Given a vector $w \in \mathbb{R}^n_{+}$, consider a permutation $\perm_w$, such that $w[\perm_w(1)] \geq w[\perm_w(2)] \geq \cdots \geq w[\perm_w(n)]$. Define $s^*(\perm_w(i)) = f(S^{\perm_w}_i) - f(S^{\perm_w}_{i-1})$ for $i \in \{1, 2, \cdots, n\}$. Then $\argmax_{s \in P_f} w^{\top} s = \argmax_{s \in B_f} w^{\top} s\ni s^*$.
Furthermore, $\max_{s \in P_f} w^{\top} s = \sum_{i = 1}^n w(\perm_w(i)) [f(S^{\perm_w}_i) - f(S^{\perm_w}_{i-1}]$
\end{lemma}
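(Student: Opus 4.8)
The statement bundles three claims: that the greedy point $s^*$ is feasible for $P_f$, that it maximizes $w^\top s$ over $P_f$ (and over $B_f$), and the closed form for the optimal value. My plan is to dispatch feasibility using diminishing returns, then prove optimality by a summation-by-parts (Abel) argument that simultaneously delivers the value formula, and finally to reconcile the $P_f$ and $B_f$ maximizer sets. Throughout I would write $\perm = \perm_w$, abbreviate $S_i = S^{\perm}_i$, and set $w_i = w(\perm(i))$, so that $w_1 \ge w_2 \ge \cdots \ge w_n \ge 0$.

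For feasibility I must show $s^*(S) \le f(S)$ for every $S \subseteq V$. This already follows from the earlier stated fact that each $h^{\perm}$ is an extreme point of $P_f$, but it is worth recording the short argument. Writing $S = \{\perm(i_1),\dots,\perm(i_k)\}$ with $i_1 < \cdots < i_k$, telescoping gives $s^*(S) = \sum_{j=1}^k [f(S_{i_j}) - f(S_{i_j - 1})] = \sum_{j=1}^k f(\perm(i_j) \mid S_{i_j-1})$. Since $\{\perm(i_1),\dots,\perm(i_{j-1})\} \subseteq S_{i_j - 1}$, diminishing returns bounds each marginal by $f(\perm(i_j)\mid\{\perm(i_1),\dots,\perm(i_{j-1})\})$, and these telescope to $f(S)$; hence $s^* \in P_f$. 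The same telescoping with $S = S_i$ gives the exact identity $s^*(S_i) = f(S_i)$, which is the engine of the optimality proof.

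For optimality, take any $s \in P_f$ and apply summation by parts to $w^\top s = \sum_{i=1}^n w_i\, s(\perm(i))$. Writing $s(\perm(i)) = s(S_i) - s(S_{i-1})$ (with $s(S_0)=0$) and regrouping yields $w^\top s = \sum_{i=1}^{n-1}(w_i - w_{i+1})\, s(S_i) + w_n\, s(S_n)$. Here every coefficient $w_i - w_{i+1}$ and $w_n$ is nonnegative by the sorting of $w$, and $s(S_i) \le f(S_i)$ since $s \in P_f$, so $w^\top s \le \sum_{i=1}^{n-1}(w_i - w_{i+1}) f(S_i) + w_n f(S_n)$. Because $s^*(S_i) = f(S_i)$ exactly, the identity applied to $s^*$ turns this inequality into an equality, so $s^*$ attains the bound and is a maximizer over $P_f$. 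Re-summing the right-hand side back by parts recovers $\sum_{i=1}^n w_i[f(S_i) - f(S_{i-1})]$, the claimed value. I expect this summation-by-parts step --- pairing the nonnegativity of the weight gaps with the constraints $s(S_i)\le f(S_i)$ --- to be the crux; everything else is bookkeeping.

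It remains to relate the maximizer sets over $P_f$ and $B_f$. Since $s^*(V) = s^*(S_n) = f(V)$, the point $s^*$ lies in $B_f \subseteq P_f$; as it maximizes over the larger set and belongs to $B_f$, it also maximizes over $B_f$ and the two optimal values coincide, giving $\argmax_{s\in B_f} w^\top s \subseteq \argmax_{s\in P_f} w^\top s$ with $s^*$ in both. For the reverse inclusion one inspects tightness of the Abel inequality: equality forces $s(S_i) = f(S_i)$ wherever $w_i > w_{i+1}$ and $s(V) = f(V)$ whenever $w_n > 0$. Thus when $w$ is strictly positive every $P_f$-maximizer already lies in $B_f$ and the two $\argmax$ sets coincide; I would flag that for merely nonnegative $w$ with $w_n = 0$ this last inclusion can fail, so the stated equality is to be read under the positive-weight convention.
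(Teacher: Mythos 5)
Your proof is correct. Note that the paper itself offers no proof of this lemma --- it is stated as a classical result cited to Edmonds and \lovasz{} --- and your argument is precisely the standard one from those sources: verify feasibility of $s^*$ by telescoping marginals and diminishing returns, establish the tightness $s^*(S^{\perm_w}_i) = f(S^{\perm_w}_i)$ along the chain, and then use Abel summation with the nonnegative gaps $w_i - w_{i+1}$ against the constraints $s(S^{\perm_w}_i) \leq f(S^{\perm_w}_i)$ to get optimality and the value formula in one stroke. Your final observation is a genuine (and correct) refinement of the statement as literally written: the inclusion $\argmax_{s \in \mathcal B_f} w^{\top}s \subseteq \argmax_{s \in \mathcal P_f} w^{\top}s$ always holds and $s^*$ lies in both, but the reverse inclusion requires $w(\perm_w(n)) > 0$ (e.g.\ $w = \mathbf 0$ makes every point of $\mathcal P_f$ a maximizer), so the asserted equality of the two $\argmax$ sets should indeed be read under a strictly positive weight vector.
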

It is immediate that the optimizers $s^*$ above form extreme points of the submodular polyhedron. 
Also, given a submodular function $f$ such that\footnote{Any set function 
$h$ is said to be {\em normalized} if $h(\emptyset) = 0$.} $f(\emptyset) = 0$, the condition that $x \in \mathcal P_f$ can be checked in polynomial time for every $x$ --- this follows directly from the fact that submodular function minimization is polynomial time.
\begin{proposition}
Given a submodular function $f$, checking if $x \in \mathcal P_f$ is equivalent to the condition $\min_{X \subseteq V} [f(X) - x(X)] \geq 0$, which can be checked in poly-time.
\end{proposition}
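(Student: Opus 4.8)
The plan is to split the claim into two parts: the logical equivalence between membership in $\mathcal P_f$ and the minimization condition, and the poly-time checkability of that condition. The first part is essentially an unfolding of the definition, while the second rests on the observation that the function being minimized is itself submodular.

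First I would observe that, by the definition of the submodular polyhedron, $x \in \mathcal P_f$ holds precisely when $x(S) \le f(S)$ for every $S \subseteq V$, which rearranges to $f(S) - x(S) \ge 0$ for every $S \subseteq V$. A universally quantified family of inequalities of this form holds if and only if the smallest of the quantities involved is nonnegative; hence ``$f(S) - x(S) \ge 0$ for all $S$'' is equivalent to $\min_{X \subseteq V} [f(X) - x(X)] \ge 0$, which is exactly the stated condition.

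Next, for the computational claim, I would set $g(X) \triangleq f(X) - x(X)$ and argue that $g$ is submodular. The vector $x$ induces the set function $X \mapsto x(X) = \sum_{i \in X} x_i$, which is modular and therefore both submodular and supermodular; subtracting it from the submodular $f$ preserves submodularity. Consequently, computing $\min_{X \subseteq V} g(X)$ is an instance of unconstrained submodular minimization, which admits strongly polynomial time algorithms, and comparing the resulting value against $0$ decides membership in poly-time.

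The main obstacle, to the extent there is one, is not in the equivalence, which is purely definitional, but in justifying that $f(X) - x(X)$ is submodular and invoking the polynomiality of submodular minimization; once these standard facts are in place, the result is immediate. As a consistency remark, since $f(\emptyset) = 0$ (assumed without loss of generality) and $x(\emptyset) = 0$, we have $g(\emptyset) = 0$, so the minimum is always at most $0$; thus the condition $\min_{X \subseteq V} g(X) \ge 0$ is in fact equivalent to $\min_{X \subseteq V} g(X) = 0$, with $\emptyset$ attaining the minimum.
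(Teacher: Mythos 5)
Your proof is correct and follows exactly the route the paper intends: the equivalence is the definitional unfolding of $\mathcal P_f$, and poly-time checkability comes from noting that $X \mapsto f(X) - x(X)$ is submodular (a modular function subtracted from a submodular one) and invoking polynomial-time submodular function minimization, which is precisely the justification the paper gives in the sentence preceding the proposition. Your closing remark that the minimum always equals $0$ when $x \in \mathcal P_f$ (since $f(\emptyset) = 0$) is a correct additional observation not stated in the paper.
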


\subsection{The Submodular Subdifferential}
\label{sec:subm-subd}

Another aspect of the connection between submodular functions and
convexity is the submodular
subdifferentials~\cite{fujishige1984subdifferential}. The
subdifferential $\partial_f(X)$ of a submodular set function $f: 2^V
\to \mathbb{R}$ for a set $X \subseteq V$ is defined
\cite{fujishige1984subdifferential, fujishige2005submodular}
analogously to the subdifferential of a continuous convex function:
\begin{align}
\label{submodsubdiff}
\partial_f(X) &\triangleq \{x \in \mathbb{R}^n: f(Y) - x(Y) \geq f(X) - x(X)\;\text{for all } Y \subseteq V\} 
\end{align}
The polyhedra above can be defined for any (not necessarily submodular) set function. When the function is submodular however, it can be characterized efficiently.
\begin{figure}
\centering
\includegraphics[page=1,width = 0.3\textwidth]{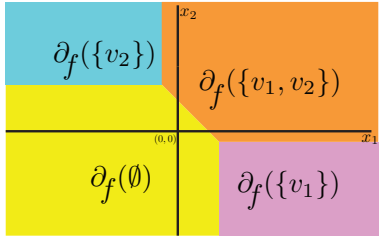}
\caption{The Subdifferentials $\partial_f(Y)$ of a submodular function for different sets $Y$ in two dimensions. Notice that the subdifferentials partition the space $\mathbb{R}^2$. In this case, $V = \{v_1, v_2\}$.}
\end{figure}
Firstly, note that for normalized submodular functions, for any $h_X
\in \partial_f(X)$, we have $f(X) - h_X(X) \leq 0$ which follows by
the constraint at $Y=\emptyset$.  Like the submodular polyhedron,
the extreme points of the submodular subdifferential also admit
interesting characterizations. We shall denote a subgradient at $X$ by
$h_X \in \partial_f(X)$. Similar to the submodular polyhedron, the
extreme points of $\partial_f(X)$, for any $X$, may be computed via a greedy
algorithm as follows: let $\perm$ be a permutation of $V$ that assigns the
elements in $X$ to the first $|X|$ positions ($i \leq |X|$ if and only
if $\perm(i) \in X$) and $S^{\perm}_{|X|} = X$. An illustration of
this is shown in Figure~\ref{chainim}.

\begin{figure}[h]
\centering
\includegraphics[width = 0.3\textwidth]{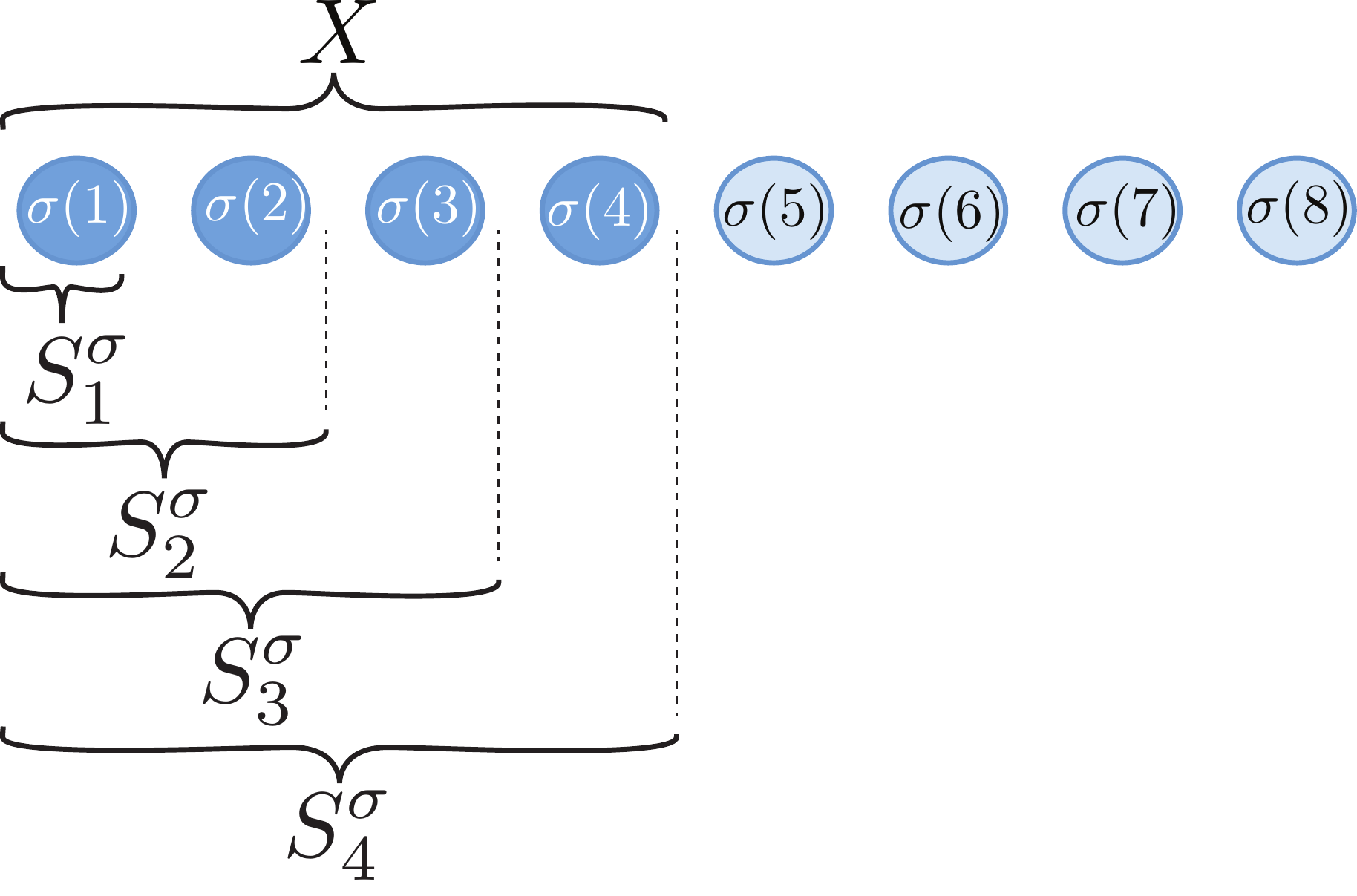}
\caption{A visualization of 
a permutation $\perm = (\perm(1), \perm(2), \dots)$ of $V$
and the chain of sets $S_1^\perm, S_2^\perm, \dots$
according to this permutation, where $S_i^\perm = \{ \perm(1), \perm(2), \dots, \perm(i) \}$. 
Here, also, we show the permutation is compatible with $X = S_{|X|}^\perm$ with $|X|=4$.}
\jeff{Jeff}{redo this figure, I think I have one from my class which we can use.}\jeff{Rishabh}{I have changed this figure a bit.}
\rishabh{Jeff}{we should avoid using bitmap figures, rather try to always use vector figures for any graphic since they print and scale much better. I've done this now for this fig but lets do that in future, ok? thanks.}\jeff{Rishabh}{Sure.}
\label{chainim}
\end{figure}

This chain defines an extreme point
$h^{\perm}_X$ of $\partial_f(X)$ with entries
\begin{align}
h^{\perm}_X(\perm(i)) = 
f(S^{\perm}_i) - f(S^{\perm}_{i-1}). 
\end{align}

Note that for every subgradient $h_X \in \partial_f(X)$ we can define
a modular function
\begin{align}
m_X(Y) \triangleq f(X) + h_X(Y) - h_X(X)
\label{eq:mod_tight_lower_bound}
\end{align}
that is defined $\forall Y \subseteq V$, and that is a tight lower
bound of $f$ --- that is, $m_X$ satisfies $m_X(Y) \leq f(Y), \forall Y
\subseteq V$ and we have that $m_X(X) = f(X)$.  Hence, the
subdifferential corresponds exactly to the set of tight modular lower
bounds of a submodular function, at a given set $X$. If we choose
$h_X$ to be an extreme subgradient, the modular lower bound becomes
$m_X(Y) = h_X(Y)$, resulting in a normalized modular function (i.e.,
$m_X(\emptyset) = 0$).  Also, if $X = S^\sigma_j$ for some $j$, 
then since $h^\sigma_X(S^\sigma_i) =
f(S^\sigma_i)$ for all $i$, the modular lower bound defined as
$m^\sigma_X(Y) = h^\sigma_X(Y)$ has the property that it is tight for
all sets $\{ S^\sigma_i \}_i$, not just $X$.

The subdifferential defined in Eqn.~\eqref{submodsubdiff} is defined via an exponential number of inequalities. A key observation however is that many of these inequalities are redundant. We define three polyhedra:
\begin{align}
\partial_f^1(X) &\triangleq \{x \in \mathbb{R}^n: f(Y) - x(Y) \geq f(X) - x(X), \forall Y \subseteq X \} \\
\partial_f^2(X) &\triangleq \{x \in \mathbb{R}^n: f(Y) - x(Y) \geq f(X) - x(X), \forall Y \supseteq X \} \\
\partial_f^3(X) &\triangleq \{x \in \mathbb{R}^n: f(Y) - x(Y) \geq f(X) - x(X), \forall Y: Y \not \subseteq X, Y \not \supseteq X\}
\end{align}
We immediately have that $\partial_f(X) = \partial_f^1(X)
\cap \partial_f^2(X) \cap \partial_f^3(X)$. The following lemma shows
that the inequalities in $\partial_f^3(X)$ are redundant in
characterizing $\partial_f(X)$ when given $\partial_f^1(X)$ and
$\partial_f^2(X)$.
\begin{lemma}(\cite[Lemma 6.4]{fujishige2005submodular})
Given a submodular function $f$, $\partial_f(X) = \partial_f^1(X) \cap \partial_f^2(X)$. Hence,
\begin{align}
\label{subdiffred} 
\partial_f(X) = \{x \in \mathbb{R}^n: f(Y) - x(Y) \geq f(X) - x(X), \forall Y \in [\emptyset, X] \cup [X, V]\} 
\end{align}
\end{lemma}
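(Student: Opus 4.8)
The plan is to prove the two inclusions separately. One direction is immediate: since $\partial_f(X)$ is defined by the full family of inequalities indexed by \emph{all} $Y \subseteq V$, whereas $\partial_f^1(X)$ and $\partial_f^2(X)$ each retain only a subfamily (those $Y \subseteq X$ and those $Y \supseteq X$, respectively), every point of $\partial_f(X)$ trivially satisfies the constraints defining both, giving $\partial_f(X) \subseteq \partial_f^1(X) \cap \partial_f^2(X)$. The entire content of the lemma lies in the reverse inclusion, equivalently the assertion that the ``incomparable'' inequalities collected in $\partial_f^3(X)$ are implied by the comparable ones.

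For the reverse direction, I would fix an arbitrary $x \in \partial_f^1(X) \cap \partial_f^2(X)$ and an arbitrary $Y$ with $Y \not\subseteq X$ and $Y \not\supseteq X$, and aim to deduce $f(Y) - x(Y) \geq f(X) - x(X)$. The key structural observation is that $X$ lies between the two ``rounded'' sets, namely $Y \cap X \subseteq X \subseteq Y \cup X$. Consequently the constraint at $Y \cap X$ is available from membership in $\partial_f^1(X)$ and the constraint at $Y \cup X$ is available from membership in $\partial_f^2(X)$:
\begin{align}
f(Y \cap X) - x(Y \cap X) &\geq f(X) - x(X), \\
f(Y \cup X) - x(Y \cup X) &\geq f(X) - x(X).
\end{align}

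The heart of the argument is then to add these two inequalities and feed in the two facts that distinguish the submodular setting: the modularity of $x$, giving $x(Y \cap X) + x(Y \cup X) = x(Y) + x(X)$, and the submodularity of $f$, giving $f(Y \cap X) + f(Y \cup X) \leq f(Y) + f(X)$. After using modularity to rewrite the summed right-hand side, I would chain through submodularity (so that $f(Y) + f(X)$ sits above the summed left-hand side) and cancel the common $f(X)$ and $x(X)$ terms; this collapses the combined inequality to exactly $f(Y) - x(Y) \geq f(X) - x(X)$. Since $Y$ was arbitrary among the incomparable sets, this yields $\partial_f^1(X) \cap \partial_f^2(X) \subseteq \partial_f(X)$ and hence the characterization in Eqn.~\eqref{subdiffred}.

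I expect the only genuine subtlety — and the single place where submodularity is actually used — to be this combination step, and in particular the \emph{direction} of the submodular inequality: it is essential that submodularity bounds $f(Y \cap X) + f(Y \cup X)$ from above by $f(Y) + f(X)$, so that it reinforces rather than opposes the summed subgradient constraints. Everything else is routine bookkeeping; no appeal to the greedy extreme-point characterization is needed, only the defining inequalities of the subdifferential together with the lattice relation $Y \cap X \subseteq X \subseteq Y \cup X$.
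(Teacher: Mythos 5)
Your proposal is correct: the lattice-rounding argument (apply the $\partial_f^1$ constraint at $Y \cap X$ and the $\partial_f^2$ constraint at $Y \cup X$, add, then use modularity of $x$ and submodularity of $f$ to cancel and recover the constraint at $Y$) is exactly the standard proof, and you correctly identify that submodularity enters only once and with the needed direction. Note that the paper itself offers no proof of this statement --- it simply cites \cite[Lemma~6.4]{fujishige2005submodular} --- and your argument is essentially the one given in that reference, so there is nothing to reconcile.
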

In the above, $[A, B] = \{X \subseteq V: A \subseteq X \subseteq B\}$
whenever $A \subseteq B$. We thus see that for $X \neq \{\emptyset, V\}$,
many of the inequalities defining $\partial_f(X)$ 
in Eqn.~\eqref{submodsubdiff}
are in fact
redundant. 

The subdifferential at the emptyset has a special relationship since
$\partial_f(\emptyset) = \mathcal P_f$.  Similarly
$\partial_f(V) = \mathcal P_{f^{\#}}$, where
$f^{\#}(X) = f(V) - f(V \backslash X)$ is the submodular dual of $f$.
Furthermore, since $f^{\#}$ is a supermodular function, it holds that
$\partial_f(V)$ is a \emph{supermodular polyhedron} (for a
supermodular function $g$, the supermodular polyhedron
is defined as
$\mathcal P_g = \{x \in \mathbb R^V: x(X) \geq g(X), \forall X \subseteq V\}$).
\rishabh{Jeff}{I added back in this bit since I think it might be
  important to point out a distinction between the submodular upper
  polyhedron and the supermodular polyhedron since they have similar
  looking definitions. When I mentioned this during the talk, I
  noticed quite a few people nodding in understanding.}\jeff{Rishabh}{Sure}

The following lemma shows another instructive fact about the subdifferentials:
\begin{lemma}(\cite[Lemma 6.5]{fujishige2005submodular})\label{dirprod}
For any submodular function $f$, $\partial_f(X) = \partial_{f^X}(X) \times \partial_{f_X}(\emptyset)$, where $f^X(Y) = f(Y), \forall Y \subseteq X$, and $f_X(Y) = f(Y \cup X) - f(X), \forall Y \subseteq V \backslash X$, and $\times$ denotes the direct product.
\jeff{Jeff}{add a bit more detail here.}
\end{lemma}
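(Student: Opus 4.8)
The plan is to reduce to the two-sided description of the subdifferential already established in Eqn.~\eqref{subdiffred}, namely $\partial_f(X) = \partial_f^1(X) \cap \partial_f^2(X)$, and then to observe that these two polyhedra decouple over the two coordinate blocks $X$ and $V \setminus X$. I would write every $x \in \mathbb R^V$ as a pair $x = (y, z)$ with $y \in \mathbb R^X$ the restriction of $x$ to $X$ and $z \in \mathbb R^{V\setminus X}$ its restriction to $V \setminus X$, so that $\mathbb R^V = \mathbb R^X \times \mathbb R^{V \setminus X}$ and $x(A) = y(A)$ for $A \subseteq X$ while $x(B) = z(B)$ for $B \subseteq V \setminus X$. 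The claim is then that $\partial_f^1(X)$ constrains only $y$ and cuts out exactly $\partial_{f^X}(X)$, whereas $\partial_f^2(X)$ constrains only $z$ and cuts out exactly $\partial_{f_X}(\emptyset)$; intersecting the two then yields the asserted direct product.

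For $\partial_f^1(X)$, I would take an arbitrary defining inequality, indexed by $Y \subseteq X$, namely $f(Y) - x(Y) \geq f(X) - x(X)$. Since $Y, X \subseteq X$, all the set-evaluations involve only the $y$-block: $x(Y) = y(Y)$, $x(X) = y(X)$, while by definition $f(Y) = f^X(Y)$ and $f(X) = f^X(X)$. Hence the whole system is equivalent to $f^X(Y) - y(Y) \geq f^X(X) - y(X)$ for all $Y \subseteq X$, which is precisely the condition $y \in \partial_{f^X}(X)$ (here $X$ is the full ground set of $f^X$, so the quantifier ``all $Y \subseteq X$'' is exactly that subdifferential's index set). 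Because no such inequality touches $z$, this gives $\partial_f^1(X) = \partial_{f^X}(X) \times \mathbb R^{V \setminus X}$.

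For $\partial_f^2(X)$, each defining inequality is indexed by $Y \supseteq X$; writing $Y = X \cup W$ with $W = Y \setminus X \subseteq V \setminus X$, I would use the telescoping identities $f(Y) - f(X) = f(X \cup W) - f(X) = f_X(W)$ and $x(Y) - x(X) = x(W) = z(W)$. The inequality $f(Y) - x(Y) \geq f(X) - x(X)$ thus becomes $f_X(W) - z(W) \geq 0$ for all $W \subseteq V \setminus X$, i.e. $z \in \mathcal P_{f_X} = \partial_{f_X}(\emptyset)$ (using $f_X(\emptyset) = 0$ together with the identity $\partial_g(\emptyset) = \mathcal P_g$ noted earlier). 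Since these inequalities never involve $y$, we obtain $\partial_f^2(X) = \mathbb R^X \times \partial_{f_X}(\emptyset)$. Intersecting the two descriptions gives $\partial_f(X) = (\partial_{f^X}(X) \times \mathbb R^{V\setminus X}) \cap (\mathbb R^X \times \partial_{f_X}(\emptyset)) = \partial_{f^X}(X) \times \partial_{f_X}(\emptyset)$, as required.

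The real content is entirely front-loaded into the reduction $\partial_f(X) = \partial_f^1(X) \cap \partial_f^2(X)$, which is where submodularity of $f$ is used (it is what makes the $\partial_f^3$ inequalities redundant); everything afterward is a bookkeeping verification that the ``downward'' constraints see only the $X$-coordinates and the ``upward'' constraints only the complementary coordinates. The one point I would check carefully is that the two factor polyhedra genuinely impose no constraint on the opposite block, so that the intersection of these two cylinders is exactly their direct product rather than something strictly smaller; this is precisely what makes the decomposition clean and is the only place where a misstep is possible. I would also remark in passing that $f^X$ (a restriction) and $f_X$ (a contraction) are themselves submodular, so $\partial_{f^X}(X)$ and $\partial_{f_X}(\emptyset) = \mathcal P_{f_X}$ inherit the efficient greedy characterizations, although submodularity of $f^X$ and $f_X$ is not actually needed for the set identity itself.
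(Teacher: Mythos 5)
Your proof is correct. The paper itself states this lemma without proof (it is imported directly from Fujishige's book as Lemma~6.5), so there is no in-paper argument to compare against; your route --- invoking the two-sided description in Eqn.~\eqref{subdiffred} and then checking that $\partial_f^1(X)$ and $\partial_f^2(X)$ are cylinders over the complementary coordinate blocks $X$ and $V \setminus X$, whose intersection is the direct product --- is the standard argument, and it correctly isolates the only place submodularity is needed, namely the redundancy of the $\partial_f^3(X)$ inequalities, which are precisely the constraints that would otherwise couple the two blocks.
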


Finally we define what we call the local approximation of the
subdifferential as follows:
\begin{align}
\label{localsub}
\partial_f^{\symmdiff(1, 1)}(X) 
\triangleq \{x \in \mathbb{R}^V : \forall j \in X, f(j | X \backslash j) \leq x(j) \text{ and } \forall j \notin X, f(j | X) \geq x(j)\}. 
\end{align}
Notice that $\partial_f^{\symmdiff(1, 1)}(X) \supseteq \partial_f(X)$
since we have fewer constraints here than in the original subdifferential. In
particular $\partial_f^{\symmdiff(1, 1)}(X)$ considers only $n$
inequalities by choosing the sets $Y$ in Eqn.~\eqref{subdiffred} such
that $|Y \symmdiff X| = 1$ (i.e., Hamming distance one away from $X$). This polyhedron
will be useful in characterizing local minimizers of a submodular
function (see Section~\ref{submodminopt}) and motivating analogous
constructs for local maxima (see, for example,
Proposition~\ref{prop:local_max_poly}).

\subsection{Generalized Submodular Lower Polyhedron}
\label{gensubmodpoly}

\rishabh{Jeff}{I think below we are defining the generalized polyhedron
for non-submodular functions so this section should clarify that it is
defining the 
``Generalized Lower Polyhedron''
which becomes the 
``Generalized Submodular Lower Polyhedron'' in the submodular case.}
\jeff{Rishabh}{done.}

In this section, we define a generalization of the submodular
polyhedron, which we call the \emph{generalized submodular lower
  polyhedron}. 
While this construct has not been defined explicitly before, we
investigate it primarily with the aim of contrasting this with 
results on the concave polyhedral aspects of a submodular function
that we explore in Section~\ref{polysubconcave}.

Define the generalized submodular lower polyhedron as follows:
\begin{align}
\mathcal P_f^{\text{gen}} \triangleq \{(x, c), x \in \mathbb{R}^n, c \in \mathbb{R}: [x(X) + c] \leq f(X), \forall X \subseteq V\}.
\label{eq:gen_sub_low_poly}
\end{align}
This generalized polyhedron $\mathcal P_f^{\text{gen}} \subseteq
\mathbb{R}^{n+1}$ intuitively captures the affine (or unnormalized)
modular lower bounds of $f$. The definition above holds for any
arbitrary set function, not necessarily submodular, in which case we
call it the \emph{generalized lower polyhedron}. In the case of
submodular functions, this generalized lower polyhedron has
interesting connections to the submodular polyhedron.  In particular,
note that $\mathcal P_f^{\text{gen}} \cap \{(x, c): c = 0\} = \{(x,
c): x \in \mathcal P_f, c = 0\}$. In other words, the slice $c = 0$ of
the \emph{generalized submodular polyhedron} is the submodular
polyhedron of $f$. Also notice that for a normalized submodular
function $f$, the constraint at $X = \emptyset$, requires that $c \leq
0$.

\rishabh{Jeff}{we need a figure for this. Lets talk about it
on Monday.}\jeff{Rishabh}{Maybe we can do a matlab figure plot like what you had suggested a few days back?}\jeff{Rishabh}{I've added a figure here on this.}

\begin{figure}[tbh]
\begin{center}
\includegraphics[width = 0.6\textwidth]{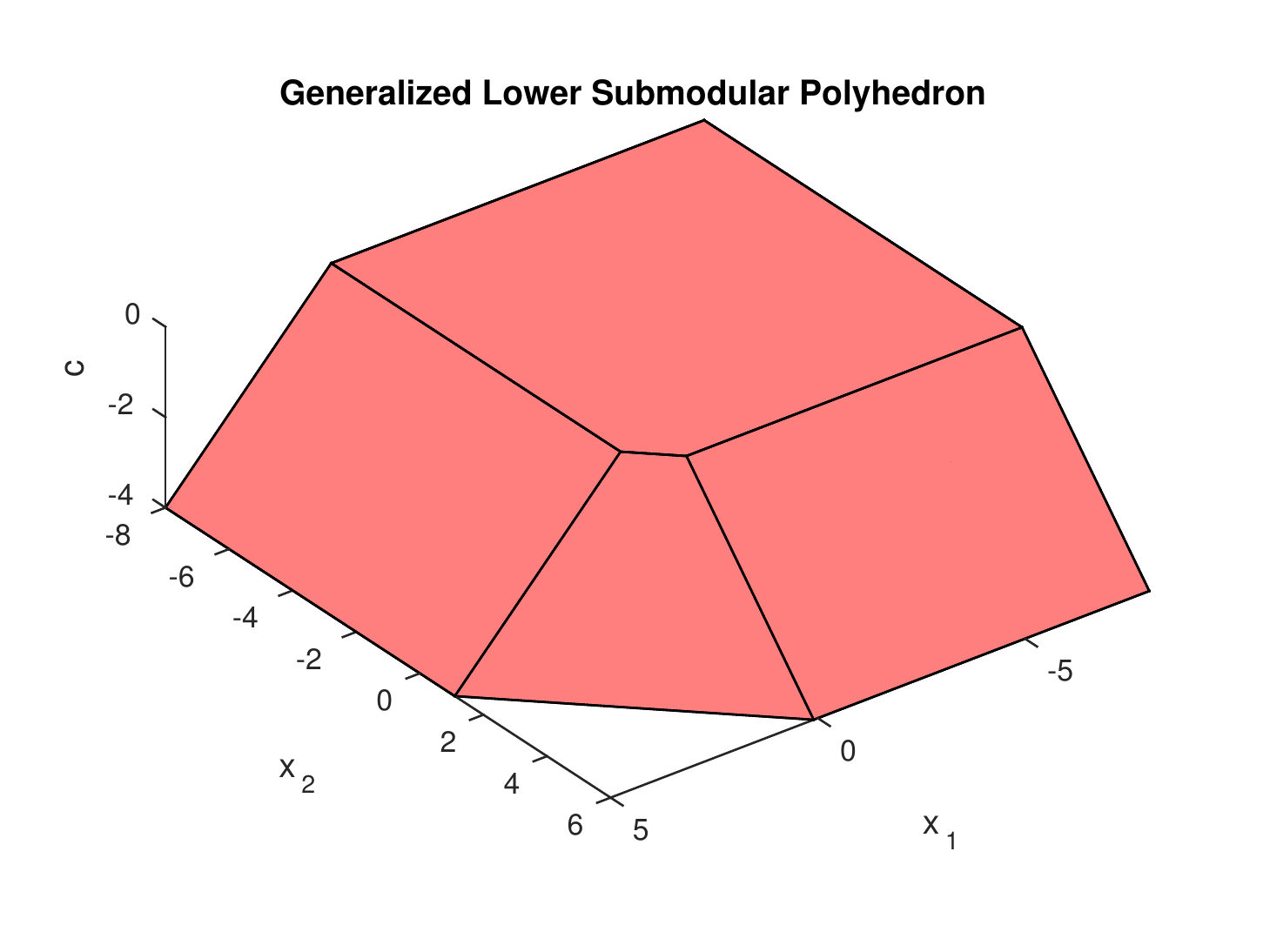}
\end{center}
\caption{The generalized submodular lower polyhedron for a two dimensional submodular function $f: 2^{\{1, 2\}} \rightarrow \mathbb{R}$, satisfying $f(\emptyset) = 0$, $f(\{1\}) = 1, f(\{2\}) = 2, f(\{1, 2\}) = 2.5$.
\rishabh{Jeff}{Can we please get a vector version of this? If this is done via matlab, there is a way to export as a .eps file that is vector, and that can be converted to pdf. This scales, prints, and converts much better than .png, not to mention it is better for slides (and more future proof). Thanks!}}\jeff{Rishabh}{I added a vector version above.}\rishabh{Jeff}{7/23/15: Great. I needed to swap the wording in
the fig to say ``submodular lower polyhedron'' and I edited the pdf directly, so if
you regenerate the fig in matlab, please change the wording.}
\label{gensubpolyvis}
\end{figure}
The generalized polyhedron has interesting connections with the subdifferential
-- the following is a characterization of its facial structure.
\begin{lemma}
Given a set function $f$, a given point $(x, c) \in \mathcal P_f^{\text{gen}}$ lies 
on a face of the polyhedron $P_f^{\text{gen}}$ if and only if there exists a set $X$ such that $x \in \partial_f(X)$ and $c = f(X) - x(X)$.
\label{lemma:face_gen_lower_poly}
 \end{lemma}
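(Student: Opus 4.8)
The plan is to reduce the geometric statement about faces to the purely algebraic statement that some defining inequality of $\mathcal P_f^{\text{gen}}$ is active at $(x,c)$, and then to observe that activeness of the $X$-th inequality is, given feasibility, exactly the pair of conditions $x \in \partial_f(X)$ and $c = f(X) - x(X)$. First I would record two elementary facts. (i) $\mathcal P_f^{\text{gen}}$ is a polyhedron cut out by the finite family of inequalities $g_X(x,c) \triangleq x(X) + c \leq f(X)$, one for each $X \subseteq V$, and it is full-dimensional: for any fixed $x$, taking $c$ sufficiently negative makes every one of the finitely many constraints strict, so the interior is nonempty. (ii) For a full-dimensional polyhedron given by such an inequality description, a point lies on a proper face if and only if at least one of the defining inequalities is active there; if no inequality is active, the point is interior and lies on no proper face. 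Interpreting ``lies on a face'' as ``lies on a proper face'' (equivalently, on a supporting hyperplane), it therefore suffices to characterize when some constraint $g_X$ is tight at $(x,c)$.

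For the forward direction, suppose $(x,c) \in \mathcal P_f^{\text{gen}}$ lies on a proper face, so by (ii) there is a set $X$ with $g_X$ active, i.e.\ $x(X) + c = f(X)$. This immediately gives $c = f(X) - x(X)$. It remains to show $x \in \partial_f(X)$. Using membership $(x,c) \in \mathcal P_f^{\text{gen}}$ we have $x(Y) + c \leq f(Y)$ for every $Y \subseteq V$; substituting $c = f(X) - x(X)$ and rearranging yields
\begin{align}
f(Y) - x(Y) \geq f(X) - x(X), \quad \forall Y \subseteq V,
\end{align}
which is precisely the defining condition of $\partial_f(X)$ in Eqn.~\eqref{submodsubdiff}. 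Hence $x \in \partial_f(X)$ with $c = f(X) - x(X)$, as required.

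For the converse, suppose there is a set $X$ with $x \in \partial_f(X)$ and $c = f(X) - x(X)$. The subgradient condition gives $f(Y) - x(Y) \geq f(X) - x(X) = c$ for all $Y$, i.e.\ $x(Y) + c \leq f(Y)$ for all $Y \subseteq V$, so $(x,c) \in \mathcal P_f^{\text{gen}}$. Moreover the constraint at $Y = X$ holds with equality, $x(X) + c = x(X) + f(X) - x(X) = f(X)$, so $g_X$ is active and, by (ii), $(x,c)$ lies on a proper face. This establishes both directions.

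I expect no genuine obstacle in the argument: the content is the single substitution $c = f(X) - x(X)$ that converts the feasibility inequalities of $\mathcal P_f^{\text{gen}}$ into the subdifferential inequalities and back. The only point requiring care is the standing interpretation of ``face'' as proper face (so that interior points, where no constraint is active, are correctly excluded), together with the standard polyhedral fact invoked in (ii). Note that this equivalence uses nothing about submodularity and so holds for an arbitrary set function $f$, consistent with the statement.
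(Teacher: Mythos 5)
Your proposal is correct and follows essentially the same route as the paper's proof: identify ``lies on a face'' with ``some defining inequality $x(X)+c\leq f(X)$ is active,'' and then observe that, given feasibility, activeness at $X$ is algebraically equivalent to $x\in\partial_f(X)$ together with $c=f(X)-x(X)$. The only difference is one of care, not substance: you justify the polyhedral fact via full-dimensionality and write out both directions explicitly, whereas the paper asserts the face/activeness equivalence and spells out only the forward substitution.
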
  
 \begin{proof}
 Notice that $(x, c)$ lies on a face of $\mathcal P_f^{\text{gen}}$ if and only if there exists a set $X$ such that $x(X) + c = f(X)$ and for all $Y \subseteq V, x(Y) + c \leq f(Y)$. 
Since then
$x(Y) - x(X) \leq f(Y) - f(X) $, 
we have that $x \in \partial_f(X)$ and $c = f(X) - x(X)$ that, as mentioned above, has $c \leq 0$ when $f$ is submodular.
 \end{proof}
 
 The extreme points of $\mathcal P_f^{\text{gen}}$ also are easy to characterize when $f$ is submodular. Surprisingly, all the extreme points lie exactly on the hyperplane $c = 0$ with $x$ being the extreme points of $\mathcal P_f$.
 \begin{lemma}\label{gensubpolysubpolyrel}
 Given a submodular function $f$, $(x, c)$ is an extreme point of $\mathcal P_f^{\text{gen}}$ if and only if $x$ is an extreme point of $\mathcal P_f$ and $c = 0$. Furthermore, for any $y \in \mathbb{R}^n$,
\begin{align}
\underbrace{\max_{(x, c) \in \mathcal P_f^{\text{gen}}} [\langle x, y \rangle + c]}_{\text{(i)}} 
= 
\underbrace{\max\{\max_{x \in \partial_f(X)} [\langle x, y \rangle + f(X) - x(X)] \,\, | \,\ X \subseteq V\}}_{\text{(ii)}}
= 
\underbrace{\max_{x \in \mathcal P_f} \langle x, y \rangle}_{\text{(iii)}} 
\end{align}
\end{lemma}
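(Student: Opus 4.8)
The plan is to prove the two assertions separately: first the vertex characterization, then the chain of equalities, reducing the latter to a clean statement about where the optima can sit. Throughout I work with the fact that a point of a polyhedron in $\mathbb{R}^{n+1}$ is extreme iff the normals $(\mathbf 1_X,1)$ of its tight constraints span $\mathbb{R}^{n+1}$ (here the constraint for $X$ is $\langle(\mathbf 1_X,1),(x,c)\rangle\le f(X)$).

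For the vertex characterization, the first observation is that $\mathcal P_f^{\text{gen}}$ contains the entire downward ray $\{(x,c-\lambda):\lambda\ge 0\}$ through any feasible point, so an extreme point must sit at the top of such a ray, i.e. $c=\delta(x):=\min_X[f(X)-x(X)]$; since $X=\emptyset$ forces $\delta(x)\le 0$ with equality iff $x\in\mathcal P_f$, I must rule out $c=\delta(x)<0$. When $\delta(x)<0$ the set $\emptyset$ is not a minimizer, so every tight set is nonempty; the minimizers of the submodular function $X\mapsto f(X)-x(X)$ form a lattice and hence all contain a common nonempty smallest minimizer, and choosing any element $e$ of it makes the direction $(\mathbf e_e,-1)$ orthogonal to every tight normal $(\mathbf 1_X,1)$ — so $(x,c)$ lies on a segment inside $\mathcal P_f^{\text{gen}}$ and is not extreme. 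This leaves $c=0$ and $x\in\mathcal P_f$, where a rank bookkeeping finishes it: the tight sets are exactly $\{X:x(X)=f(X)\}$ together with $\emptyset$, and since the all-ones last coordinate contributes at most one to the rank of the normal matrix, the normals $(\mathbf 1_X,1)$ span $\mathbb{R}^{n+1}$ iff the indicators $\mathbf 1_X$ span $\mathbb{R}^n$, i.e. iff $x$ is a vertex of $\mathcal P_f$.

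For the equalities I would first prove that (i) equals (ii) in full generality, by two inclusions. For (i)$\ge$(ii): given any $X$ and any $x\in\partial_f(X)$, the point $(x,f(X)-x(X))$ is feasible for $\mathcal P_f^{\text{gen}}$ — this is literally the defining inequality $f(Y)-x(Y)\ge f(X)-x(X)$ of the subdifferential — and it attains the value $\langle x,y\rangle+f(X)-x(X)$. For (i)$\le$(ii): given feasible $(x,c)$, choose $X^\ast\in\argmin_Y[f(Y)-x(Y)]$; then $c\le f(X^\ast)-x(X^\ast)$, and $X^\ast$ being a minimizer is exactly the statement $x\in\partial_f(X^\ast)$, so $\langle x,y\rangle+c$ is dominated by the $X^\ast$-term of (ii). (This is consistent with Lemma~\ref{lemma:face_gen_lower_poly}, which identifies the faces of $\mathcal P_f^{\text{gen}}$ with subgradients.)

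It remains to identify (i) with (iii), and this is where the main obstacle lies. The inequality (iii)$\le$(i) is immediate since $(x,0)\in\mathcal P_f^{\text{gen}}$ for every $x\in\mathcal P_f$. For the reverse, I would dualize the linear program (i): writing its constraints as $\sum_X\lambda_X(\mathbf 1_X,1)=(y,1)$, the dual is $\min\{\sum_X\lambda_X f(X):\lambda\ge 0,\ \sum_X\lambda_X\mathbf 1_X=y,\ \sum_X\lambda_X=1\}$, which is the convex closure (lower convex envelope) of $f$ at $y$; for submodular $f$ this coincides with the \lovasz{} extension, and by Edmonds' greedy algorithm (Lemma~\ref{greedy}) the \lovasz{} extension equals $\max_{x\in\mathcal P_f}\langle x,y\rangle$, which is (iii). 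The genuinely submodular ingredient — and the crux I expect to fight with — is this convex-closure/\lovasz{} coincidence. The subtlety I would flag explicitly is that the duality step presumes program (i) is bounded: a tight-constraint/recession analysis of $\partial_f(X)$ shows the inner maximum in (ii) is finite for all $X$ exactly when $0\le y_i\le 1$ for every $i$ (forced respectively by $X=\emptyset$ and $X=V$). Outside $[0,1]^n$, (i) and (ii) can be $+\infty$ while (iii) stays finite, so I would state the identity for $y\in[0,1]^n$ (or, equivalently, assert the three quantities coincide whenever any one is finite) rather than for literally all $y\in\mathbb{R}^n$.
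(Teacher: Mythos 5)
Your proof is correct, and in places it is more careful than the paper's own argument; the routes differ in two substantive ways. For the vertex characterization, the paper essentially waves its hands (``since every linear program over the generalized submodular polyhedron can be cast as a linear program over the submodular polyhedron, the extreme points of both polyhedra must also be the same''), whereas you give a real proof: the downward ray forces $c=\min_X[f(X)-x(X)]$, the lattice structure of the minimizers of $X\mapsto f(X)-x(X)$ (this is where submodularity enters) kills the case $c<0$, and the rank count on the tight normals $(\mathbf 1_X,1)$, which works because $\emptyset$ is always tight when $c=0$, settles the case $c=0$. For $\text{(i)}=\text{(iii)}$, the paper argues directly: for any feasible $(x,c)$ it writes $w=\sum_i \lambda_i 1_{S_i^{\perm_w}}$ with $\lambda_i\ge 0$, $\sum_i\lambda_i=1$ along the greedy chain, and concludes $\max_{s\in\mathcal P_f}\langle s,w\rangle=\sum_i\lambda_i f(S_i^{\perm_w})\ge\sum_i\lambda_i[x(S_i^{\perm_w})+c]=\langle x,w\rangle+c$. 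You instead invoke LP duality to identify (i) with the convex closure $\min_{\lambda\in\Lambda_y}\sum_S\lambda_S f(S)$ and then use the closure-equals-\lovasz{} coincidence. The submodular ingredient (Edmonds' greedy, Lemma~\ref{greedy}) is the same in both; the paper's version is self-contained because it constructs the dual certificate explicitly, while yours is shorter but leans on Lemma~\ref{lovaszextlemma}, which the paper only establishes in the following section. Your $\text{(i)}=\text{(ii)}$ step via $X^*\in\argmin_Y[f(Y)-x(Y)]$ is essentially the paper's face argument (Lemma~\ref{lemma:face_gen_lower_poly}), but improves on it: it never assumes the maximum in (i) is attained, so it is valid as an identity of suprema even in the unbounded regime.

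Your final caveat is a genuine catch, not pedantry: the lemma as stated, ``for any $y\in\mathbb R^n$,'' is false. For $y\ge 0$ with some $y_i>1$, the direction $(\mathbf e_i,-1)$ is a recession direction of $\mathcal P_f^{\text{gen}}$ along which the objective grows at rate $y_i-1>0$, so (i) is $+\infty$ (and likewise (ii), using that $\partial_f(V)$ is upward closed), while (iii) remains finite for all $y\ge 0$. The paper's proof silently assumes exactly what you flag: the chain decomposition with $\lambda_i\ge 0$ and $\sum_i\lambda_i=1$ exists only for $w\in[0,1]^n$. Since the lemma's downstream use (Lemma~\ref{linprogsubpoly}) is already restricted to $[0,1]^n$, your restriction is the right repair. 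One small correction, though: your parenthetical alternative, ``the three quantities coincide whenever any one is finite,'' is not equivalent and is itself false, since for $y\ge 0$ with $y_i>1$ quantity (iii) is finite while (i) and (ii) are not; it should read ``whenever (i) (equivalently (ii)) is finite.''
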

 \begin{proof}
\rishabh{Jeff}{Also, is this true for any $y$? I think this could yield infinite values for some $y$ which doesn't violate the equation but perhaps mention it.}
\rishabh{Jeff}{I've re-done the proof below quite considerably, please take a look as there is an important question to answer still.}\jeff{Rishabh}{yes, there was a bug in the proof. I have corrected the result now. I think it should go through now.}\rishabh{Jeff}{looks good now. thanks!}

First we show that $\text{(i)} = \text{(ii)}$. Notice that a maximum
in $\text{(i)}$ (denoted by $(x^*,c^*)$) occurs at a face of
$\mathcal P_f^{\text{gen}}$ which, by
Lemma~\ref{lemma:face_gen_lower_poly}, implies that there exists an $X$ such
that $x^* \in \partial_f(X)$ with $c^*=f(X) - x(X)$. This subdifferential is considered in $\text{(ii)}$'s
outer max implying $\text{(ii)} \geq \text{(i)}$. 
Moreover, from the definitions of the generalized submodular lower
polyhedron and the subdifferential, we see it is the case that for any
$X \subset V$ and for any $x \in \partial_f(X)$, the point $(x, f(X) -
x(X)) \in \mathcal P_f^{\text{gen}}$.
Hence $\text{(ii)} \leq \text{(i)}$, since the $\max$ in (i) is over a much larger superset.

\rishabh{Jeff}{I'm not seeing why 
$\text{(ii)} \leq \text{(i)}$ since, even in 2D, for a given
direction $y$, one might hit an extreme point for one subdifferential (say
down/left closed) but for a different subdifferential (say up/right closed)
the max would be unbounded.?}\jeff{Rishabh}{I fixed the proof above. It works now.}
\rishabh{Jeff}{7/23/15: There's was still a bug, it repeats $\text{(ii)} \geq \text{(i)}$
but never shows $\text{(ii)} \leq \text{(i)}$ which is needed. I fixed it, and changed
the wording to make it clearer.}

We then show that $\text{(i)} = \text{(iii)}$.
It is immediate that 
$\text{(iii)}
\leq 
\text{(i)}$
since 
$\text{(iii)}$
is a more constrained case of 
$\text{(i)}$ under $c = 0$. 
Next, we show 
$\text{(iii)}
\geq 
\text{(i)}$, which states that for a
submodular function, the linear program over the generalized
submodular polyhedron is equivalent to a linear program over the
submodular polyhedron. This result follows as a corollary from
Lemma~\ref{greedy}. Specifically, for any
$(x, c) \in \mathcal P_f^{\text{gen}}$, we have that
\begin{align}
\max_{s \in \mathcal P_f} w^{\top} s  = \sum_i \lambda_i f(S^{\perm_w}_i) \geq \sum_i \lambda_i [\langle x, 1_{S^{\perm_w}_i} \rangle + c] \geq \langle x, w \rangle + c,
\end{align}
where the last inequality follows from the facts that $\sum_i \lambda_i 1_{S^{\perm_w}_i} = w$ and $\sum_i \lambda_i = 1$. 
In particular, this also means that in the optimization problem in $\text{ii}$, the maximum over $X \subseteq V$ occurs at $X = \emptyset$, when $\mathcal \partial_f(X) = \partial_f$.

Lastly, note that since every linear program over the generalized submodular polyhedron can be cast as a linear program over the submodular polyhedron, the extreme points of both polyhedra must also be the same.
 \end{proof}
 Intuitively, $(x, c)$ is an extreme point if $x$ is an extreme
 point of a subdifferential $\partial_f(X)$ for some set $X$. Since
 the extreme points of the subdifferentials are exactly the extreme
 points of the submodular polyhedron, the result follows.
 \rishabh{Jeff}{I don't think this adds anything to the above and
   still doesn't answer the above question.}\rishabh{Jeff}{I've changed the
text so this is ok now.}

Finally, it is worth mentioning that similar to the submodular
polyhedron, the \emph{generalized submodular polyhedron membership
  problem} (i.e., does $(x, c) \in \mathcal P_f^{\text{gen}}$) is
polynomial time, and can be solved via submodular minimization. This
is again similar to the case for the submodular (lower) polyhedron.
\begin{proposition}
Given a submodular function $f$, $(x, c) \in \mathcal P_f^{\text{gen}}$ if and only if $c \leq \min_{X \subseteq V} [f(X) - x(X)]$. 
Since submodular minimization is polynomial time, the generalized submodular polyhedral membership problem is also polynomial time.
\end{proposition}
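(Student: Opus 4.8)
The plan is to read the membership condition off the defining inequalities directly, and then observe that those inequalities amount to a single minimization of a submodular function. First I would unfold the definition of $\mathcal P_f^{\text{gen}}$ in Eqn.~\eqref{eq:gen_sub_low_poly}: the point $(x,c)$ belongs to $\mathcal P_f^{\text{gen}}$ precisely when $x(X) + c \leq f(X)$ holds for every $X \subseteq V$, which after rearranging is the family of inequalities $c \leq f(X) - x(X)$, one for each $X \subseteq V$.

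The only observation needed for the equivalence is that a single scalar $c$ satisfies $c \leq g(X)$ for all $X \subseteq V$ if and only if it satisfies $c \leq \min_{X \subseteq V} g(X)$, the minimum being attained over the finite lattice $2^V$. Applying this with $g(X) = f(X) - x(X)$ yields the stated characterization that $(x,c) \in \mathcal P_f^{\text{gen}}$ iff $c \leq \min_{X \subseteq V} [f(X) - x(X)]$. This direction is essentially definitional and carries no real obstacle.

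For the computational claim, the key point is that the set function $g(X) = f(X) - x(X)$ is itself submodular: $f$ is submodular by assumption, and $X \mapsto -x(X) = -\sum_{i \in X} x_i$ is modular, hence both submodular and supermodular, so adding it to $f$ preserves submodularity. Therefore evaluating $\min_{X \subseteq V} [f(X) - x(X)]$ is a single submodular function minimization, which runs in (strongly) polynomial time by the algorithms referenced earlier in the paper. Computing this minimum and comparing it against $c$ then decides membership in polynomial time.

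The proof has no genuinely hard step; the one thing I would be careful to state explicitly is the invariance of submodularity under modular perturbation, since this is exactly what licenses the reduction to submodular minimization. It also makes clear that this proposition is the affine generalization of the earlier membership result for $\mathcal P_f$, which is recovered as the slice $c = 0$ noted just after Eqn.~\eqref{eq:gen_sub_low_poly} (there the condition reads $0 \leq \min_{X \subseteq V} [f(X) - x(X)]$).
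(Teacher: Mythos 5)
Your proof is correct and takes essentially the same approach the paper intends: the paper states this proposition without any explicit proof, treating it as immediate from the definition of $\mathcal P_f^{\text{gen}}$ (the inequalities $c \leq f(X) - x(X)$ for all $X$) together with the polynomial-time solvability of submodular minimization, which is exactly what you spell out. Your explicit remark that $X \mapsto f(X) - x(X)$ stays submodular under a modular perturbation is the right detail to make the reduction rigorous.
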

 
A visualization of the generalized submodular lower polyhedron for a submodular function on $V = \{v_1, v_2\}$ is shown in Figure~\ref{gensubpolyvis}. 

%

\section{Convex extensions of a Submodular Function}
\label{submodcvxext}

We now describe the convex extension of a submodular functions. We
shall see a number of equivalent ways to characterize this extension
and observe how they can be computed very efficiently as what is known
as the \lovasz{} extension~\cite{edmondspolyhedra,lovasz1983}. The
results of this section are mainly taken from
\cite{edmondspolyhedra,lovasz1983, dughmi2009submodular,
  vondrak2007submodularity} and are given here both for completeness
and also to help contrast with the results we will show for the various concave
extensions given in Section~\ref{submodccvext}.

Following~\cite{vondrak2007submodularity,dughmi2009submodular}, we
consider two main characterizations of the convex extensions, as what
we call \emph{polyhedral characterization} and \emph{distributional
  characterization}. The main purpose of this section is to review
existing work thereby making it easy to contrast these results with
the new ones, on the concave extensions of submodular functions, we
shall present in Section~\ref{submodccvext},

\subsection{Polyhedral characterization of the convex extensions}
\label{sec:polyh-char-conv}

\rishabh{Jeff}{I still think we should do more broad citing of the
convex extension rather than just Dughmi which is rather recent. At
least we should cite Vondrak appropriately and I'm fairly sure
there are earlier references to the general convex extension. I think
Vondrak first proved that the convex extension is hard for non-submodular
functions, but lets do a bit of looking around. What comes
to mind is the Boros and Hammer work on pseudo-Boolean functions, they
have a book on this.}\jeff{Rishabh}{I added a reference to Vondrak and Boros and Hammer.}

The convex extension of any set function (not necessarily submodular) can be seen as the pointwise supremum of convex functions which lower bound the set function~\cite{dughmi2009submodular, vondrak2007submodularity, Boros2002155}. Precisely, let 
\begin{align}
\Phi_f \triangleq \{\phi: \phi \text{ is convex in }[0, 1]^V \text{ and } \phi(1_X) \leq f(X), \forall X \subseteq V\}.
\end{align}
be the set of continuous convex functions on $[0,1]^V$ that lower bound $f(\cdot)$.
Then define the convex extension
$\lex f : [0, 1]^{|V|} \to \mathbb R$ as follows:
\begin{align} \label{convexextcvx}
\lex f(w) \triangleq \max_{\phi \in \Phi_f} \phi(w), \text{ for } w \in [0, 1]^n
\end{align} 
It is not hard to show that $\lex f$ is convex and satisfies the relation $\lex f(1_X) = f(X)$. The above expression can in fact be simplified for any set function, and it suffices to consider affine lower instead of convex lower bounds. In particular Eqn.~\eqref{convexextcvx} can be expressed as a linear program over the generalized polyhedron.
\begin{lemma}\label{convexextafflemma}
Given a set function $f$, the convex extension of $f$ in Eqn.~\eqref{convexextcvx} can be expressed as:
\begin{align}\label{convexextaff}
\lex f(w) = \max_{(x, c) \in \mathcal P_f^{\text{gen}}} [\langle x, w \rangle + c], \forall w \in [0, 1]^n
\end{align}
\end{lemma}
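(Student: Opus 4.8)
The plan is to prove the two inequalities $g(w)\le \lex f(w)$ and $\lex f(w)\le g(w)$ separately, where I abbreviate the right-hand side of Eqn.~\eqref{convexextaff} as $g(w)\triangleq\max_{(x,c)\in\mathcal P_f^{\text{gen}}}[\langle x,w\rangle + c]$. The conceptual content is that although $\Phi_f$ ranges over \emph{all} convex minorants of $f$, it is enough to use only the affine (modular) ones, and each affine minorant corresponds exactly to a point of $\mathcal P_f^{\text{gen}}$.

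The first inequality is immediate and uses no structural assumption on $f$. Fixing $(x,c)\in\mathcal P_f^{\text{gen}}$, I would consider the affine function $\psi_{x,c}(u)=\langle x,u\rangle + c$. It is convex (being affine), and by the defining constraints of $\mathcal P_f^{\text{gen}}$ we have $\psi_{x,c}(1_X)=x(X)+c\le f(X)$ for every $X\subseteq V$, so $\psi_{x,c}\in\Phi_f$. Thus the family of affine minorants indexed by $\mathcal P_f^{\text{gen}}$ is a subfamily of $\Phi_f$, and taking a supremum over the smaller family only decreases the value: $g(w)=\sup_{(x,c)}\psi_{x,c}(w)\le\sup_{\phi\in\Phi_f}\phi(w)=\lex f(w)$.

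For the reverse inequality I would fix an arbitrary $\phi\in\Phi_f$ and $w\in[0,1]^V$, and exhibit points $(x,c)\in\mathcal P_f^{\text{gen}}$ whose affine value at $w$ approaches $\phi(w)$. Since $\phi$ is continuous on the compact convex set $[0,1]^V$, it is a closed proper convex function, hence equals the pointwise supremum of its affine minorants (the Fenchel--Moreau / biconjugate theorem); that is, there exist affine $\psi(u)=\langle x,u\rangle+c$ with $\psi\le\phi$ on all of $[0,1]^V$ and $\psi(w)$ arbitrarily close to $\phi(w)$. Restricting such a $\psi$ to the vertices gives $x(X)+c=\psi(1_X)\le\phi(1_X)\le f(X)$, so $(x,c)\in\mathcal P_f^{\text{gen}}$ and therefore $\psi(w)\le g(w)$. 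Letting $\psi(w)\to\phi(w)$ yields $\phi(w)\le g(w)$, and taking the supremum over $\phi\in\Phi_f$ gives $\lex f(w)\le g(w)$. Combining the two inequalities proves Eqn.~\eqref{convexextaff}.

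The one step I expect to require care is the affine-minorant representation at points $w$ on the \emph{boundary} of the cube, where a subgradient of $\phi$ need not exist (the supporting slope may blow up). Rather than argue directly through subgradients, I would rely on closedness of $\phi$ (continuity on a compact domain) so that the biconjugate theorem supplies affine minorants whose values at $w$ approach $\phi(w)$ even when no exact supporting affine function touches $\phi$ at $w$. I would also remark that submodularity is nowhere used, so this LP reformulation holds for an arbitrary set function; submodularity enters only afterwards, when $g$ is shown to be efficiently computable via the greedy/\lovasz{} procedure of Lemma~\ref{greedy}.
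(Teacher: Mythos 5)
Your proof is correct, and its skeleton is the same as the paper's: both arguments rest on the observation that the convex minorants in $\Phi_f$ may be replaced by affine ones, and that an affine function $u \mapsto \langle x, u\rangle + c$ minorizes $f$ on the vertices of the cube precisely when $(x,c) \in \mathcal P_f^{\text{gen}}$ (your first inequality, which the paper leaves implicit). The difference lies in the mechanism for the nontrivial direction $\lex f(w) \leq \max_{(x,c) \in \mathcal P_f^{\text{gen}}}[\langle x,w\rangle + c]$. The paper fixes an $\argmax$ $\hat\phi$ of Eqn.~\eqref{convexextcvx} and invokes the existence of a subgradient of $\hat\phi$ at $w$, i.e., an affine minorant that is \emph{tight} at $w$; this tacitly assumes both that the supremum defining $\lex f$ is attained by some $\hat\phi \in \Phi_f$ and that a convex function on $[0,1]^V$ admits a subgradient at every point of its domain. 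The latter can fail on the boundary of the cube (e.g., $\phi(u) = -\sqrt{u_1}$ is continuous and convex but has no supporting affine function at points with $u_1 = 0$), which is exactly the subtlety you flag. Your route instead applies the Fenchel--Moreau biconjugate theorem to each individual $\phi \in \Phi_f$ (legitimate, since the paper's $\Phi_f$ consists of \emph{continuous} convex functions on a compact domain, hence closed proper ones), obtaining affine minorants whose value at $w$ only \emph{approaches} $\phi(w)$, and then passes to the supremum. What you give up is exhibiting a single $(x,c)$ attaining the maximum at $w$; what you gain is that neither attainment of the sup over $\Phi_f$ nor pointwise subgradient existence is needed, so the boundary issue that the paper's proof silently glosses over disappears. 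Your closing remark that submodularity plays no role is also consistent with the paper, which states the lemma for arbitrary set functions and only invokes submodularity afterwards (Lemma~\ref{linprogsubpoly}) to collapse the generalized polyhedron to $\mathcal P_f$.
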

\begin{proof}
  The proof of the equivalence follows from a simple observation. For
  a given $w$, let $\hat{\phi}$ be an $\argmax$ in
  Eqn.~\eqref{convexextcvx}. Then since $\hat{\phi}$ is a convex
  function in $[0, 1]^V$, there exists a subgradient $x \in
  \mathbb{R}^n$ at $w$ and value $d$, such that $\langle x, y \rangle
  + d \leq \hat{\phi}(y), \forall y$ and $\langle x, w \rangle + d =
  \hat{\phi}(w)$. In other words, $\langle x, y \rangle + d$, seen as
  a function of $y$, is a linear lower bound of $\hat{\phi}(y)$ and
  that is tight at $w$. Hence, at value $w$, $\lex f(w)$ takes value
  $\langle x, w \rangle + d$. Finally notice that $(x, d) \in \mathcal
  P_f^{\text{gen}}$ since 
  $\langle x,1_X \rangle + d = x(X) + d \leq \hat{\phi}(1_X) \leq f(X),
  \forall X \subseteq V$.
\end{proof}

In the above, we have so far not yet invoked the submodularity of $f$,
something that can lead to great simplifications. If $f$ is submodular,
then the above polyhedral characterization can be replaced by an
linear program over the submodular polyhedron. In other words,
\begin{lemma}
\label{linprogsubpoly}
For a submodular function $f$, the expressions in Eqn.~\eqref{convexextcvx}, \eqref{convexextaff} can be rewritten as:
\begin{equation}
\lex f(w) = \max_{x \in \mathcal P_f} \langle x, w \rangle, \forall w \in [0, 1]^n.
\end{equation}
\end{lemma}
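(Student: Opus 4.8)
The plan is to obtain the claim by chaining the two facts already established in the excerpt: the affine reformulation of the convex extension in Lemma~\ref{convexextafflemma} and the collapse of the generalized submodular lower polyhedron onto $\mathcal P_f$ in Lemma~\ref{gensubpolysubpolyrel}. Lemma~\ref{convexextafflemma} (which holds for any set function and does not use submodularity) already gives
\begin{align}
\lex f(w) = \max_{(x,c) \in \mathcal P_f^{\text{gen}}} [\langle x, w \rangle + c].
\end{align}
It then remains to argue that, \emph{for submodular $f$}, the right-hand side equals $\max_{x \in \mathcal P_f}\langle x, w\rangle$. But this is exactly the identity $\text{(i)}=\text{(iii)}$ of Lemma~\ref{gensubpolysubpolyrel}, instantiated at $y = w$. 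So at the top level the proof amounts to verifying that the hypotheses of Lemma~\ref{gensubpolysubpolyrel} apply at the point $w \in [0,1]^n$ and then invoking it.

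If I were to unfold that invocation, I would first record the easy inequality $\max_{(x,c)\in\mathcal P_f^{\text{gen}}}[\langle x,w\rangle + c] \geq \max_{x\in\mathcal P_f}\langle x,w\rangle$, which holds because $(x,0)\in\mathcal P_f^{\text{gen}}$ whenever $x \in \mathcal P_f$ (the constraint $x(X)\le f(X)$ is the $c=0$ slice of the generalized polyhedron); this uses nothing about submodularity. The content is the reverse inequality, where submodularity enters through the greedy algorithm of Lemma~\ref{greedy}. Fixing any $(x,c)\in\mathcal P_f^{\text{gen}}$, I would sort the coordinates of $w$ decreasingly to obtain the permutation $\perm_w$ with its chain $S_0^{\perm_w}\subseteq\cdots\subseteq S_n^{\perm_w}$, and write $w=\sum_i\lambda_i 1_{S_i^{\perm_w}}$ as a convex combination with $\lambda_i\ge 0$ and $\sum_i\lambda_i=1$ (absorbing the slack into the $S_0=\emptyset$ term and using $f(\emptyset)=0$). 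Using the defining inequalities $x(S_i^{\perm_w})+c\le f(S_i^{\perm_w})$ and linearity,
\begin{align}
\langle x,w\rangle + c = \sum_i \lambda_i\big(x(S_i^{\perm_w})+c\big) \le \sum_i \lambda_i f(S_i^{\perm_w}) = \max_{s\in\mathcal P_f}\langle s,w\rangle,
\end{align}
where the last equality is the value of the greedy LP from Lemma~\ref{greedy}. Combining the two inequalities yields the claimed equality.

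The step I expect to require the most care is the applicability of this greedy machinery. Lemma~\ref{greedy} is stated for $w\in\mathbb R^n_+$, and both the chain decomposition and the finiteness of $\max_{x\in\mathcal P_f}\langle x,w\rangle$ rely on $w\ge 0$; this is precisely guaranteed here by the restriction $w\in[0,1]^n$, so the domain of the convex extension is exactly the regime in which the submodular polyhedron LP is finite and computed by the greedy algorithm. (For $w$ with negative entries the LP over $\mathcal P_f$ would be unbounded, which is why the extension is only defined on $[0,1]^n$.) Everything else is linear bookkeeping, so I would keep the emphasis on verifying that $w\ge 0$ licenses both the convex chain decomposition and the clean invocation of Lemma~\ref{gensubpolysubpolyrel}.
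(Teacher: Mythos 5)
Your proposal is correct and takes essentially the same route as the paper: the paper's proof of this lemma is precisely the one-line invocation of Lemma~\ref{gensubpolysubpolyrel} (its identity $\text{(i)}=\text{(iii)}$) applied to the representation from Lemma~\ref{convexextafflemma}. Your unfolded argument --- the trivial inequality via the $c=0$ slice, plus the chain decomposition $w=\sum_i\lambda_i 1_{S_i^{\perm_w}}$ combined with the greedy LP of Lemma~\ref{greedy} --- is exactly the paper's own proof of that cited lemma, with your remark on $w\in[0,1]^n$ guaranteeing the nonnegative convex-combination weights being a careful (and welcome) addition that the paper leaves implicit.
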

\begin{proof}
This follows directly from Lemma~\ref{gensubpolysubpolyrel}.
\end{proof}
Hence, when $f$ is submodular, we may assume $c=0$ in Eqn.~\eqref{convexextaff}.
The above result is not surprising given that the extreme points of
$\mathcal P_f^{\text{gen}}$ are identical to the extreme points of
$\mathcal P_f$ when $f$ is submodular.

\subsection{Distributional characterization of the convex extension}
\label{sec:distr-char-conv}

Another way to characterize the continuous extension of a set function
$f$ is as follows. 
For a given $w \in [0, 1]^n$,
denote $\Lambda_w$ as the set:
\begin{align}
\Lambda_w \triangleq 
\Bigl\{
\{ \lambda_S, S \subseteq V \} : \sum_{S \subseteq V} \lambda_S 1_S = w, \sum_{S \subseteq V} \lambda_S = 1, \text{ and } \forall S, \lambda_S \geq 0
\Bigr\}.
\label{eq:lambda_w}
\end{align}
Then the convex extension $\lex f$ can be equivalently written as:
\begin{align}
\label{convexextdist}
\lex f(w) = \min_{\lambda \in \Lambda_w} \sum_{S \subseteq V} \lambda_S f(S)
\end{align}
The reason this representation is called distributional is that the convex extension here is computed by minimizing over particular distributions over sets. Again, it is not hard to see that this characterization is a convex extension. 

For a submodular function, the distribution characterization takes on a nice form, which is known classically as the \lovasz{} extension. 
This result can be found, for example, in~\cite{dughmi2009submodular, vondrak2007submodularity}:
\begin{lemma}~\cite{dughmi2009submodular, lovasz1983, edmondspolyhedra}\label{lovaszextlemma}
\rishabh{Jeff}{we need to add more citations to this.}\jeff{Rishabh}{Yes, here it would make sense to cite \lovasz{} and Edmonds as well. Done}
Given a submodular function $f$,
\begin{align}\label{lovaszext}
\lex f(w) = \sum_{i = 1}^n w(\perm_w(i)) (f(S^{\perm_w}_i) - f(S^{\perm_w}_{i-1}) = w(\perm_w(n))) f(S^{\perm_w}_n) + \sum_{i = 1}^{n-1} (w(\perm_w(i)) - w(\perm_w(i + 1))) f(S^{\perm_w}_i),
\end{align}
where $\sigma_w$ is a permutation satisfying $w(\sigma_w(1)) \geq w(\sigma_w(2) \geq \cdots \geq w(\sigma_w(n))$.
\end{lemma}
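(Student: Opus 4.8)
The plan is to read off the closed form directly from the two characterizations already in force for submodular $f$, rather than re-deriving anything from the distributional definition. The key observation is that Lemma~\ref{linprogsubpoly} identifies the convex extension with a linear program over the submodular polyhedron, $\lex f(w) = \max_{x \in \mathcal P_f}\langle x, w\rangle$ for every $w \in [0,1]^n$, and that Lemma~\ref{greedy} evaluates \emph{exactly} this linear program via the greedy algorithm. So my first step is to invoke Lemma~\ref{linprogsubpoly} to pass from the definition of $\lex f$ to the program $\max_{x \in \mathcal P_f}\langle x, w\rangle$.

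Second, I would apply Lemma~\ref{greedy} to this program. The one hypothesis to verify is nonnegativity of the objective direction, which holds because $w \in [0,1]^n \subseteq \mathbb{R}^n_+$; choosing $\perm_w$ so that $w(\perm_w(1)) \geq \cdots \geq w(\perm_w(n))$ makes the sorting convention coincide in Lemma~\ref{greedy}, in the lemma statement, and in Eqn.~\eqref{lovaszext}. Lemma~\ref{greedy} then hands back the maximizing extreme point $s^*(\perm_w(i)) = f(S^{\perm_w}_i) - f(S^{\perm_w}_{i-1})$ together with the optimal value $\sum_{i=1}^n w(\perm_w(i))[f(S^{\perm_w}_i) - f(S^{\perm_w}_{i-1})]$, which is precisely the first expression in Eqn.~\eqref{lovaszext}.

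Third, the second equality in Eqn.~\eqref{lovaszext} is a purely algebraic rearrangement by summation by parts. Writing $w_i = w(\perm_w(i))$ and $F_i = f(S^{\perm_w}_i)$ with $F_0 = f(\emptyset) = 0$, one regroups $\sum_{i=1}^n w_i (F_i - F_{i-1})$ into $w_n F_n + \sum_{i=1}^{n-1}(w_i - w_{i+1}) F_i$; the normalization $f(\emptyset)=0$ is exactly what annihilates the boundary term at $i=0$. This step is routine and uses no submodularity.

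Along this route there is no real obstacle: the entire substance is carried by Lemmas~\ref{linprogsubpoly} and~\ref{greedy}, and the only care required is bookkeeping of the permutation convention and the nonnegativity check. The one place where something genuinely has to be argued arises only if one insists on a self-contained derivation from the distributional form $\min_{\lambda \in \Lambda_w}\sum_{S \subseteq V} \lambda_S f(S)$ in Eqn.~\eqref{convexextdist}: there one would establish strong linear-programming duality between this minimum and $\max_{x \in \mathcal P_f}\langle x, w\rangle$ and then invoke complementary slackness with the greedy extreme point, noting that the optimal $\lambda$ is supported on the chain $\{S^{\perm_w}_i\}_i$. That duality-plus-chain argument would be the technical heart. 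Since Lemma~\ref{linprogsubpoly} already provides the polyhedral identity for submodular $f$, I would take the direct route and relegate the duality observation to a remark.
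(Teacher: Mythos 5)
Your proposal is correct and matches the paper's own justification: the paper states this lemma by citation, but the text immediately following it argues exactly as you do, identifying Eqn.~\eqref{lovaszext} with the greedy solution (Lemma~\ref{greedy}) of the linear program $\max_{x \in \mathcal P_f} \langle x, w\rangle$, which equals $\lex f(w)$ for submodular $f$ by Lemma~\ref{linprogsubpoly}. The only details you add beyond that chain of references --- the nonnegativity check $w \in [0,1]^n \subseteq \mathbb{R}^n_+$ needed to invoke Lemma~\ref{greedy}, and the summation-by-parts rearrangement using $f(\emptyset)=0$ to get the second form in Eqn.~\eqref{lovaszext} --- are both correct.
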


It is clear from above that the minimizing distribution $\lambda$ is a form of a chain distribution, where the chain here is the sequence of sets $S^{\perm_w}_0, S^{\perm_w}_1, \cdots, S^{\perm_w}_n$ defined in Lemma~\ref{greedy}. We also see the relationship between the two characterizations in the case of submodular functions, since Eqn.~\eqref{lovaszext} is exactly the solution of the linear program over the submodular polyhedron (see Lemma~\ref{greedy}). Hence the two forms of convex extensions, i.e the distributional characterization from Lemma~\ref{lovaszextlemma} and polyhedral characterization from Lemma~\ref{linprogsubpoly}, are identical for a submodular function. The resulting convex function $\lex f$ is the \lovasz{} extension.

The equivalence between the two characterizations holds for general set functions, not necessarily submodular.  In other words, Eqn.~\eqref{convexextdist} and Eqn.~\eqref{convexextcvx}, \eqref{convexextaff} are identical for any set function. This follows directly from the arguments in~\cite{dughmi2009submodular, vondrak2007submodularity}. The only catch, however, is that Lemmas~\ref{linprogsubpoly} and~\ref{lovaszextlemma} do not hold for general set functions and $\lex f$ can be NP hard to evaluate in general~\cite{dughmi2009submodular, vondrak2007submodularity, Boros2002155}.\rishabh{Jeff}{again, we need
to fix the citations here to be more appropriate and specific and ideally
cite the original sources.}\jeff{Rishabh}{True, I have added a couple more here.}

\rishabh{Jeff}{You might be wondering why I think it is important to
  cite the original sources and to broadly cite? One reason is that it
  is nice to the people who did this first. It also is nice to us
  since it shows that we are aware of the various sources where it
  exists. Thirdly, it makes for a better paper since the reader will
  learn more about the history, and it makes a paper more accessible
  since the reader will have an easier time finding background
  reading. The only reason to not cite is when we are under space
  limitations, which we are refreshingly not in the current case. :-)}\jeff{Rishabh}{Yes, I agree :-)}

\subsection{Convex Extensions and Submodular Minimization}
The \lovasz{} extension plays an important role in submodular minimization. In particular, minimizing the \lovasz{} extension is equivalent to minimizing a submodular function:
\begin{lemma} \cite{lovasz1983})
Given a submodular function $f$,
\begin{align}
\min_{X \subseteq V} f(X) = \min_{x \in [0, 1]^n} \lex f(x)
\end{align}
Furthermore, given the minimizer $x^*$ of the RHS above, we can obtain a set $X^*$ such that $f(X^*) = \lex f(x^*)$.
\end{lemma}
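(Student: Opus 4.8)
The plan is to prove the claimed identity by establishing the two inequalities separately, and then to read off the rounding statement from the witnessing construction. First I would dispatch the easy direction $\min_{x \in [0,1]^n} \lex f(x) \leq \min_{X \subseteq V} f(X)$: every indicator $1_X$ lies in $[0,1]^n$, and the extension property $\lex f(1_X) = f(X)$ (valid for submodular $f$ by Lemma~\ref{lovaszextlemma}) shows that the continuous minimum is at most $f(X)$ for each $X$, hence at most $\min_X f(X)$.

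For the reverse inequality I would exploit the explicit chain formula of Lemma~\ref{lovaszextlemma}. Fix any $w \in [0,1]^n$ and let $\perm_w$ sort its coordinates so that $w(\perm_w(1)) \geq \cdots \geq w(\perm_w(n))$. Setting $\mu_i = w(\perm_w(i)) - w(\perm_w(i+1))$ for $i < n$ and $\mu_n = w(\perm_w(n))$, the sorting guarantees $\mu_i \geq 0$, and a telescoping sum gives $\sum_{i=1}^n \mu_i = w(\perm_w(1)) \leq 1$. The formula then reads $\lex f(w) = \sum_{i=1}^n \mu_i f(S^{\perm_w}_i)$. The key step is to upgrade this to an honest convex combination over the chain $S^{\perm_w}_0 \subseteq \cdots \subseteq S^{\perm_w}_n$: assigning the leftover mass $\mu_0 = 1 - \sum_{i \geq 1}\mu_i \geq 0$ to $S^{\perm_w}_0 = \emptyset$ and invoking $f(\emptyset)=0$, I get $\lex f(w) = \sum_{i=0}^n \mu_i f(S^{\perm_w}_i)$ with $\mu_i \geq 0$ and $\sum_{i=0}^n \mu_i = 1$. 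A convex combination is bounded below by its least term, so $\lex f(w) \geq \min_{0 \leq i \leq n} f(S^{\perm_w}_i) \geq \min_{X} f(X)$; taking the infimum over $w$ finishes this direction and, with the first, gives equality.

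For the rounding claim I would reuse this decomposition at a minimizer $x^*$. Since $\lex f(x^*)$ is a convex combination of the values $f(S^{\perm_{x^*}}_i)$, there is an index $i^*$ with $f(S^{\perm_{x^*}}_{i^*}) \leq \lex f(x^*)$; set $X^* = S^{\perm_{x^*}}_{i^*}$. Then $f(X^*) \leq \lex f(x^*)$ by construction, while $f(X^*) = \lex f(1_{X^*}) \geq \min_x \lex f(x) = \lex f(x^*)$ because $x^*$ is a global minimizer. The two bounds together force $f(X^*) = \lex f(x^*)$, which is exactly the set we wanted to extract.

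The only delicate point, and the one I would be careful to state correctly, is that $\sum_{i=1}^n \mu_i = w(\perm_w(1))$ may be strictly less than $1$, so the raw chain formula is merely a sub-convex combination of $f$-values; promoting it to a genuine convex combination is precisely where the standing normalization $f(\emptyset) = 0$ enters. Once the chain representation and the extension property $\lex f(1_X) = f(X)$ are in hand, every remaining step is elementary.
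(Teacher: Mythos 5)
Your proof is correct: both inequalities are established soundly, and the promotion of the chain formula to a genuine convex combination by placing the leftover mass $1 - w(\perm_w(1))$ on $\emptyset$ (using the standing normalization $f(\emptyset)=0$) is exactly the right way to handle the only delicate point. The paper itself states this lemma without proof, citing \lovasz{}, and your argument is essentially the standard one from that source --- the chain decomposition from Lemma~\ref{lovaszextlemma} plus the extension property $\lex f(1_X) = f(X)$ --- so there is nothing to contrast.
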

This implies that unconstrained submodular minimization has an integrality gap of one and the two problems are equivalent.

\section{Optimality conditions for submodular minimization}
\label{submodminopt}

Fujishige~\cite{fujishige1984theory} provides some interesting
characterizations of optimality conditions for unconstrained
submodular minimization. The following theorem can be thought of as a
discrete analog to the KKT conditions:
\begin{lemma}(\cite[Lemma 7.1]{fujishige2005submodular})\label{optsubmin}
A set $A \subseteq V$ is a minimizer of $f: 2^V \rightarrow \mathbb{R}$ if and only if:
\begin{align}
\textbf{0} \in \partial_f(A)
\end{align}
\end{lemma}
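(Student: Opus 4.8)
The plan is to show that this equivalence is immediate from the definition of the subdifferential and in fact does not even require submodularity. Recalling Eqn.~\eqref{submodsubdiff}, membership $\mathbf{0} \in \partial_f(A)$ means precisely that the vector $x = \mathbf{0}$ satisfies $f(Y) - x(Y) \geq f(A) - x(A)$ for every $Y \subseteq V$. Since $x(Y) = \sum_{i \in Y} x_i = 0$ when $x = \mathbf{0}$, the only computation I would carry out is to substitute this in, whereupon the defining inequalities collapse to $f(Y) \geq f(A)$ for all $Y \subseteq V$.

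Thus the first (and essentially the only) step is to read off that $\mathbf{0} \in \partial_f(A)$ is literally the assertion that $f(A) \leq f(Y)$ for all $Y \subseteq V$, i.e.\ that $A$ is a global minimizer of $f$ over $2^V$. Because every step of this reduction is an equivalence rather than a one-way implication, both directions of the ``if and only if'' fall out simultaneously and no separate argument for each direction is required. This mirrors exactly the convex-analytic fact that $\mathbf{0} \in \partial \phi(x^\star)$ characterizes minimizers of a convex function $\phi$, which is why the lemma is presented as the discrete analogue of the first-order (KKT) optimality condition.

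There is, accordingly, no genuine obstacle in the derivation itself: the content has been front-loaded into the definition of $\partial_f$. The only point worth flagging --- and where submodularity genuinely matters --- is algorithmic rather than logical. Verifying the condition $\mathbf{0} \in \partial_f(A)$, equivalently certifying that $A$ minimizes $f$, is tractable precisely because submodular minimization runs in polynomial time; the equivalence proved here is what lets one phrase that certificate in the clean subdifferential form. I would therefore keep the proof to the one-line substitution above and, if useful, append this remark to explain why a definitionally trivial condition is nonetheless a meaningful optimality characterization.
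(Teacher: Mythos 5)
Your proof is correct: the paper itself gives no proof (it simply cites Fujishige, Lemma 7.1), and the canonical argument is exactly your one-line substitution of $x=\mathbf{0}$ into Eqn.~\eqref{submodsubdiff}, which collapses the defining inequalities to $f(Y)\geq f(A)$ for all $Y\subseteq V$. Your observation that submodularity is irrelevant to the equivalence itself (it matters only for efficiently \emph{certifying} the condition, via the reduced characterization in Eqn.~\eqref{subdiffred}) is also consistent with the paper's remark that the subdifferential polyhedra are defined for arbitrary set functions.
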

This immediately provides necessary and sufficient conditions for optimality of $f$:
\begin{lemma}(\cite[Theorem 7.2]{fujishige2005submodular})
A set $A$ minimizes a submodular function $f$ if and only if $f(A) \leq f(B)$ for all sets $B$ such that $B \subseteq A$ or $A \subseteq B$.
\end{lemma}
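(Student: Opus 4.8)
The plan is to prove both directions, of which the forward implication is immediate: if $A$ is a global minimizer then $f(A) \le f(B)$ holds for \emph{every} $B \subseteq V$, so in particular it holds for every $B$ with $B \subseteq A$ or $A \subseteq B$. All the content lies in the converse.

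For the converse I would argue directly from the submodular inequality, without passing through the subdifferential. Assume $f(A) \le f(B)$ for all $B$ comparable to $A$ (i.e.\ $B \subseteq A$ or $A \subseteq B$), and let $C \subseteq V$ be arbitrary; the goal is to show $f(A) \le f(C)$. The key observation is that the submodular inequality applied to the pair $(A, C)$ introduces exactly the two sets $A \cap C$ and $A \cup C$, both of which are comparable to $A$, since $A \cap C \subseteq A$ and $A \cup C \supseteq A$. Hence the hypothesis gives $f(A) \le f(A \cap C)$ and $f(A) \le f(A \cup C)$. Combining this with submodularity,
\begin{align*}
f(A) + f(C) \ge f(A \cup C) + f(A \cap C) \ge f(A) + f(A),
\end{align*}
and cancelling one copy of $f(A)$ yields $f(C) \ge f(A)$. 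Since $C$ was arbitrary, $A$ minimizes $f$.

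An equivalent, more conceptual route would bypass the explicit computation by appealing to the preceding results: by Lemma~\ref{optsubmin}, $A$ minimizes $f$ if and only if $\mathbf{0} \in \partial_f(A)$, and substituting $x = \mathbf{0}$ into the reduced description of the subdifferential in Eqn.~\eqref{subdiffred} collapses the defining inequalities to exactly $f(Y) \ge f(A)$ for all $Y \in [\emptyset, A] \cup [A, V]$, i.e.\ for all $Y$ comparable to $A$. This is verbatim the claimed condition, so the lemma follows by chaining the two earlier results; the redundancy lemma behind Eqn.~\eqref{subdiffred} is in fact the polyhedral incarnation of the same $A \cap C$/$A \cup C$ pairing used in the direct argument.

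There is no genuine obstacle here; the only thing one must \emph{notice} is the pairing in the direct argument --- that the single submodular inequality on $(A, C)$ produces precisely the two comparable sets $A \cap C$ and $A \cup C$ to which the hypothesis applies. Once that pairing is seen, both the bookkeeping of the inequalities and the final cancellation are routine.
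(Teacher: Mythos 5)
Your proposal is correct, but your primary argument takes a genuinely different route from the paper's. The paper proves this lemma by chaining two pieces of polyhedral machinery: Lemma~\ref{optsubmin} ($A$ minimizes $f$ if and only if $\mathbf{0} \in \partial_f(A)$) together with the reduced description of the subdifferential in Eqn.~\eqref{subdiffred}, so that substituting $x = \mathbf{0}$ immediately collapses the membership condition to $f(Y) \geq f(A)$ for all $Y$ comparable to $A$ --- this is exactly the ``more conceptual route'' you sketch in your second paragraph. Your primary argument, by contrast, is a direct one-line computation: applying the submodular inequality to the pair $(A,C)$ produces exactly the two comparable sets $A \cap C$ and $A \cup C$, the hypothesis bounds both from below by $f(A)$, and cancellation gives $f(C) \geq f(A)$. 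What the direct argument buys is self-containedness: it uses submodularity exactly once and does not depend on the redundancy lemma (\cite[Lemma 6.4]{fujishige2005submodular}) behind Eqn.~\eqref{subdiffred}, whose own proof rests on the same $A \cap C$/$A \cup C$ pairing applied with a general $x$ in place of $\mathbf{0}$ (using modularity of $x$ to split $x(Y)$); your closing remark that the redundancy lemma is the ``polyhedral incarnation'' of your pairing is therefore accurate. What the paper's route buys is economy within its own development: having already established the subdifferential characterization of minimizers and the irredundant description of $\partial_f(A)$, the lemma falls out with no further work, and it situates the optimality condition inside the polyhedral framework the paper is building. Both directions of your argument are sound and complete.
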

In other words, it is sufficient to check only the subsets and supersets of $A$ to ensure that $A$ is a global optimizer of $f$. The above Lemma follows from Eqn.~\eqref{subdiffred} and Lemma~\ref{optsubmin}. Analogous characterizations have also been provided for constrained forms of submodular minimization, and interested readers may look at~\cite{fujishige1984theory}. Finally, we can provide a simple characterization on the local minimizers of a submodular function.
\begin{lemma}
A set $A \subseteq V$ is a local minimizer\footnote{A set $A$ is a local minimizer of a submodular function if $f(X) \geq f(A), \forall X: |X \backslash A| \leq 1, \text{ and } |A \backslash X| = 1$, that is all sets $X$ no more than hamming-distance one away from $A$.} of a submodular function if and only if $\textbf{0} \in \partial_f^{\symmdiff(1, 1)}(A)$. 
\end{lemma}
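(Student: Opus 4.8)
The plan is to prove the equivalence by substituting $x = \mathbf{0}$ directly into the definition of the local subdifferential $\partial_f^{\symmdiff(1, 1)}(A)$ from Eqn.~\eqref{localsub} and observing that the resulting $n$ inequalities coincide exactly with the defining conditions of a local minimizer. Notably, no appeal to submodularity, to the redundancy Lemma, or to any extreme-point structure is needed: once both sides are unpacked, the statement is essentially a restatement of definitions.

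First I would write out what $\mathbf{0} \in \partial_f^{\symmdiff(1, 1)}(A)$ means. By Eqn.~\eqref{localsub}, $\partial_f^{\symmdiff(1, 1)}(A)$ consists of those $x \in \mathbb{R}^V$ with $f(j \mid A \backslash j) \leq x(j)$ for every $j \in A$ and $f(j \mid A) \geq x(j)$ for every $j \notin A$. Setting $x(j) = 0$ for all $j$, these become $f(j \mid A \backslash j) \leq 0$ for $j \in A$ and $f(j \mid A) \geq 0$ for $j \notin A$; that is, $f(A) - f(A \backslash j) \leq 0$ and $f(A \cup j) - f(A) \geq 0$.

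Second, I would observe that these two families of inequalities are exactly $f(A) \leq f(A \backslash j)$ for each $j \in A$ and $f(A) \leq f(A \cup j)$ for each $j \notin A$. These are precisely the statements that $f(A) \leq f(X)$ for every set $X$ with $|X \symmdiff A| = 1$, since such an $X$ is either of the form $A \backslash j$ (one removal, indexed by $j \in A$) or $A \cup j$ (one addition, indexed by $j \notin A$). Adjoining the trivial case $|X \symmdiff A| = 0$, which reads $f(A) \leq f(A)$, this is exactly the footnote's definition of a local minimizer. Both implications then follow at once, as each is obtained by reading the same chain of equivalences in the two directions.

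The only point requiring care is bookkeeping rather than any genuine obstacle: matching the Hamming-distance-one neighborhood in the definition of local minimizer with the $n$ marginal constraints defining $\partial_f^{\symmdiff(1, 1)}(A)$, and confirming that the map sending a neighbor $X$ of $A$ to its associated constraint is a bijection (removals $\leftrightarrow$ $j \in A$, additions $\leftrightarrow$ $j \notin A$). Since this correspondence is exact and exhaustive, the equivalence holds verbatim.
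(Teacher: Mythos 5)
Your proposal is correct and matches the paper's intent: the paper states this lemma without any proof, treating it as immediate from definitions, and your argument---substituting $x = \mathbf{0}$ into the definition of $\partial_f^{\symmdiff(1, 1)}(A)$ in Eqn.~\eqref{localsub} and matching the resulting $n$ marginal inequalities bijectively with the Hamming-distance-one neighbors of $A$---is precisely that immediate argument. Your remark that submodularity is never invoked is also accurate; the equivalence holds verbatim for arbitrary set functions.
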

As was shown in \cite{rkiyersemiframework2013}, a local minimizer of a submodular function, in the unconstrained setting, can be found efficiently in $O(n^2)$ complexity. 

While unconstrained submodular minimization is easy, most forms of constrained submodular minimization become NP hard. For example, a simple cardinality lower bound constraint makes the problem of submodular minimization (even with monotone submodular functions) NP hard without even constant factor approximation guarantees~\cite{svitkina2008submodular}. These results, however, can be extended when the constraints are lattice constraints~\cite{fujishige2005submodular} in which case many of the results above still hold.

\section{Convex Characterizations: Discrete Separation Theorem and Fenchel Duality Theorem}
\label{submodfdtdstmin}

We next review some interesting theorems that characterize convex
functions, and that interestingly also hold for submodular functions.

\subsection{The Discrete Separation Theorem (DST)}
\label{sec:discr-seper-theor}

The separation theorem \cite{rockafellar1970convex}, known in context
of convexity, states that given a convex function $\phi$ and a concave
function $\psi$ such that $\forall x, \phi(x) \geq \psi(x)$, there
exists an affine function $\langle h, x \rangle + c$ such that $\forall
x, \psi(x) \geq \langle h, x \rangle + c \geq \psi(x)$. 

A similar relation holds for submodular functions. The lemma below was
shown by Frank~\cite{frank1982algorithm} and has become known
as the {\em discrete separation theorem} (DST):
\begin{lemma}~\cite{frank1982algorithm}, \cite[Theorem 4.12]{fujishige2005submodular}
Given a submodular function $f$ and a supermodular function $g$ such that $f(X) \geq g(X), \forall X$ (and which satisfy $f(\emptyset) = g(\emptyset) = 0$), there exists a modular function $h$ such that $f(X) \geq h(X) \geq g(X)$. Furthermore, if $f$ and $g$ are integral so may be $h$.
\label{lemma:orig_dst}
\end{lemma}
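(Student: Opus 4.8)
The plan is to recast the statement polyhedrally and then obtain the modular separator from the continuous separation theorem applied to the convex and concave extensions. Write a candidate modular function as $h(X) = \langle h, 1_X\rangle$ for a vector $h \in \mathbb R^V$. The two requirements $h(X) \leq f(X)$ and $h(X) \geq g(X)$ for all $X$ say precisely that $h \in \mathcal P_f$ and $h \in \mathcal P_g$, where for the supermodular $g$ we use the supermodular polyhedron $\mathcal P_g \triangleq \{x \in \mathbb R^V : x(X) \geq g(X),\ \forall X \subseteq V\}$ in the sense recalled in Section~\ref{sec:subm-subd}. Thus the theorem is equivalent to the claim that $\mathcal P_f \cap \mathcal P_g \neq \emptyset$, and any point of this intersection is a valid separating modular function.

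For the existence (over $\mathbb R$) part, I would use the convex extension $\lex{f}$ of $f$ from Lemma~\ref{lovaszextlemma} together with the concave extension $\check{g}$ of $g$ (the supermodular counterpart of the \lovasz{} extension, defined by the same chain construction). The key observation is that, for $w \in [0,1]^n$, Lemma~\ref{lovaszextlemma} expresses $\lex{f}(w) = \sum_S \lambda_S f(S)$ as a chain distribution $\{\lambda_S\}$ determined solely by the sorted order of $w$, and the \emph{same} distribution computes $\check{g}(w) = \sum_S \lambda_S g(S)$. Since $f(S) \geq g(S)$ for every $S$ and $\lambda_S \geq 0$, this gives a termwise inequality
\begin{align}
\lex{f}(w) = \sum_S \lambda_S f(S) \geq \sum_S \lambda_S g(S) = \check{g}(w), \qquad \forall\, w \in [0,1]^n .
\end{align}
Hence on $[0,1]^n$ the convex function $\lex{f}$ dominates the concave function $\check{g}$, so the continuous separation theorem stated at the start of this section (see also~\cite{rockafellar1970convex}) supplies an affine function $\langle h, w\rangle + c$ with $\lex{f}(w) \geq \langle h, w\rangle + c \geq \check{g}(w)$. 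Evaluating at $w = 1_\emptyset = 0$ and using $\lex{f}(0) = f(\emptyset) = 0 = g(\emptyset) = \check{g}(0)$ forces $c = 0$, so the separator is genuinely modular. Evaluating at the vertices $w = 1_X$ and using $\lex{f}(1_X) = f(X)$ and $\check{g}(1_X) = g(X)$ then yields $f(X) \geq h(X) \geq g(X)$ for all $X$, as required.

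The integrality claim is where the real difficulty lies, since the separation step produces only a real vector $h$. To get an integral $h$ when $f$ and $g$ are integral, I would show that $\mathcal P_f \cap \mathcal P_g$ is an integral polyhedron: the defining system $\{x(X) \leq f(X)\} \cup \{x(X) \geq g(X)\}$ is box-totally-dual-integral, being exactly the constraint system of a polymatroid intersection, for which Edmonds' intersection theorem guarantees integral vertices whenever $f$ and $g$ are integral; since the intersection is nonempty by the previous paragraph, it then contains an integral point. A more self-contained alternative is a rounding/uncrossing argument starting from the real $h$ already found, rounding one fractional coordinate at a time and invoking the submodularity of $f$ and supermodularity of $g$ on the tight sets to preserve feasibility in $\mathcal P_f \cap \mathcal P_g$. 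I expect the termwise comparison and the reduction to convex/concave separation to be routine given the chain structure in Lemmas~\ref{greedy} and~\ref{lovaszextlemma}; the genuinely delicate point, deserving the most care, is the integrality statement, which rests on the total dual integrality of intersecting submodular and supermodular systems rather than on the separation argument itself.
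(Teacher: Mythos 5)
Your main argument is essentially identical to the paper's own proof: the paper likewise constructs the convex extension $\lex{f}$ of $f$ and the concave extension $\cex{g}$ of $g$ (obtained via the \lovasz{} extension of $-g$), observes that $\lex{f}(w) \geq \cex{g}(w)$ for all $w$, invokes the continuous separation theorem from convex analysis, and reads off the modular function $h$ by restricting the affine separator to $0/1$ vectors. You supply two details the paper leaves implicit --- the termwise chain-distribution comparison establishing the pointwise domination (valid, since the maximizing distribution for the concave extension of a supermodular function is the same chain distribution determined by the sorted order of $w$), and the evaluation at $w = 0$ forcing $c = 0$ --- and both are correct.

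The one place you genuinely diverge is the integrality claim. The paper's proof sketch does not address it at all (it is inherited from the cited sources, Frank and Fujishige), whereas you attempt it via integrality of $\mathcal P_f \cap \mathcal P_g$. That route is viable in spirit, but your characterization of the system $\{x(X) \leq f(X)\} \cup \{x(X) \geq g(X)\}$ as ``exactly the constraint system of a polymatroid intersection'' is loose: it mixes submodular upper bounds with supermodular lower bounds, so it is not literally the intersection of two polymatroids, and its total dual integrality is precisely the content of Frank's discrete-separation / generalized-polymatroid theory --- that is, essentially the theorem you are proving. To avoid circularity you would need to carry out an actual reduction (e.g., reflecting the supermodular system through $x(V)$ on a base polytope so that Edmonds' intersection theorem applies, or completing the uncrossing/rounding argument you mention only in passing). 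As written, the integrality part is a plausible plan rather than a proof, though it is still more than the paper itself offers.
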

This Lemma can also be shown using the \lovasz{} extension. In
particular, given a submodular function $f$ and a supermodular
function $g$ such that $f(X) \geq g(X), \forall X$, we can construct
the convex and concave extensions $\lex f$ and $\cex g$ of $f$ and $g$
(the concave extension $\cex {g}$ can be constructed via the \lovasz{}
extension of $-g$). From the expressions of $\lex f$ and $\cex g$, it
is not hard to see that $\lex f(x) \geq \cex g(x), \forall x$. Hence
using the separation theorem from convex analysis, we can
find a linear function $h$, which when restricted to 0/1 vectors,
gives the modular function $h$.

The DST is one of the results that shows how submodular functions are
analogous to convex functions. Surprisingly, we will show in
Section~\ref{sec:discr-seper-theor-concave} that a form of opposite
(and slightly restricted) DST also holds for submodular functions that
relates submodularity to concave functions.

\subsection{Fenchel Duality Theorem (FDT)}  
\label{sec:fench-dual-theor}

The Fenchel duality theorem in the context of convexity \cite{rockafellar1970convex}
provides a relation between the minimizers of the function and it is dual. Given a convex function $\phi$ and a concave function $\psi$, the Fenchel dual $\phi^*$ of
$\phi$, and $\psi^*$ of $\psi$, is given as follows:
\begin{align}
\phi^*(y) = \max_{x \in \text{dom}(\phi)} [\langle x, y \rangle - \phi(x)]
\qquad 
\text{ and }
\qquad 
\psi^*(y) = \min_{x \in \text{dom}(\psi)} [\langle x, y \rangle - \psi(y)].
\end{align}
The dual functions $\phi^*$ and $\psi^*$ are convex and concave respectively. The Fenchel duality theorem then states that:
\begin{align}
\min_x [\phi(x) - \psi(x)] = \max_y [\psi^*(y) - \phi^*(y)]
\end{align}

Analogous characterizations also hold for submodular functions~\cite{fujishige1984theory}. Given a submodular function $f$ (or equivalently supermodular function $g$), 
the Fenchel dual $f^*$ of $f$,
and $g^*$ of $g$, are defined as follows:
\begin{align}
f^*(x) = \max_{X \subseteq V} [x(X) - f(X)]
\qquad
\text{ and }
\qquad
g^*(x) = \min_{X \subseteq V} [x(X) - g(X)].
\end{align}
The Fenchel duals $f^*$ and $g^*$ are convex and concave functions respectively. 
Then the following Lemma for submodular functions, analogous to
the case for convex and concave functions, holds:
\begin{lemma}~(\cite[Theorem 6.3]{fujishige2005submodular})
Given a submodular function $f$ and a supermodular function $g$,
\begin{align}
\min_{X \subseteq V} [f(X) - g(X)] = \max_x [g^*(x) - f^*(x)].
\end{align}
Further if $f$ and $g$ are integral, the maximum on the right hand side is attained by an integral vector $x$.
\end{lemma}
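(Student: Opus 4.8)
The plan is to prove the two inequalities separately, establishing \textbf{weak duality} first and then \textbf{strong duality} by reducing it to the discrete separation theorem (Lemma~\ref{lemma:orig_dst}).

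\emph{Weak duality.} First I would observe that for every $x \in \mathbb{R}^V$ and every $X \subseteq V$, the definitions give $g^*(x) \le x(X) - g(X)$ and $f^*(x) \ge x(X) - f(X)$. Subtracting these yields $g^*(x) - f^*(x) \le f(X) - g(X)$. Since this holds simultaneously for all $X$ and all $x$, taking the minimum over $X$ on the right and the maximum over $x$ on the left gives $\max_x [g^*(x) - f^*(x)] \le \min_X [f(X) - g(X)]$.

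\emph{Reduction for strong duality.} Set $\beta = \min_X [f(X) - g(X)]$; because $f(\emptyset) = g(\emptyset) = 0$ we have $\beta \le 0$. I would then claim it suffices to exhibit a normalized modular function $x$ satisfying $g(X) + \beta \le x(X) \le f(X)$ for all $X$. Indeed, the upper bound $x(X) \le f(X)$ together with tightness at $X = \emptyset$ forces $f^*(x) = \max_X[x(X) - f(X)] = 0$, while the lower bound gives $g^*(x) = \min_X[x(X) - g(X)] \ge \beta$. Hence $g^*(x) - f^*(x) \ge \beta$, which combined with weak duality pins the optimum at $\beta$ and identifies this $x$ as a maximizer.

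\emph{Producing the separating modular function.} The remaining task is to construct such an $x$, and this is where I would invoke the DST. The natural candidate pair is $f$ and $g + \beta$: since $f(X) - g(X) \ge \beta$ everywhere, we have $f \ge g + \beta$. The main obstacle is that $g + \beta$ is \emph{not} normalized, as $(g+\beta)(\emptyset) = \beta \ne 0$, so Lemma~\ref{lemma:orig_dst} does not apply verbatim. I would fix this by defining $G(X) = g(X) + \beta$ for $X \ne \emptyset$ and $G(\emptyset) = 0$, then checking that $G$ remains supermodular (the only genuinely delicate cases are those with $A \cap B = \emptyset$, where supermodularity of $g$ together with $\beta \le 0$ suffices) and that $f \ge G$ with both functions normalized. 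Applying the DST to $f$ and $G$ then yields a modular $h(X) = x(X)$ with $G \le h \le f$, which is exactly the separating function required above.

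\emph{Integrality.} Finally, if $f$ and $g$ are integral then $\beta$ is integral, $G$ is integral, and the integral version of the DST furnishes an integral separating modular function, so the maximizer $x$ may be taken integral. The hard part will be the supermodularity check for $G$ at disjoint nonempty pairs; everything else is routine bookkeeping with the definitions of $f^*$ and $g^*$.
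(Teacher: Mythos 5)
Your proof is correct. Note, however, that the paper itself gives no proof of this lemma at all: it appears in the review portion (Section~\ref{sec:fench-dual-theor}) and is quoted as a known result from Fujishige [Theorem 6.3], so there is no in-paper argument to compare against line by line. Your route --- weak duality straight from the definitions, then strong duality by manufacturing the optimal dual vector via the discrete separation theorem --- is the classical derivation, and it is in fact the very implication (DST $\Rightarrow$ FDT) that the paper itself invokes later on the concave side, in Section~\ref{sec:superd-fench-dual}, where the concave Fenchel duality theorem is deduced from the concave DST by citing the Fujishige--Narayanan result. Your treatment of the one genuine obstacle (that $g+\beta$ is not normalized, so Lemma~\ref{lemma:orig_dst} cannot be applied verbatim) is sound: with $G(\emptyset)=0$ and $G(X)=g(X)+\beta$ for $X\neq\emptyset$, the only nontrivial supermodularity instances are disjoint nonempty $A,B$, where the required inequality reads $g(A)+g(B)+2\beta \le g(A\cup B)+\beta$, i.e., $g(A)+g(B)+\beta \le g(A\cup B)$; this follows from $g(A)+g(B)\le g(A\cup B)$ (supermodularity of $g$ plus $g(\emptyset)=0$) together with $\beta\le 0$, exactly as you indicate. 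Applying the DST to $f$ and $G$ squeezes $h(\emptyset)$ between $f(\emptyset)=G(\emptyset)=0$, so $h$ is a normalized modular function, i.e., a vector $x$, and your bookkeeping $f^*(x)=0$, $g^*(x)\ge\beta$ pins the optimum at $\beta$ and shows it is attained; integrality passes through since $\beta$, hence $G$, is integral and the DST preserves integrality. The only cosmetic caveat is that the lemma's statement does not explicitly say $g(\emptyset)=0$; your argument uses this, but it is covered by the paper's blanket normalization convention and is in any case without loss of generality, since shifting $g$ by a constant shifts both sides of the duality identity equally.
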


\subsection{The Minkowski sum theorem} 
\label{sec:mink-sum-theor}

Submodular polyhedra and also the subdifferentials have an interesting
characterization related to Minkowski sums of polyhedra, namely $P+Q
\triangleq \{x+y : x \in P \text{ and } y \in Q \}$ for polyhedra $P$
and $Q$.
\begin{lemma}(\cite[Theorem 6.8]{fujishige2005submodular})
Given two submodular functions $f_1$ and $f_2$, it holds that the addition of the polyhedra corresponds to a point-wise addition. That is:
\begin{align}
\mathcal P_{f_1 + f_2} = \mathcal P_{f_1} + \mathcal P_{f_2},
\qquad
\text{ and more generally, }
\quad \partial_{f_1 + f_2}(X) = \partial_{f_1}(X) + \partial_{f_2}(X)
\end{align}
Similarly it holds that $\mathcal P^{\text{gen}}_{f_1 + f_2} = \mathcal P^{\text{gen}}_{f_1} + \mathcal P^{\text{gen}}_{f_2}$.

\end{lemma}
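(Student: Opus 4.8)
The plan is to prove each of the three identities by splitting into two inclusions, noting that one direction is elementary and holds for arbitrary (not necessarily submodular) set functions, while the reverse inclusion is where submodularity is essential. For the submodular-polyhedron identity, the inclusion $\mathcal P_{f_1} + \mathcal P_{f_2} \subseteq \mathcal P_{f_1+f_2}$ is immediate: if $x_i(S) \le f_i(S)$ for all $S$ and $i=1,2$, then $(x_1+x_2)(S) \le (f_1+f_2)(S)$ for all $S$, so $x_1 + x_2 \in \mathcal P_{f_1+f_2}$. The content lies in the reverse inclusion, and rather than constructing an explicit decomposition $x = x_1 + x_2$ of a given $x \in \mathcal P_{f_1+f_2}$, I would argue through support functions.

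Both $\mathcal P_{f_1+f_2}$ and $\mathcal P_{f_1}+\mathcal P_{f_2}$ are nonempty closed convex polyhedra (a Minkowski sum of two polyhedra is again a polyhedron), so it suffices to show their support functions $w \mapsto \max_{x}\langle w, x\rangle$ agree on all of $\mathbb{R}^n$; since the support function of a Minkowski sum is the sum of the support functions, this reduces to checking $\max_{x \in \mathcal P_{f_1+f_2}}\langle w, x\rangle = \max_{x\in\mathcal P_{f_1}}\langle w, x\rangle + \max_{x\in\mathcal P_{f_2}}\langle w, x\rangle$. The key observation is that the greedy permutation $\perm_w$ in Lemma~\ref{greedy} depends only on the ordering of the entries of $w$ and not on $f$. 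Hence for $w \in \mathbb R^n_+$ the greedy value $\sum_i w(\perm_w(i))[f(S^{\perm_w}_i) - f(S^{\perm_w}_{i-1})]$ is additive in $f$ along the common chain, which gives the identity; for $w$ with a negative coordinate both sides are $+\infty$ (the submodular polyhedron is unbounded below along that coordinate), so they agree there too.

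For the subdifferential identity I would reduce to the polyhedral case using Lemma~\ref{dirprod}, which writes $\partial_f(X) = \partial_{f^X}(X) \times \partial_{f_X}(\emptyset)$. Restriction and contraction are additive in $f$, that is $(f_1+f_2)^X = f_1^X + f_2^X$ and $(f_1+f_2)_X = (f_1)_X + (f_2)_X$, as is the dual $g^{\#}(Y) = g(V) - g(V\setminus Y)$. Using the stated facts $\partial_g(\emptyset) = \mathcal P_g$ and $\partial_g(V) = \mathcal P_{g^{\#}}$ together with the already-proved polyhedral identity, each of the two factors decomposes as a Minkowski sum; the elementary set identity $(A_1 + A_2)\times(B_1 + B_2) = (A_1\times B_1) + (A_2\times B_2)$ then combines the factors to yield $\partial_{f_1+f_2}(X) = \partial_{f_1}(X) + \partial_{f_2}(X)$. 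The generalized-polyhedron identity follows the same support-function template: the easy inclusion again holds for arbitrary functions, and for the reverse inclusion Lemma~\ref{gensubpolysubpolyrel} shows that for a direction $(w,t)$ with $t>0$ the support function of $\mathcal P^{\text{gen}}_f$ equals $\max_{x\in\mathcal P_f}\langle w, x\rangle$ (the factor $t$ cancels by positive homogeneity), which we have just shown is additive in $f$, while the directions with $t \le 0$ give matching $+\infty$ or trivial values.

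I expect the delicate point to be the justification that matching support functions implies set equality. This requires, first, that both sides are genuinely closed, which is why the argument appeals to the fact that a Minkowski sum of polyhedra is a polyhedron, and second, careful handling of the unbounded directions where the support function takes the value $+\infty$, so that equality of the extended-real-valued support functions is verified on all of $\mathbb{R}^n$ (respectively $\mathbb{R}^{n+1}$), not merely on the orthant where the greedy formula of Lemma~\ref{greedy} applies.
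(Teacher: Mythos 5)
Your proposal is correct, but note that the paper itself does not actually prove this lemma: the first two identities are imported wholesale from Fujishige \cite[Theorem~6.8]{fujishige2005submodular}, and the generalized-polyhedron identity is dispatched with the single remark that it ``follows directly from the definition.'' So your argument is a genuinely different, self-contained route. Your treatment of $\mathcal P_{f_1+f_2} = \mathcal P_{f_1} + \mathcal P_{f_2}$ is the standard Edmonds-style argument --- the greedy permutation of Lemma~\ref{greedy} depends only on $w$, so support functions are additive on the nonnegative orthant and jointly $+\infty$ off it --- and your attention to closedness (a Minkowski sum of polyhedra is a polyhedron) and to unbounded directions is exactly what makes the support-function criterion legitimate. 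Your handling of $\mathcal P^{\text{gen}}_{f_1+f_2}$ via Lemma~\ref{gensubpolysubpolyrel} and positive homogeneity supplies precisely the content the paper's one-line remark elides: only the inclusion $\mathcal P^{\text{gen}}_{f_1} + \mathcal P^{\text{gen}}_{f_2} \subseteq \mathcal P^{\text{gen}}_{f_1+f_2}$ is definitional, and the reverse direction needs an argument like yours. One point does need patching in your subdifferential step: after Lemma~\ref{dirprod}, the restriction factor $\partial_{f^X}(X)$ is, by the paper's own discussion, equal to $\mathcal P_{(f^X)^{\#}}$, the polyhedron of a \emph{supermodular} function defined by $\geq$-constraints, so ``the already-proved polyhedral identity'' does not apply to it verbatim; you must either observe that $\mathcal P_g = -\mathcal P_{-g}$ for supermodular $g$ (with $-g$ submodular, and negation commuting with Minkowski sums) or run the greedy/support-function argument symmetrically for supermodular polyhedra. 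The contraction factor $\partial_{f_X}(\emptyset) = \mathcal P_{f_X}$ is fine as written. With that one-sentence fix your proof is complete, and it has the advantage of being verifiable within the paper; what the citation-based approach buys instead is brevity, at the price of leaving the generalized statement essentially unjustified.
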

The Minkowski sum theorem for the generalized submodular polyhedron
follows directly from the definition.

\section{Concave Polyhedral Aspects of Submodular Functions}
\label{polysubconcave}

We next investigate several polyhedral aspects of submodular functions
relating them to concavity, thus complementing the results from
Section~\ref{polysubconvex}. This provides a complete picture on the
relationship between submodularity, convexity, and concavity. We
define and investigate the submodular upper polyhedron, submodular
superdifferential, and the generalized submodular upper polyhedron.

\subsection{The submodular upper polyhedron}
\label{uppersubpolysec}

\begin{figure}
\centering
\includegraphics[page=1,width = 0.2\textwidth]{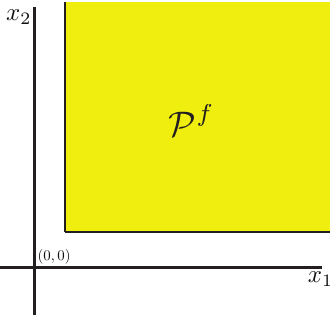}
\caption{The submodular upper Polyhedron $\mathcal P^f$ in two dimensions.}
\end{figure}

A first step in characterizing the concave aspects of a submodular function is the submodular upper polyhedron. Intuitively this is the set of tight modular upper bounds of the function,
and we define it as follows:
\begin{align}
\mathcal P^f \triangleq \{x \in \mathbb{R}^n: x(S) \geq f(S), \forall S \subseteq V\}
\label{eqn:submodular_upper_polyhedron}
\end{align}
The above polyhedron can in fact be defined for any set function. In
particular, when $f$ is supermodular, we get what is known as the
\emph{supermodular polyhedron} \cite{fujishige2005submodular}.
Presently, we are interested in the case when $f$ is submodular and
hence we call this the \emph{submodular upper polyhedron}, a construct
that is quite different than the supermodular polyhedron.

Interestingly, submodular upper polyhedron has a very simple
characterization due to the submodularity of $f$. We have
the following:
\begin{lemma}\label{antisubpoly}
Given a submodular function $f$, 
\begin{align}
\mathcal P^f = \{x \in \mathbb{R}^n: x(j) \geq f(j)\}
\label{eqn:submodular_upper_polyhedron_n_inequalities}
\end{align}
\end{lemma}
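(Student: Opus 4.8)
The plan is to prove the two inclusions separately, with the forward inclusion being immediate and the reverse inclusion relying on subadditivity, which is the only place submodularity enters. The inclusion $\mathcal P^f \subseteq \{x : x(j) \geq f(\{j\})\}$ is trivial: the right-hand side is cut out by the subset of the constraints $x(S) \geq f(S)$ obtained by letting $S = \{j\}$ range over singletons, so any $x$ satisfying all the defining constraints in particular satisfies the singleton ones.

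The substance is the reverse inclusion: I want to show that if $x(j) \geq f(\{j\})$ for every $j \in V$, then $x(S) \geq f(S)$ for every $S \subseteq V$. The key fact is that a normalized submodular function is subadditive, i.e. $f(S) \leq \sum_{j \in S} f(\{j\})$. To establish this, I would fix an arbitrary ordering $S = \{j_1, j_2, \dots, j_k\}$ and telescope as $f(S) = \sum_{i=1}^k f(j_i \mid \{j_1, \dots, j_{i-1}\})$. By the diminishing returns property, since $\emptyset \subseteq \{j_1, \dots, j_{i-1}\}$ and $j_i \notin \{j_1, \dots, j_{i-1}\}$, each marginal obeys $f(j_i \mid \{j_1, \dots, j_{i-1}\}) \leq f(j_i \mid \emptyset) = f(\{j_i\})$, using $f(\emptyset) = 0$. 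Summing over $i$ gives the subadditivity bound.

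With subadditivity in hand, the reverse inclusion follows in one line: for any $S$ we have $x(S) = \sum_{j \in S} x(j) \geq \sum_{j \in S} f(\{j\}) \geq f(S)$, where the first inequality applies the assumed singleton bound termwise and the second is subadditivity. Combining the two inclusions yields the claimed equality.

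There is no serious obstacle here; the entire content is the recognition that diminishing returns collapses the exponentially many defining inequalities down to the $n$ singleton inequalities. The only point requiring mild care is the use of normalization $f(\emptyset) = 0$ to identify $f(j \mid \emptyset)$ with $f(\{j\})$, which is assumed throughout the paper. I would remark that this stands in sharp contrast to the submodular lower polyhedron $\mathcal P_f$, whose facet structure genuinely requires all subsets, which underscores the asymmetry between the convex and concave pictures that this section aims to develop.
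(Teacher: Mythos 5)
Your proposal is correct and follows essentially the same route as the paper's proof: both reduce the exponentially many constraints to the singleton ones via the subadditivity bound $f(S) \leq \sum_{j \in S} f(\{j\})$, which follows from submodularity and normalization. The only difference is cosmetic --- you spell out the telescoping argument for subadditivity and separate the two inclusions explicitly, whereas the paper asserts the subadditivity inequality directly.
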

\begin{proof}
Given $x \in \mathcal P^f$ and a set $S$, 
we have 
$x(S) = \sum_{i \in S} x(i) \geq \sum_{i \in S} f(i)$, since $\forall i, x(i) \geq f(i)$
by Eqn.~\eqref{eqn:submodular_upper_polyhedron}.
Hence $x(S) \geq \sum_{i \in S} f(i) \geq f(S)$. Thus, the irredundant inequalities 
are the singletons.
\end{proof}
The lemma states that this polyhedron is not \emph{polyhedrally tight}
in that the vast majority of the defining inequalities are redundant.
Unlike the submodular lower polyhedron, the submodular upper
polyhedron is not particularly interesting or useful for defining a
concave extension. We shall, however, define a generalization of the submodular
upper polyhedron in Section~\ref{sec:gener-subm-upper} that
will prove quite useful in characterizing, and providing
approximations to, the concave extension of $f$.

We end this section by investigating the submodular upper polyhedron
membership problem. Owing to its simplicity, this problem is
particularly simple which might seem surprising at first glance since,
from Eqn.~\eqref{eqn:submodular_upper_polyhedron}, the problem of
checking $x \in \mathcal P^f$ is equivalent to checking if $\max_{X
  \subseteq V} f(X) - x(X) \leq 0$. In general, this would involve the maximization of a
submodular function which is NP hard.  The following lemma shows that
this particular problem is actually easy.
\begin{corollary}
Given a submodular function $f$ and vector $x$, let $X$ be a set such that $f(X) - x(X) > 0$. Then there exists an $i \in X: f(i) - x(i) > 0$.
\end{corollary}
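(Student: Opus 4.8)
The plan is to prove the contrapositive of the statement. The corollary asserts that if there is a set $X$ witnessing membership failure (i.e., $f(X) - x(X) > 0$), then some singleton $i \in X$ already witnesses it ($f(i) - x(i) > 0$). Equivalently, I would show that if $f(i) - x(i) \leq 0$ for \emph{every} $i \in X$, then $f(X) - x(X) \leq 0$, contradicting the hypothesis that $f(X) - x(X) > 0$.

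First I would assume, for contradiction, that $f(i) \leq x(i)$ for all $i \in X$. Summing this inequality over all $i \in X$ gives $\sum_{i \in X} f(i) \leq \sum_{i \in X} x(i) = x(X)$. Next I would invoke submodularity to bound $f(X)$ from above by its singleton decomposition. Specifically, for a submodular normalized function, repeated application of diminishing returns (or the defining submodular inequality) yields the subadditive-type bound $f(X) \leq \sum_{i \in X} f(i)$, exactly as used in the proof of Lemma~\ref{antisubpoly}. Chaining these two inequalities gives $f(X) \leq \sum_{i \in X} f(i) \leq x(X)$, hence $f(X) - x(X) \leq 0$, which contradicts $f(X) - x(X) > 0$. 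Therefore some $i \in X$ must satisfy $f(i) - x(i) > 0$, as claimed.

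I do not expect any serious obstacle here, since the result is essentially an immediate consequence of the same subadditivity argument already established in Lemma~\ref{antisubpoly}: that lemma shows $x(S) \geq f(S)$ reduces to the singleton constraints precisely because $f(S) \leq \sum_{i \in S} f(i)$ for submodular $f$. The only point requiring a small amount of care is justifying the inequality $f(X) \leq \sum_{i \in X} f(i)$, which follows from submodularity together with $f(\emptyset) = 0$ by a standard telescoping argument over an ordering of the elements of $X$. One could alternatively cite Lemma~\ref{antisubpoly} directly and read the corollary as simply the restatement that the irredundant defining inequalities of $\mathcal P^f$ are the singletons, so that a violation at $X$ forces a violation at some element of $X$.
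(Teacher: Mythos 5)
Your proposal is correct and is essentially the paper's own argument in contrapositive form: the paper writes $f(X) - x(X) \leq \sum_{i \in X} [f(i) - x(i)]$ (using the same subadditivity bound $f(X) \leq \sum_{i \in X} f(i)$ from Lemma~\ref{antisubpoly}) and concludes that a positive sum forces some positive term, which is logically the same pigeonhole step you perform by contradiction. No gaps; the justification of subadditivity via submodularity and $f(\emptyset)=0$ is exactly the ingredient the paper relies on as well.
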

\begin{proof}
  Observe that $f(X) - x(X) \leq \sum_{i \in X} f(i) - x(i)$. Since
  the l.h.s.\ is greater than $0$, it implies that $\sum_{i \in X} f(i)
  - x(i) > 0$. Hence there should exist an $i \in X$ such that $f(i) -
  x(i) > 0$.
\end{proof}
Thus, it is sufficient to check the singleton values, i.e $f(i) -
x(i)$, and if all these are less than or equal to zero, then $x \in
\mathcal P^f$. This also follows immediately from
Lemma~\ref{antisubpoly}.

An interesting corollary of the above is that it is easy to
check if the maximizer of a submodular function is greater than or equal
to zero. Given a submodular function $f$, the problem is whether
$\max_{X \subseteq V} f(X) \geq 0$. This can easily be checked without
resorting to submodular function maximization.
\begin{corollary}
Given a submodular function $f$ with $f(\emptyset) = 0$, $\max_{X \subseteq V} f(X) > 0$ if and only if there exists an $i \in V$ such that $f(i) > 0$.
\end{corollary}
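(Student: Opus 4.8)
The plan is to prove the two directions of the biconditional separately, with the forward (``only if'') direction being the one that genuinely uses submodularity, and the reverse (``if'') direction being immediate.

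The (``if'') direction is trivial: if some $i \in V$ satisfies $f(i) > 0$, then since $\{i\} \subseteq V$ we have $\max_{X \subseteq V} f(X) \geq f(\{i\}) = f(i) > 0$. No structural property of $f$ is needed here beyond the fact that the singleton $\{i\}$ is one of the feasible sets in the maximization.

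For the (``only if'') direction, suppose $\max_{X \subseteq V} f(X) > 0$ and let $X$ be a maximizer, so that $f(X) > 0$; note $X \neq \emptyset$ since $f(\emptyset) = 0$. The cleanest route is to invoke the preceding corollary with the choice $x = \mathbf{0}$: that corollary asserts that whenever $f(X) - x(X) > 0$ there exists an $i \in X$ with $f(i) - x(i) > 0$, and at $x = \mathbf{0}$ this reads exactly as ``$f(X) > 0$ implies there is an $i \in X$ with $f(i) > 0$.'' Equivalently, one can argue directly from subadditivity: a normalized submodular function is subadditive, since taking disjoint $S, T$ in $f(S) + f(T) \geq f(S \cup T) + f(S \cap T)$ and using $f(\emptyset) = 0$ gives $f(S) + f(T) \geq f(S \cup T)$; iterating this bound yields $f(X) \leq \sum_{i \in X} f(i)$, so $f(X) > 0$ forces $\sum_{i \in X} f(i) > 0$ and hence $f(i) > 0$ for at least one $i \in X$.

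There is no real obstacle here; the argument is a direct specialization of the previous corollary. The only point requiring a moment's care is that strictness transfers correctly through the singleton-sum bound $f(X) \leq \sum_{i \in X} f(i)$: a sum whose every term is nonpositive cannot itself be positive, so a positive sum forces at least one strictly positive term. I would state this explicitly to make clear that the ``$>$'' in the hypothesis propagates to the conclusion without being weakened to ``$\geq$.''
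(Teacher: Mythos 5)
Your proof is correct and follows essentially the same route as the paper: both directions rest on the subadditivity bound $f(X) \leq \sum_{i \in X} f(i)$ for normalized submodular functions, which is exactly the content of the preceding corollary you invoke (the paper phrases the ``only if'' direction contrapositively, but the argument is identical). No gaps.
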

\begin{proof}
If for any $j$, $f(j) > 0$ it implies that $\max_{X \subseteq V} f(X) \geq f(j) > 0$. On the other hand, if $\forall j, f(j) \leq 0$, we have that $\forall X \subseteq V, f(X) \leq \sum_{i \in X} f(i) \leq 0$. Hence $\max_{X \subseteq V} f(X) = 0$.
\end{proof}

This fact is true only for a submodular function. For general set
functions, even when $f(\emptyset) = 0$, it could potentially require
an exponential cost search to determine if $\max_{X \subseteq V} f(X)
> 0$.

\subsection{The Submodular Superdifferentials}
\label{sec:subm-superd}

Given a submodular function $f$, we can characterize its
superdifferentials that constitute a partition of $\mathbb R^n$.  Given
any $X \subseteq V$, we denote superdifferential with respect to $X$
as $\partial^f(X)$ and define it as follows:
\begin{equation}
\label{supdiff-def}
\partial^f(X) \triangleq \{x \in \mathbb{R}^n: f(Y) - x(Y) \leq f(X) - x(X), \forall Y \subseteq V \}
\end{equation}
This characterization is analogous to the subdifferential of a
submodular function defined in Eqn.~\eqref{submodsubdiff}. This is
also akin to the superdifferential corresponding to a continuous
concave function. Since, as we will see, submodular functions have
both a subdifferential and a superdifferential structure that are
distinct, the two names will, correspondingly, refer to distinct
constructs.

Each supergradient $g_X \in \partial^f(X)$ defines a modular upper
bound of a submodular function. In particular, define the following
modular function:
\begin{align}
m^X(Y) \triangleq f(X) + g_X(Y) - g_X(X).
\label{eq:mod_tight_upper_bound}
\end{align}
Then $m^X(Y)$ is a modular function which satisfies
$m^X(Y) \geq f(Y), \forall Y \subseteq X$ and $m^X(X) = f(X)$. This is
analogous to the submodular subdifferential in
Section~\ref{sec:subm-subd} 
(and in particular Eqn.~\eqref{eq:mod_tight_lower_bound})
where tight modular lower bounds were
produced --- here, we produce tight modular upper bounds on any
submodular function.

We note that $(x(v_1), x(v_2), \dots, x(v_n)) =
(f(v_1),f(v_2),\dots,f(v_n)) \in
\partial^f(\emptyset)$ which shows at least that $\partial^f(\emptyset)$ exists.
A bit further below (specifically Theorem~\ref{altviewsthm2}) we show that for any submodular function, $\partial^f(X)$ is non-empty
for all $X \subseteq V$.
\rishabh{Jeff}{I think that this would be a good place to offer a vector that exists in $\partial^f(X)$ for all $X$, for example $(f(v_1),f(v_2),\dots,f(v_n)) \in 
  \partial^f(X)$
  for some $X$ which shows that there are $X$ for which it is
  empty. While this is not an interesting supergradient, it at least
  offers trivial existence right away, perhaps this could be done for
  all $X$.}\rishabh{Jeff}{I've done this now in the paragraph above.}\jeff{Rishabh}{thanks Jeff! Yes, this gives a better clarity early on to the reader.}

Note that the superdifferential is defined by an exponential (i.e.,
$2^{|V|}$) number of inequalities. However owing to the submodularity
of $f$ and akin to the subdifferential of $f$, we can reduce the
number of inequalities since some of them are redundant given the others. Define three
polyhedrons as follows:
\begin{align}
\partial^f_1(X) &\triangleq \{x \in \mathbb{R}^n: f(Y) - x(Y) \leq f(X) - x(X), \forall Y \subseteq X \}, 
\label{eq:superdifferential_first_bit} \\
\partial^f_2(X) &\triangleq \{x \in \mathbb{R}^n: f(Y) - x(Y) \leq f(X) - x(X), \forall Y \supseteq X \}, 
\label{eq:superdifferential_second_bit} \\
\partial^f_3(X) &\triangleq \{x \in \mathbb{R}^n: f(Y) - x(Y) \leq f(X) - x(X), \forall Y: Y \not \subseteq X, Y \not \supseteq X\}. \label{eq:superdifferential_third_bit}
\end{align}
An immediate observation is that 
\begin{align}
\partial^f(X) = \partial^f_1(X) \cap \partial^f_2(X) \cap \partial^f_3(X).
\label{eq:super_intersection}
\end{align}
As we show below for a submodular function $f$, $\partial^f_1(X)$ and $\partial^f_2(X)$ are actually very simple polyhedra.
\begin{lemma}
For a submodular function $f$, 
\begin{align}
\partial^f_1(X) &= \{x \in \mathbb{R}^n: f(j | X \backslash j) \geq x(j), \forall j \in X\} \\
\partial^f_2(X) &= \{x \in \mathbb{R}^n: f(j | X) \leq x(j), \forall j \notin X\}.
\end{align}
\label{thm:super_partial_one_two_is_esy}
\end{lemma}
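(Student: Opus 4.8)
The plan is to prove each identity by establishing the two inclusions separately; in both cases the easy inclusion just recognizes the right-hand side as a subfamily of the defining constraints, while the substantive inclusion is a telescoping argument driven by the diminishing-returns property of $f$.

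First consider $\partial^f_1(X)$. Rewriting the defining constraint $f(Y) - x(Y) \leq f(X) - x(X)$ for $Y \subseteq X$ as $f(X) - f(Y) \geq x(X \backslash Y)$, I observe that the claimed description is exactly the collection of these constraints restricted to the sets $Y = X \backslash j$ for $j \in X$, since $f(X) - f(X \backslash j) = f(j | X \backslash j)$. The inclusion $\subseteq$ is therefore immediate: any $x \in \partial^f_1(X)$ satisfies in particular the constraints at $Y = X \backslash j$, giving $f(j | X \backslash j) \geq x(j)$ for all $j \in X$. For the reverse inclusion I would fix an arbitrary $Y \subseteq X$, enumerate $X \backslash Y = \{j_1, \dots, j_k\}$, and build the chain $X = X_0 \supsetneq X_1 \supsetneq \cdots \supsetneq X_k = Y$ with $X_i = X_{i-1} \backslash j_i$. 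Telescoping gives $f(X) - f(Y) = \sum_{i=1}^k f(j_i | X_{i-1} \backslash j_i)$, and since $X_{i-1} \backslash j_i \subseteq X \backslash j_i$, diminishing returns yields $f(j_i | X_{i-1} \backslash j_i) \geq f(j_i | X \backslash j_i) \geq x(j_i)$. Summing over $i$ recovers $f(X) - f(Y) \geq x(X \backslash Y)$, the desired constraint.

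The argument for $\partial^f_2(X)$ is entirely symmetric. Here the constraint $f(Y) - x(Y) \leq f(X) - x(X)$ for $Y \supseteq X$ rewrites as $f(Y) - f(X) \leq x(Y \backslash X)$, and the claimed description is its restriction to $Y = X \cup j$ for $j \notin X$, since $f(X \cup j) - f(X) = f(j | X)$. Again $\subseteq$ is immediate, and for $\supseteq$ I would fix $Y \supseteq X$, enumerate $Y \backslash X = \{j_1, \dots, j_k\}$, and grow the chain $X = X_0 \subsetneq \cdots \subsetneq X_k = Y$ with $X_i = X_{i-1} \cup j_i$. Telescoping and diminishing returns, now in the form $f(j_i | X_{i-1}) \leq f(j_i | X)$ (valid because $X \subseteq X_{i-1}$ and $j_i \notin X_{i-1}$), give $f(Y) - f(X) = \sum_{i=1}^k f(j_i | X_{i-1}) \leq \sum_{i=1}^k x(j_i) = x(Y \backslash X)$.

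The only point requiring care — and the step I would treat as the main obstacle — is applying the diminishing-returns inequality in the correct direction at each link of the chain: in the $\partial^f_1(X)$ case the conditioning set $X_{i-1} \backslash j_i$ must be recognized as a \emph{subset} of $X \backslash j_i$ (so its marginal is the larger one), whereas in the $\partial^f_2(X)$ case the conditioning set $X_{i-1}$ is a \emph{superset} of $X$ (so its marginal is the smaller one). Once the chains are oriented consistently with these sub/superset relations, everything else is routine bookkeeping on the telescoping sum.
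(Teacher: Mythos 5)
Your proof is correct and takes essentially the same approach as the paper's: both reduce each polyhedron to its singleton constraints (the sets $Y$ at Hamming distance one from $X$) and establish the nontrivial inclusion via diminishing returns. The only difference is one of detail --- where the paper asserts that $\sum_{j \in X \setminus Y} f(j \mid X \setminus j) \leq f(X) - f(Y)$ and its superset analogue ``follow from submodularity alone,'' you spell out the telescoping chain argument that justifies exactly those inequalities.
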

\begin{proof}
Consider $\partial^f_1(X)$. Notice that the inequalities defining the polyhedron,
starting from Eqn.~\eqref{eq:superdifferential_first_bit},
can be rewritten as $\partial^f_1(X) = \{x \in \mathbb{R}^n:x(X \backslash Y) \leq f(X) - f(Y), \forall Y \subseteq X \}$. We then have that $x(X \backslash Y) = \sum_{j \in X \backslash Y} x(j) \leq \sum_{j \in X \backslash Y} f(j | X \backslash j)$, since $\forall j \in X, x(j) \leq f(j | X \backslash j)$ (this follows by considering only the 
subset of inequalities of $\partial^f_1(X)$ 
in Eqn.~\eqref{eq:superdifferential_first_bit}
with sets $Y \subseteq X$ such that $|X \backslash Y| = 1$). We then have $x(X \backslash Y) \leq \sum_{j \in X \backslash Y} f(j | X \backslash j) \leq f(X) - f(Y)$ where the last inequality follows from submodularity alone. Hence 
an irredundant set of inequalities include those defined only through the singletons. 


 In order to show the characterization for $\partial^f_2(X)$, we have,
starting from Eqn.~\eqref{eq:superdifferential_second_bit},
that $\partial^f_2(X) = \{x \in \mathbb{R}^n:x(Y \backslash X) \geq f(Y) - f(X), \forall Y \supseteq X \}$. It then follows that, $x(Y \backslash X) = \sum_{j \in Y \backslash X} x(j) \geq \sum_{j \in Y \backslash X} f(j | X)$, since $\forall j \notin X, x(j) \geq f(j | X)$. Hence $x(X \backslash Y) \geq \sum_{j \in Y \backslash X} f(j | X) \geq f(Y) - f(X)$, and again, an irredundant set of inequalities include those defined only through the singletons. 
\end{proof}

The above characterization shows that $\partial^f(X)$ can be determined using many fewer inequalities since the polytopes $\partial^f_1(X)$ and $\partial^f_2(X)$ are so simple. Recall that this is analogous to the submodular subdifferential, where again owing to submodularity the number of essential inequalities can be reduced significantly --- in that case, we just need to consider the sets $Y$ which are subsets and supersets of $X$. It is interesting to note the contrast between the redundancy of inequalities in the subdifferentials and the superdifferentials. In particular, here, the inequalities corresponding to sets $Y$ being the subsets and supersets of $X$ are mostly redundant, while the irredundant ones are the rest of the inequalities. In other words, in the case of the subdifferential, $\partial_f^1(X)$ and $\partial_f^2(X)$ were essential, while $\partial_f^3(X)$ was entirely redundant given the first two.  In the case of the superdifferentials, $\partial^f_1(X)$ and $\partial^f_3(X)$ are mostly internally redundant (they can be represented using only by $n$ inequalities), while $\partial^f_3(X)$ has no redundancy in general.

In order to gain more intuition for the superdifferentials, we next consider
some examples in both two and three dimensions.


\begin{figure}[tbh]
\centering
\includegraphics[page=3,width=0.5\textwidth]{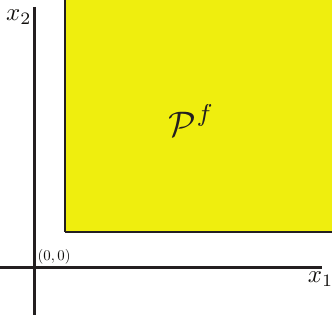}
\caption{A visualization of the four submodular superdifferentials $\partial^f(Y)$ for different sets $Y$ in two dimensions $V=\{v_1, v_2\}$, as described in Example~\ref{ex:2dsuperdifferential}.
}
\label{fig:2d_superdifferential_figure}
\end{figure}

\begin{example}
\label{ex1}
See Figure~\ref{fig:2d_superdifferential_figure}.
We consider here superdifferentials of a submodular function when $V = \{1, 2\}$. Then from the lemma above, $\partial^f(\emptyset) = \{x \in \mathbb{R}^2: f(j | \emptyset) \leq x(j), \forall j \in \{1, 2\}\}$. Similarly $\partial^f(\{1, 2\}) = \{x \in \mathbb{R}^2: f(j | V \backslash j) \geq x(j), \forall j \in \{1, 2\}\}$. Now consider $\partial^f(1)$. Then the governing inequalities for this are:
\begin{align}
\partial^f(1) = \{x \in \mathbb{R}^2: &x_1 \leq f(\{1\}), \label{eq1d2}\\
			 							& x_2 \geq f(\{2\} | \{1\}), \label{eq2d2}\\
										& x_1 - x_2 \leq f(\{1\}) - f(\{2\}) \label{eq3d2}
\end{align}
The extreme points of this polyhedron are the vectors $\{f(\{1\}),
f(\{2\})\} = \{f(\{1\}| \emptyset), f(\{2\}| \emptyset)\} $ and
$\{f(\{1\} | \{2\}), f(\{2\} | \{1\})\}$. The way we obtain the
extreme points is as follows. Setting the inequalities \eqref{eq1d2}
and \eqref{eq3d2} as equalities, we get the extreme point $\{f(\{1\}),
f(\{2\})\}$. The inequality \eqref{eq2d2} then is $x_2 = f(\{2\}) \geq
f(\{2\} | \{1\})$, which holds. We then set inequalities \eqref{eq2d2}
and \eqref{eq3d2} as equalities, which gives $x_2 = f(\{2\} | \{1\})$
and $x_1 = f(\{1\} | \{2\})$, thus giving the second extreme point
$\{f(\{1\} | \{2\}), f(\{2\} | \{1\})\}$. One can see that
inequality \eqref{eq1d2} is satisfied. Finally, if we set inequalities
\eqref{eq1d2} and \eqref{eq2d2} as equalities, we get $x_1 = f(\{1\}),
x_2 = f(\{2\} | \{1\})$. Then we have that $x_1 - x_2 = f(\{1\}) -
f(\{2\} | \{1\}) = 2f(\{1\}) - f(\{1, 2\})$. Inequality \eqref{eq3d2},
then requires, $2f(\{1\}) - f(\{1, 2\}) \leq f(\{1\}) - f(\{2\})
\Rightarrow f(\{1\}) + f( \{2\}) \leq f(\{1, 2\})$, which does not
hold (unless $f$ is trivially modular, in which case the first two
extreme points collapse onto the third). Hence the only extreme points
are the two vectors above. One can similarly investigate
$\partial^f(\{2\})$, which has the same extreme points.
\label{ex:2dsuperdifferential}
\end{example}

It is clear from the above example that superdifferentials in the
two-dimensional case are easy to find and characterize. However this
is not the case in three dimensions where the shape of the
superdifferentials depends strongly on the particulars of the
submodular functions --- this means that one cannot characterize the
superdifferential polyhedra knowing only that $f$ is submodular, more
information about the specific instance is required.


\begin{example}\label{ex2}
Let $V = \{1,2,3\}$. Recall that $f(\emptyset) = 0$. Then consider 
$\partial^f(\{1\}) 
= \{x \in \mathbb{R}^3: f(Y) - x(Y) \leq f(\{1\}) - x(\{1\}), \forall Y \subseteq \{1,2,3\} \}$
This polyhedron can be represented via the following irredundant inequalities:
\begin{alignat}{2}
 \partial^f(\{1\}) = 
\{x \in \mathbb{R}^3:  
     x_1 &\leq f(\{1\}),\quad\quad\quad           &\text{ for } Y&=\{\emptyset\}   \label{eq1}\\ 
     x_2 &\geq f(\{2\}|\{1\}),\quad\quad              &\text{ for } Y&=\{1,2\}   \label{eq2}\\
     x_2 - x_1 &\geq f(\{2\}) - f(\{1\}),\quad\quad  &\text{ for } Y&=\{2\}   \label{eq3}\\
     x_3 &\geq f(\{3\}|\{1\}),                        &\text{ for } Y&=\{1,3\}               \label{eq4}\\
     x_3 - x_1 &\geq f(\{3\}) - f(\{1\}),\quad\quad      &\text{ for }Y&=\{3\}               \label{eq5}\\
     x_2 + x_3 - x_1 &\geq  f(\{2,3\}) - f(\{1\})  &\text{ for }Y&=\{2,3\}   
\label{eq:3dsuper}\} 
\end{alignat}
The other two inequalities (for $Y=\{1,2,3\}$ and $Y=\{1\}$) are redundant given the above.  We now
consider the extreme points of this polyhedron. We consider
Eqns.~\eqref{eq1}, \eqref{eq3}, \eqref{eq5} with equality, and
we obtain an extreme point $\{f(\{1\}), f(\{2\}), f(\{3\})\}$. It is the case that all other inequalities are satisfied.  Consider
next Eqns.~\eqref{eq3}, \eqref{eq5}, and \eqref{eq:3dsuper} with
equality and we get a potential extreme point $\{f(\{1\}) + f(\{2,3\}) - f(\{2\}) -
f(\{3\}), f(\{2\}|\{3\}), f(\{3\}|\{2\})\}$. Observe that $x_1 = f(\{1\}) + f(\{2, 3\}) - f(\{2\})
- f(\{3\}) \leq f(\{1\})$, and hence Eqn.~\eqref{eq1} is satisfied. However
$x_2 = f(\{2\}|\{3\})$ may be bigger or smaller than $f(\{2\}|\{1\})$ (depending on the specific submodular function instance) and
Eqn.~\eqref{eq2} might or might not be violated. Similarly, $x_3 = f(\{3\} | \{2\})$ is not comparable to
$f(\{3\} | \{1\})$, and hence Eqn.~\eqref{eq4} might or might not be
violated. Consequently we cannot
determine if by combining together Eqns.~\eqref{eq3}, \eqref{eq5}, and \eqref{eq:3dsuper}, we obtain an extreme point, unless we have more information
about the current submodular function being used. We therefore, from this example, see that we cannot hope to find
the extreme points both analytically and generically.
\label{ex:3dsuperdifferential}
\end{example}

The above example shows that a particular expression (obtained via a combination of inequalities) might or might not be extreme, depending on the particular submodular function and its valuation. This is unlike the subdifferential, where a certain analytical expression is always extreme for all submodular functions.
Thus, unlike the subdifferentials, we cannot expect a closed form
expression for the extreme points of $\partial^f(X)$. Moreover, they
also seem to be hard to characterize algorithmically. For example, the
superdifferential membership problem is NP hard.


\begin{lemma}\label{NPsuperdiffmembership}
Given a submodular function $f$ and a set $X: \emptyset \subset X \subset V$, the membership problem $y \in \partial^f(X)$ is NP hard.
\end{lemma}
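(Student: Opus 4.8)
The plan is to first strip the definition down to its essential meaning and then exhibit a polynomial-time reduction from a canonical NP-hard submodular maximization problem. From Eqn.~\eqref{supdiff-def}, the condition $y \in \partial^f(X)$ says exactly that $f(Y) - y(Y) \leq f(X) - y(X)$ for \emph{all} $Y \subseteq V$. Writing $h(Y) \triangleq f(Y) - y(Y)$, and noting that $h$ is submodular (a submodular $f$ minus a modular $y(\cdot)$ is submodular), membership becomes the statement that $X \in \argmax_{Y \subseteq V} h(Y)$. Thus deciding $y \in \partial^f(X)$ is precisely the problem of \emph{verifying that a given set $X$ maximizes a submodular function}, and its complement is the threshold problem ``does there exist $Y$ with $h(Y) > h(X)$,'' which is a decision version of unconstrained submodular maximization. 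This reformulation is the crux, and everything after it is an engineering of the reduction.

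For the reduction I would start from the maximum cut decision problem on a graph $G=(U,E)$, whose cut function $\kappa(W) = |\{(i,j)\in E : |W \cap \{i,j\}| = 1\}|$ is submodular with $\kappa(\emptyset)=\kappa(U)=0$, and for which deciding $\max_W \kappa(W) \geq k$ is NP-complete. The goal is to build an instance $(f, X, y)$ with $\emptyset \subset X \subset V$ so that $y \notin \partial^f(X)$ holds if and only if $G$ has a cut of value at least $k$. The two difficulties are (i) manufacturing an ``anchor'' set $X$ whose value $f(X) - y(X)$ equals the threshold exactly, and (ii) guaranteeing that the only way to \emph{beat} this anchor is by finding a large cut, so that stray sets do not spuriously violate membership. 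I would handle (i)–(ii) with a small gadget: augment the ground set to $V = U \cup \{a\}$ (keeping $X$ a proper nonempty subset, e.g. $X = \{a\}$), install $k$ pendant vertices attached to a tuning vertex so that the anchor attains cut value exactly $k$, and use the modular part $y$ together with the extra vertex to offset the competing sets. Throughout, submodularity of the constructed $f$ is preserved because cut functions are submodular and adding modular terms (including indicator contributions of the auxiliary element) keeps the function submodular, as used already in Section~\ref{sec:subm-superd}.

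The verification steps then split cleanly. One checks that for sets $Y \subseteq U$ one has $h(Y) = \kappa(Y)$, so that a violating $Y$ on this side corresponds exactly to a cut exceeding the threshold; and one checks that sets $Y$ involving the auxiliary element $a$ cannot beat the anchor, so they never produce false violations. Combining these, $y \in \partial^f(X)$ holds precisely when no cut exceeds $k-1$, i.e. when $\max_W \kappa(W) < k$, which is the complement of the NP-complete instance. Since the construction is clearly polynomial, this establishes hardness of the membership problem.

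The main obstacle I anticipate is exactly the anchor-and-offset gadget: because any modular offset carried by an auxiliary element propagates to all supersets containing it, naive choices make every larger set automatically beat the anchor, destroying the reduction. Getting an auxiliary structure whose contribution is a \emph{tunable threshold for the anchor alone} while leaving the pure-cut sets $Y \subseteq U$ as the only genuine competitors — and confirming the resulting $f$ is still submodular — is the delicate part. A secondary point worth stating precisely is the polarity: the ``yes'' side of membership is a universally quantified (optimality) statement, so the natural conclusion is that it is coNP-hard equivalently that non-membership is NP-complete; I would flag this and then simply refer to the hardness loosely as NP-hardness, consistent with the lemma statement.
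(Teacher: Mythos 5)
Your first paragraph is, essentially, the paper's entire proof: the paper observes that $y \in \partial^f(X)$ holds if and only if $\max_{Y \subseteq V}\, [f(Y) - y(Y)] \leq f(X) - y(X)$, i.e., if and only if $X$ maximizes the submodular function $f - y$, and then simply declares this to be the (NP-hard) decision version of submodular maximization. So, at the level of rigor the paper itself adopts, you are done after your first paragraph. What you add beyond that is actually a gain in precision: you correctly note that the ``yes'' side of membership is a universally quantified statement, so the honest conclusion is coNP-hardness of membership (equivalently, NP-hardness of non-membership, or NP-hardness under Turing reductions) --- a distinction the paper elides --- and you correctly recognize that ``is the given $X$ a maximizer?'' is not literally the threshold problem ``is $\max h \geq k$?'', so an explicit reduction is genuinely needed to make the hardness claim airtight.

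That reduction is where your proposal has a gap: the gadget is described only loosely (``$k$ pendant vertices attached to a tuning vertex''), and you yourself flag that you cannot yet prevent supersets of the anchor from inheriting any modular offset and spuriously beating it. The obstacle is real, but it has a clean resolution: penalize supersets through the \emph{cut structure} rather than through modular terms. Concretely, given a max-cut instance $G = (U, E)$ with $n = |U|$, $m = |E|$ and threshold $k \geq 1$, set $V = U \cup \{a\}$ and let $f$ be the weighted cut function of $G$ augmented with edges $\{a, j\}$ of weight $w = m+1$ for every $j \in U$; take $X = \{a\}$, $y(j) = w$ for $j \in U$, and $y(a) = wn - k + \tfrac{1}{2}$. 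Writing $g = f - y$ and $\kappa_G$ for the cut function of $G$, one checks: for $Z \subseteq U$, $g(Z) = \kappa_G(Z)$ (the star-edge contribution $w|Z|$ cancels against $y(Z)$); $g(\{a\}) = k - \tfrac{1}{2}$; and for nonempty $Z \subseteq U$, $g(Z \cup \{a\}) = \kappa_G(Z) - 2w|Z| + k - \tfrac{1}{2} \leq (m - 2w) + k - \tfrac{1}{2} < g(\{a\})$, because each element moved to $a$'s side un-cuts a weight-$w$ star edge, which outweighs anything it can gain from $E$. Hence no set containing $a$ ever beats the anchor, $\emptyset$ does not beat it, and a set $Z \subseteq U$ beats it exactly when $\kappa_G(Z) \geq k$; so $y \notin \partial^f(\{a\})$ if and only if $G$ has a cut of value at least $k$, with $\emptyset \subset \{a\} \subset V$ and $f$ submodular. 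This is precisely the ``tunable threshold for the anchor alone'' you were after; with it, your argument is complete and, unlike the paper's, fully self-contained.
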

\begin{proof}
Notice that the membership problem $y \in \partial^f(X)$ is equivalent to asking $\max_{Y \subseteq V} f(Y) - y(Y) \leq f(X) - y(X)$. In other words, this is equivalent to asking if $X$ is a maximizer of $f(Y) - y(Y)$ for a given vector $y$. This is the decision version of the submodular maximization problem and correspondingly is NP hard when $\emptyset \subset X \subset V$. 
\end{proof}

Given that the membership problem is NP hard, it is also NP hard to solve a linear program over this polyhedron~\cite{grotschel1984geometric, schrijver2003combinatorial}.
The superdifferential for the empty set $X=\emptyset$ and the ground set
$X=V$, however, can be characterized easily:
\begin{lemma}
For any submodular function $f$ such that $f(\emptyset) = 0$, $\partial^f(\emptyset) = \{x \in \mathbb{R}^n: f(j) \leq x(j), \forall j \in V\}$. Similarly $\partial^f(V) = \{x \in \mathbb{R}^n: f(j | V \backslash j) \geq x(j), \forall j \in V\}$. Furthermore, $\partial^f(\emptyset) = \mathcal P^f$ and $\partial^f(V) = \mathcal P^{f^{\#}}$.
\label{thm:super_at_empty_and_full}
\end{lemma}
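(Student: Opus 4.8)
The plan is to reduce everything to the simple characterizations already in hand, namely the intersection decomposition in Eqn.~\eqref{eq:super_intersection} together with the explicit forms of $\partial^f_1(X)$ and $\partial^f_2(X)$ from Lemma~\ref{thm:super_partial_one_two_is_esy}. The guiding observation is that at the two extreme sets $X=\emptyset$ and $X=V$, two of the three pieces in the decomposition degenerate to all of $\mathbb{R}^n$, so that $\partial^f(X)$ collapses onto a single simple polyhedron described by only $n$ inequalities. Concretely, I would first handle $X=\emptyset$: here $\partial^f_1(\emptyset)$ imposes constraints indexed by $j\in\emptyset$ and is therefore vacuous; $\partial^f_3(\emptyset)$ ranges over sets $Y$ with $Y\not\supseteq\emptyset$, of which there are none (every set contains $\emptyset$), so it too equals $\mathbb{R}^n$; and $\partial^f_2(\emptyset)=\{x: f(j\mid\emptyset)\le x(j),\ \forall j\in V\}$ by Lemma~\ref{thm:super_partial_one_two_is_esy}, which using $f(\emptyset)=0$ reads $\{x: f(j)\le x(j)\}$. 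Intersecting the three pieces gives the claimed form. The set $X=V$ is handled symmetrically: $\partial^f_2(V)$ is vacuous (constraints over $j\notin V$), $\partial^f_3(V)$ is vacuous (no $Y$ satisfies $Y\not\subseteq V$), and $\partial^f_1(V)=\{x: f(j\mid V\setminus j)\ge x(j),\ \forall j\in V\}$, yielding the second characterization.

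For the identifications with the polyhedra $\mathcal P^f$ and $\mathcal P^{f^{\#}}$, I would instead go directly back to the definition in Eqn.~\eqref{supdiff-def}. Setting $X=\emptyset$ and again using $f(\emptyset)=x(\emptyset)=0$, the defining inequalities become $x(Y)\ge f(Y)$ for all $Y\subseteq V$, which is exactly $\mathcal P^f$ as defined in Eqn.~\eqref{eqn:submodular_upper_polyhedron}; Lemma~\ref{antisubpoly} then confirms this reduces to the $n$ singleton inequalities, matching the first part. For $X=V$, I would rewrite the defining inequality $f(Y)-x(Y)\le f(V)-x(V)$ as $x(V\setminus Y)\le f(V)-f(Y)$ and perform the change of variables $S=V\setminus Y$, which runs over all subsets of $V$ as $Y$ does, producing $x(S)\le f(V)-f(V\setminus S)=f^{\#}(S)$ for all $S$. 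Since $f^{\#}(\{j\})=f(V)-f(V\setminus j)=f(j\mid V\setminus j)$, the singleton constraints reproduce the explicit form found above, and the full system is the dual polyhedron $\mathcal P^{f^{\#}}$.

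The step I expect to be the main obstacle is the last identification, and specifically getting its orientation exactly right. The subtlety is that the change of variables $S=V\setminus Y$ flips the inequality, so $\partial^f(V)$ comes out as $\{x: x(S)\le f^{\#}(S)\}$, and one must verify that this is consistent with the singleton characterization $\{x: x(j)\le f(j\mid V\setminus j)\}$. Because $f^{\#}$ is supermodular and normalized, it is superadditive, so $\sum_{j\in S} f^{\#}(j)\le f^{\#}(S)$; hence the singleton inequalities $x(j)\le f^{\#}(j)$ already imply $x(S)=\sum_{j\in S}x(j)\le\sum_{j\in S}f^{\#}(j)\le f^{\#}(S)$ for every $S$, confirming the reduction and closing the loop with the decomposition-based argument. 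I would take care to state the dual polyhedron in the orientation dictated by this substitution, since the naive reading of the superscript notation carries the opposite inequality and must be reconciled with the supermodularity of $f^{\#}$.
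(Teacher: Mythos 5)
Your proposal is correct, but it does not follow the paper's primary proof. For the singleton characterizations, your route --- degenerating $\partial^f_1(\emptyset)$, $\partial^f_3(\emptyset)$, $\partial^f_2(V)$, and $\partial^f_3(V)$ to $\mathbb{R}^n$ and then invoking Eqn.~\eqref{eq:super_intersection} together with Lemma~\ref{thm:super_partial_one_two_is_esy} --- is exactly the \emph{alternative} argument the paper records immediately after its proof; the proof proper instead verifies directly from Eqn.~\eqref{supdiff-def} that the $n$ singleton inequalities imply all the others, via subadditivity ($f(Y) \leq \sum_{j \in Y} f(j) \leq x(Y)$ for $X=\emptyset$) and via $\sum_{j \in V \setminus Y} f(j \mid V \setminus j) \leq f(V \setminus Y \mid Y)$ for $X=V$. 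The two routes have comparable cost: yours reuses machinery already proved, while the paper's is self-contained. Where you genuinely add something is the final identification, which the paper dismisses with ``the rest follows directly from the definitions.'' Your change of variables $S = V \setminus Y$ makes explicit that $\partial^f(V) = \{x : x(S) \leq f^{\#}(S), \ \forall S \subseteq V\}$, and the subtlety you flag is real: reading $\mathcal P^{f^{\#}}$ literally through Eqn.~\eqref{eqn:submodular_upper_polyhedron} with $f$ replaced by $f^{\#}$ would yield the opposite inequalities $x(S) \geq f^{\#}(S)$, which is \emph{not} $\partial^f(V)$; the claim is correct only under the dual orientation appropriate to the supermodular function $f^{\#}$, and your superadditivity argument ($\sum_{j \in S} f^{\#}(j) \leq f^{\#}(S)$, since $f^{\#}$ is supermodular and normalized) is precisely what reconciles that full system with the singleton form $x(j) \leq f(j \mid V\setminus j)$. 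So your proposal is not merely correct --- it repairs a notational gap that the paper's own proof glosses over.
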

\begin{proof}
Consider $X = \emptyset$, 
then $\partial^f(\emptyset) = \{ x \in \mathbb R^n : f(Y) \leq x(Y), \forall Y \subseteq V \}$. 
Assuming only the $|V|$ inequalities for $Y = \{ j \} \in V$ gives
$f(Y) \leq \sum_{j \in Y} f(j) \leq \sum_{j \in Y} x(j) = x(Y)$ meaning
only the these $|V|$ inequalities are necessary.
For $X = V$, 
$\partial^f(V) = \{ x \in \mathbb R^n : x(V) - x(Y) \leq f(V) - f(Y), \forall Y \subseteq V \}$. 
Assuming only the $|V|$ inequalities for $Y = V \setminus \{ j \}$ gives
$\sum_{j \in V \setminus Y} x(j) 
\leq
\sum_{j \in V \setminus Y} f(j | V \setminus j )
\leq f(V\setminus Y | Y) = f(V) - f(Y)$.
This rest follows directly from the definitions.
\end{proof}
This also follows
by first noting that in
Eqn.~\eqref{eq:superdifferential_third_bit} we have $\partial^f_3(\emptyset)
= \partial^f_3(V) = \mathbb R^n$. Then,
by using 
Eqn~\eqref{eq:super_intersection} and
Lemma~\ref{thm:super_partial_one_two_is_esy}, 
we have that 
Eqn.~\eqref{eq:superdifferential_first_bit} implies $\partial^f_1(\emptyset) =
\mathbb R^n$
so that $\partial^f(\emptyset) = \partial^f_2(\emptyset)$, 
and that Eqn.~\eqref{eq:superdifferential_second_bit} implies
$\partial^f_2(V) = \mathbb R^n$
so that $\partial^f(V) = \partial^f_1(V)$.

As we see from the above, it is hard to characterize the superdifferential of a
submodular function. It is however possible to provide computationally
feasible inner and outer bounds as shown in the following
subsections. Using these, we can also find certain specific and practically useful
supergradients.

\subsubsection{Outer bounds on the superdifferential}
\label{sec:outer-bounds-superd}


It is possible to provide a number of useful and practical outer bounds on the superdifferential. 
Recall from Lemma~\ref{thm:super_partial_one_two_is_esy}
that $\partial^f_1(X)$ and $\partial^f_2(X)$,
defined in Eqns.\eqref{eq:superdifferential_first_bit}
and~\eqref{eq:superdifferential_second_bit},
are already simple polyhedra. We can then provide outer bounds on $\partial^f_3(X)$ that,
together with $\partial^f_1(X)$ and $\partial^f_2(X)$,
provide simple bounds on $\partial^f(X)$. Define for $1 \leq k,l \leq n$:
\begin{align}
\partial^f_{3, \symmdiff(k, l)}(X) &\triangleq \{x \in \mathbb{R}^n: f(Y) - x(Y) \leq f(X) - x(X), \nonumber \\ 
& \forall Y: Y \not \subseteq X, Y \not \supseteq X, |Y \backslash X| \leq k-1, |X \backslash Y| \leq l-1\}
\end{align}
Note that for $\emptyset \subseteq X \subseteq V$, 
$\partial^f_{3, \symmdiff(n, n)}(X) = \partial^f_{3}(X)$
and $\partial^f_{3, \symmdiff(k, l)}(X) \supseteq \partial^f_{3}(X)$ for $1 \leq k,l \leq 
n$.
We also have that $\partial^f_{3, \symmdiff(1, 1)}(X) = \mathbb R^n$.
We can then define the outer bound: 
\begin{align}
\partial^f_{\symmdiff(k, l)}(X) \triangleq \partial^f_1(X) \cap \partial^f_2(X) \cap \partial^f_{3, \symmdiff(k, l)}(X) \supseteq \partial^f(X). 
\label{eq:super_diff_outer_bounds}
\end{align}
Observe that $\partial^f_{\symmdiff(k, l)}(X)$ is expressed in terms of $O(n^{k+l})$ inequalities, and hence for a given constant $k, l$ we can obtain the representation of $\partial^f_{\symmdiff(k, l)}(X)$ in polynomial time. We will see that this provides us with a hierarchy of outer bounds on the superdifferential:
\begin{theorem}
For a submodular function $f$:
\begin{enumerate}
\item $\partial^f_{\symmdiff(1, 1)}(X) = \partial^f_1(X) \cap \partial^f_2(X)$
\item $\forall 1 \leq k^{\prime} \leq k , 1 \leq l^{\prime} \leq l, \partial^f(X) \subseteq \partial^f_{\symmdiff(k, l)}(X) \subseteq \partial^f_{\symmdiff(k^{\prime}, l^{\prime})}(X) \subseteq \partial^f_{\symmdiff(1, 1)}(X)$
\item $\partial^f_{\symmdiff(n, n)}(X) = \partial^f(X)$.
\end{enumerate}
\end{theorem}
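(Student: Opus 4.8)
The plan is to recognize that all three items are fundamentally bookkeeping about which defining inequalities appear in each polyhedron, combined with the elementary monotonicity principle that adding more linear constraints to a polyhedron can only shrink it. The only genuine content beyond this is identifying precisely when the Hamming-size constraints $|Y \backslash X| \leq k-1$ and $|X \backslash Y| \leq l-1$ become either contradictory or vacuous.

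For item 1, I would start directly from the definition $\partial^f_{\symmdiff(1,1)}(X) = \partial^f_1(X) \cap \partial^f_2(X) \cap \partial^f_{3,\symmdiff(1,1)}(X)$ and invoke the already-stated observation that $\partial^f_{3,\symmdiff(1,1)}(X) = \mathbb{R}^n$. The reason for that observation, which I would spell out, is that the $k=l=1$ size constraints force $|Y \backslash X| \leq 0$ and $|X \backslash Y| \leq 0$, i.e.\ $Y = X$, but this is excluded by the requirement that $Y \not\subseteq X$ and $Y \not\supseteq X$; hence no set $Y$ contributes an inequality, and the empty constraint set defines all of $\mathbb{R}^n$. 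Intersecting with $\mathbb{R}^n$ is a no-op, yielding item 1.

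For item 2, the central lever is monotonicity of the index set of inequalities. Whenever $k' \leq k$ and $l' \leq l$, every set $Y$ contributing a constraint to $\partial^f_{3,\symmdiff(k',l')}(X)$ (those with $|Y \backslash X| \leq k'-1$, $|X \backslash Y| \leq l'-1$) also contributes to $\partial^f_{3,\symmdiff(k,l)}(X)$, so the latter has a superset of constraints and therefore $\partial^f_{3,\symmdiff(k,l)}(X) \subseteq \partial^f_{3,\symmdiff(k',l')}(X)$. Intersecting both sides with the common factor $\partial^f_1(X) \cap \partial^f_2(X)$ preserves the inclusion and gives the middle link of the chain. For the leftmost inclusion $\partial^f(X) \subseteq \partial^f_{\symmdiff(k,l)}(X)$, I would use $\partial^f(X) = \partial^f_1(X) \cap \partial^f_2(X) \cap \partial^f_3(X)$ together with the already-noted $\partial^f_3(X) \subseteq \partial^f_{3,\symmdiff(k,l)}(X)$. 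The rightmost inclusion $\partial^f_{\symmdiff(k',l')}(X) \subseteq \partial^f_{\symmdiff(1,1)}(X)$ is then just the $k'=l'=1$ instance of the monotonicity, combined with item 1.

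For item 3, the task reduces to the observation (already stated in the excerpt) that $\partial^f_{3,\symmdiff(n,n)}(X) = \partial^f_3(X)$; once this holds, intersecting with $\partial^f_1(X) \cap \partial^f_2(X)$ and using the decomposition in Eqn.~\eqref{eq:super_intersection} finishes immediately. To justify that equality, I would argue that any $Y$ satisfying $Y \not\subseteq X$ and $Y \not\supseteq X$ automatically satisfies $|Y \backslash X| \leq |V \backslash X| \leq n-1$ and $|X \backslash Y| \leq |X| \leq n-1$, so the size constraints at $k=l=n$ are vacuous and no inequality is dropped. The only subtlety, and the mildest possible obstacle in the whole proof, is the boundary cases: when $X = \emptyset$ no $Y$ can satisfy $Y \not\supseteq X$, and when $X = V$ no $Y$ can satisfy $Y \not\subseteq X$, so in both cases $\partial^f_3(X) = \partial^f_{3,\symmdiff(n,n)}(X) = \mathbb{R}^n$ trivially and the claim holds degenerately. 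I would dispatch these two edge cases in a single sentence and otherwise treat $\emptyset \subset X \subset V$ as the generic case.
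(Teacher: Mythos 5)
Your proof is correct and takes essentially the same route as the paper's: the paper's (much terser) proof likewise reduces all three items to the decomposition $\partial^f(X) = \partial^f_1(X) \cap \partial^f_2(X) \cap \partial^f_3(X)$, monotonicity of the constraint index sets in $k$ and $l$, and the observations stated just before the theorem that $\partial^f_{3,\symmdiff(1,1)}(X) = \mathbb{R}^n$ and $\partial^f_{3,\symmdiff(n,n)}(X) = \partial^f_3(X)$. You simply make explicit the bookkeeping (including the boundary cases $X = \emptyset$ and $X = V$) that the paper leaves implicit under ``follows directly from definitions.''
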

\begin{proof}
The proofs of items 1 and 3 follow directly from definitions. To see item 2, notice that the polyhedra $\partial^f_{\symmdiff(k, l)}$ become tighter as $k$ and $l$ increase finally approaching the superdifferential.
\end{proof}
Similar to how Eqn.~\eqref{localsub} relates to the submodular
subdifferential, we shall call $\partial^f_{\symmdiff(1, 1)}(Y)$
the local approximation of the superdifferential. In particular,
\begin{align}
\label{eq:localsup_long}
\partial^f_{\symmdiff(1, 1)}(X) 
&= \{ x \in \mathbb R^n : f(Y) - x(Y) \leq f(X) - x(X), \forall Y \in [ \emptyset, X ] \cup [X , V ] \} 
\\
&= \{x \in \mathbb{R}^n: \forall j \in X, f(j | X \backslash j) \geq x(j) \text{ and } 
\forall j \notin X, f(j | X) \leq x(j) \}
\label{localsup}
\end{align}
where the second equality follows from Lemma~\ref{thm:super_partial_one_two_is_esy}.
It is interesting to note that the very same irredundant sets
that equivalently define the subdifferential in Eqn.~\eqref{subdiffred} are
also the ones that define an outer bound of the superdifferential in
Eqn.~\eqref{eq:localsup_long}.

We shall see in Section~\ref{submodmaxopt} that these outer bounds have interesting connections with approximation algorithms for submodular maximization.

\subsubsection{Inner Bounds on the superdifferential}
\label{sec:inner-bounds-superd}

While it is hard to characterize the extreme points of the
superdifferential, we can provide some specific and useful
supergradients.  
For any $X \subseteq V$,
define three vectors 
$\ggrow_X,
\gshrink_X,
\gbar_X \in \mathbb R^n$
as follows:\looseness-1
\begin{align}
\ggrow_X(j) \triangleq 
\begin{cases}
f(j |X \setminus j) & \text{ if }  j \in X\\
f(j) & \text { if } j \notin X \\
\end{cases},
\label{eq:ggrow_def} \\
\gshrink_X(j) \triangleq 
\begin{cases}
f(j |V \setminus j) & \text{ if }  j \in X\\
f(j|X) & \text { if } j \notin X
\end{cases},
\label{eq:gshrink_def} \\
\text{ and } \qquad \gbar_X(j) \triangleq 
\begin{cases}
f(j |V \setminus j) & \text{ if }  j \in X\\
f(j) & \text { if } j \notin X
\end{cases}.
\label{eq:gbar_def}
\end{align}

Then we have the following theorem:
\begin{theorem} \label{altviewsthm2} 
For a submodular function $f$, $\ggrow_X, \gshrink_X, \gbar_X \in \partial^f(X)$. Hence for every submodular function $f$ and set $X$, $\partial^f(X)$ is non-empty.
\end{theorem}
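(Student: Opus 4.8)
The plan is to verify the defining inequality of $\partial^f(X)$ directly for each of the three vectors, relying only on diminishing returns. Fix an arbitrary $Y \subseteq V$ and decompose it relative to $X$ by setting $A = X \cap Y$, $B = Y \setminus X$ and $C = X \setminus Y$, so that $Y = A \cup B$, $X = A \cup C$, and $A, B, C$ are pairwise disjoint. Since $x(Y) - x(X) = x(B) - x(C)$, membership $x \in \partial^f(X)$ is equivalent to establishing, for every such $Y$, that
\[
f(A \cup B) - f(A \cup C) \leq x(B) - x(C).
\]
The only device needed is the telescoping identity $f(B \mid S) = \sum_{j \in B} f(j \mid S \cup P_j)$, where we fix an arbitrary ordering of $B$ and let $P_j$ denote the elements preceding $j$ (and analogously for $C$), combined with the diminishing-returns bound $f(j \mid S) \geq f(j \mid T)$ for $S \subseteq T$, $j \notin T$.

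For $\ggrow_X$ I would split the left-hand side as $f(A \cup B) - f(A \cup C) = f(B \mid A) - f(C \mid A)$. Expanding $f(B \mid A)$ as a telescoping sum and using $A \cup P_j \supseteq \emptyset$ gives $f(j \mid A \cup P_j) \leq f(j)$ for each $j \in B$, hence $f(B \mid A) \leq \sum_{j \in B} f(j) = \ggrow_X(B)$. Expanding $f(C \mid A)$ and noting that each prefix satisfies $A \cup P_j \subseteq X \setminus j$ gives $f(j \mid A \cup P_j) \geq f(j \mid X \setminus j)$, hence $f(C \mid A) \geq \sum_{j \in C} f(j \mid X \setminus j) = \ggrow_X(C)$; subtracting yields the required bound. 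For $\gshrink_X$ I would instead peel through $X \cup Y = A \cup B \cup C$, writing the left-hand side as $f(B \mid X) - f(C \mid A \cup B)$ (which telescopes correctly since $A \cup C = X$). Now $X \cup P_j \supseteq X$ gives $f(B \mid X) \leq \sum_{j \in B} f(j \mid X) = \gshrink_X(B)$, while $A \cup B \cup P_j \subseteq V \setminus j$ gives $f(C \mid A \cup B) \geq \sum_{j \in C} f(j \mid V \setminus j) = \gshrink_X(C)$, again giving the bound.

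Finally, $\gbar_X$ need not be re-derived from scratch: on $V \setminus X$ it agrees with $\ggrow_X$ (both equal $f(j)$), while on $X$ it agrees with $\gshrink_X$ (both equal $f(j \mid V \setminus j)$). Since $f(j \mid V \setminus j) \leq f(j \mid X \setminus j)$, we have $\gbar_X(B) = \ggrow_X(B)$ and $\gbar_X(C) \leq \ggrow_X(C)$, so $\gbar_X(B) - \gbar_X(C) \geq \ggrow_X(B) - \ggrow_X(C) \geq f(A \cup B) - f(A \cup C)$, and membership follows from the case already settled. The nonemptiness claim is then immediate. I expect the only delicate point to be bookkeeping: in each telescoping expansion one must check that the running prefix ($A \cup P_j$, or $A \cup B \cup P_j$) is nested on the \emph{correct} side of the conditioning set so that diminishing returns is applied in the right direction, since the two genuine supergradients use opposite nestings ($\supseteq \emptyset$ versus $\subseteq X \setminus j$ for $\ggrow_X$, and $\supseteq X$ versus $\subseteq V \setminus j$ for $\gshrink_X$). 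Beyond this sign/direction accounting, there is no real conceptual obstacle.
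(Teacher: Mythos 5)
Your proposal is correct: the decomposition $Y = A \cup B$, $X = A \cup C$ is sound, each telescoping expansion nests the prefix on the right side of the conditioning set, and the domination argument for $\gbar_X$ (equal to $\ggrow_X$ off $X$, coordinatewise smaller on $X$) legitimately reduces that case to the one already proved. The route differs from the paper's in a meaningful way, though the inequalities at the heart of both arguments coincide. The paper does not prove these inequalities at all: it cites the bounds of Nemhauser et al.\ and their submodular loosenings, Eqns.~\eqref{modnembounds1}--\eqref{modnembounds3}, from the literature, and then simply substitutes the definitions of $\ggrow_X$, $\gshrink_X$, $\gbar_X$ to read off membership in $\partial^f(X)$; in particular $\gbar_X$ gets its own cited inequality, Eqn.~\eqref{modnembounds3}, rather than being deduced from the $\ggrow_X$ case. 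Your argument re-derives Eqns.~\eqref{modnembounds1} and~\eqref{modnembounds2} from first principles via telescoping plus diminishing returns --- which is exactly how those bounds are proved in the sources the paper cites --- so your proof is self-contained where the paper's is a citation plus substitution. What the paper's version buys is brevity and an explicit link to the known Nemhauser--Wolsey bounds; what yours buys is transparency about precisely where submodularity enters (once per telescoping sum, in a fixed direction), plus the small economy of dispatching $\gbar_X$ by coordinatewise domination instead of invoking a third inequality.
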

\begin{proof}
For submodular $f$, the following bounds are known to hold~\cite{nemhauser1978}
for all $X,Y \subseteq V$:
\begin{align} 
\label{nembounds} f(Y) \leq f(X) - \sum_{j \in X \backslash Y } f(j| X \backslash j) + \sum_{j \in Y \backslash X} f(j| X \cap Y), \\
\label{nembounds2} f(Y) \leq f(X) - \sum_{j \in X \backslash Y } f(j| X \cup Y \backslash j) + \sum_{j \in Y \backslash X} f(j | X) 
\end{align}
Using submodularity, we can loosen these bounds further to provide tight modular 
upper bounds~\cite{aa09,jegelka2009-esc-nips,jegelkacvpr, rkiyersubmodBregman2012, rkiyeruai2012}:
\looseness-1
\begin{align}
\label{modnembounds1} f(Y) \leq f(X) - \sum_{j \in X \backslash Y } f(j | X \setminus \{j\}) + \sum_{j \in Y \backslash X} f(j | \emptyset)  \\
\label{modnembounds2} f(Y) \leq f(X) - \sum_{j \in X \backslash Y } f(j | V \setminus \{j\}) + \sum_{j \in Y \backslash X} f(j | X) \\
\label{modnembounds3} f(Y) \leq f(X) - \sum_{j \in X \backslash Y } f(j | V \setminus \{j\}) + \sum_{j \in Y \backslash X} f(j | \emptyset). 
\end{align}
From the three bounds above, and substituting the expressions of the supergradients, we may
immediately verify that these are supergradients, namely that $\ggrow_X, \gshrink_X, \gbar_X \in \partial^f(X)$. For example, starting with Eqn.~\eqref{modnembounds1}, we have
that for all $Y \subseteq V$:
\begin{align}
f(Y)  
&\leq f(X) - \sum_{j \in X \backslash Y } f(j | X \setminus \{j\}) + \sum_{j \in Y \backslash X} f(j | \emptyset)  \\
&= f(X)
 - \sum_{j \in X } f(j | X \setminus \{j\}) 
 + \sum_{j \in X \cap Y } f(j | X \setminus \{j\}) 
+ \sum_{j \in Y \backslash X} f(j | \emptyset) \\
&= f(X) - \ggrow_X(X) + \ggrow_X(Y) = 
\growsymb{m}^X(Y),
\end{align}
where $\growsymb{m}^X(Y)$ is the modular upper
bound of $f$ associated with the supergradient $\ggrow_X$
that is tight at $Y=X$.
Similar expansions can start with Eqns.~\eqref{modnembounds2}
and~\eqref{modnembounds3}
which define
$\shrinksymb{m}^X(Y)$ 
and 
$\barsymb{m}^X(Y)$ as the modular upper bounds of $f$,
tight at $Y=X$, associated with the supergradients
$\gshrink_X$ and $\gbar_X$ respectively.
\end{proof}

These three supergradients 
(i.e., Eqns.~\eqref{eq:ggrow_def}--\eqref{eq:gbar_def})
can be used to characterize useful and
practical inner bounds of the superdifferential.
First, we define two helper polyhedra:
\begin{align}
\partial^f_{\emptyset}(X) &\triangleq \{x \in \mathbb{R}^n: f(j) \leq x(j), \forall j \notin X\}, \\
\partial^f_V(X) &\triangleq \{x \in \mathbb{R}^n: f(j | V \backslash j) \geq x(j), \forall j \in X\}.
\end{align}
Then we define the following three polyhedra: 
\rishabh{Jeff}{8/12/2015: we should probably define different names for the following
three polyhedra. I.e., rather than using $i$ as a marker for inner (which is
odd since $i$ is normally an index rather than a letter in text mode) , and then using
the numbers 1, 2, and 3, we should just use the same symbols
we used to distinguish grow, shrink, and bar, i.e., $\growsymb{}$, $\shrinksymb{}$,
and $\barsymb{}$. 
Rather than 
using
$\partial^f_{i, 1}(X)$,
$\partial^f_{i, 2}(Y)$,
and $\partial^f_{i, 3}(Y)$,
lets use the macros: 
$\inpolygrow(X)$,
$\inpolyshrink(X)$, and
$\inpolybar(X)$. I made the change below (as well as fixed a bunch of typos
where $X$ and $Y$ were swapped).
}
\begin{align}
\inpolygrow(X)
&\triangleq \partial^f_1(X) \cap \partial^f_{\emptyset}(X) 
= \{x \in \mathbb{R}^n: f(j | X \backslash j) \geq x(j), \forall j \in X \text{ and } f(j) \leq x(j), \forall j \notin X\},
\label{eq:innerbound_grow}
\\ 
%
\inpolyshrink(X)
&\triangleq \partial^f_2(X) \cap \partial^f_{V}(X) 
= \{x \in \mathbb{R}^n: 
f(j | X) \leq x(j), \forall j \notin X \text{ and }  f(j | V \backslash j) \geq x(j), \forall j \in X \},
\label{eq:innerbound_shrink}
\\
%
\inpolybar(X)
&\triangleq \partial^f_V(X) \cap \partial^f_{\emptyset}(X) 
= \{x \in \mathbb{R}^n: f(j | V \backslash j) \geq x(j), \forall j \in X \text{ and } f(j) \leq x(j), \forall j \notin X\}.
\label{eq:innerbound_bar}
\end{align}
Then note that $\inpolygrow(X)$ is a polyhedron with $\ggrow_X$ as an
extreme point. Similarly $\inpolyshrink(X)$ has $\gshrink_X$,
while $\inpolybar(X)$ has $\gbar_X$, as their respective extreme
points. All these are simple polyhedra, each with a single extreme
point. We also define the polyhedron:
\begin{align}
\inpolygrowshrink(X)
 \triangleq \text{conv}(\inpolygrow(X), \inpolyshrink(X))
\end{align}
where $\text{conv}(., .)$ represents the convex combination of two
polyhedra\footnote{Given two polyhedra $\mathcal P_1, \mathcal P_2$,
  $\mathcal P = \text{conv}(\mathcal P_1, \mathcal P_2) = \{\lambda
  x_1 + (1 - \lambda) x_2, \lambda \in [0, 1], x_1 \in \mathcal P_1,
  x_2 \in \mathcal P_2\}$}. Then $\inpolygrowshrink(X)$ is a
polyhedron which has $\ggrow_X$ and $\gshrink_X$ as its extreme
points. The following lemma then characterizes these
polyhedra and how they are inner bounds of the
superdifferential:
\begin{lemma}[Superdifferential Inner Bound Relationships]
\label{thm:super_innerbound_relations}
Given a submodular function $f$, 
\begin{align}
\inpolybar(X) \subseteq \inpolygrow(X) \subseteq \inpolygrowshrink(X) \subseteq \partial^f(X), \\
\inpolybar(X) \subseteq \inpolyshrink(X) \subseteq \inpolygrowshrink(X) \subseteq \partial^f(X).
\end{align}
\end{lemma}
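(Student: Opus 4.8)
The plan is to establish the chain in three stages: handle the two leftmost inclusions by a coordinate-wise comparison of defining inequalities using diminishing returns, dispatch the middle inclusions as immediate consequences of the convex-hull definition, and reserve the rightmost inclusion $\inpolygrowshrink(X) \subseteq \partial^f(X)$ as the one substantive step.

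First I would verify $\inpolybar(X) \subseteq \inpolygrow(X)$ and $\inpolybar(X) \subseteq \inpolyshrink(X)$ directly from the definitions in Eqns.~\eqref{eq:innerbound_grow}--\eqref{eq:innerbound_bar}. For the first, the constraints on coordinates $j \notin X$ (namely $x(j) \geq f(j)$) coincide in $\inpolybar(X)$ and $\inpolygrow(X)$, so only the $j \in X$ constraints need attention; there, diminishing returns gives $f(j \mid V \setminus j) \leq f(j \mid X \setminus j)$ since $X \setminus j \subseteq V \setminus j$, whence any $x$ satisfying $x(j) \leq f(j \mid V \setminus j)$ also satisfies $x(j) \leq f(j \mid X \setminus j)$. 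For the second, the $j \in X$ constraints coincide, and for $j \notin X$ diminishing returns gives $f(j \mid X) \leq f(j \mid \emptyset) = f(j)$, so $x(j) \geq f(j)$ forces $x(j) \geq f(j \mid X)$. Both inclusions then follow. The inclusions $\inpolygrow(X) \subseteq \inpolygrowshrink(X)$ and $\inpolyshrink(X) \subseteq \inpolygrowshrink(X)$ are immediate from $\inpolygrowshrink(X) = \mathrm{conv}(\inpolygrow(X), \inpolyshrink(X))$, taking $\lambda = 1$ and $\lambda = 0$ respectively.

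The heart of the argument is $\inpolygrowshrink(X) \subseteq \partial^f(X)$. Since $\partial^f(X)$ is a polyhedron and hence convex, it suffices to prove $\inpolygrow(X) \subseteq \partial^f(X)$ and $\inpolyshrink(X) \subseteq \partial^f(X)$ separately; convexity then absorbs the convex hull. To see the first, fix $x \in \inpolygrow(X)$ and an arbitrary $Y \subseteq V$; membership in $\partial^f(X)$ requires $f(Y) - f(X) \leq x(Y) - x(X) = x(Y \setminus X) - x(X \setminus Y)$. Using the defining inequalities of $\inpolygrow(X)$ — namely $x(j) \geq f(j)$ for $j \notin X$ and $x(j) \leq f(j \mid X \setminus j)$ for $j \in X$ — I would lower-bound the right-hand side by $\sum_{j \in Y \setminus X} f(j) - \sum_{j \in X \setminus Y} f(j \mid X \setminus j)$, reducing the claim to exactly the tight modular upper bound in Eqn.~\eqref{modnembounds1} already established in the proof of Theorem~\ref{altviewsthm2}. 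The inclusion $\inpolyshrink(X) \subseteq \partial^f(X)$ follows identically, this time invoking Eqn.~\eqref{modnembounds2}.

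I expect this rightmost inclusion to be the main obstacle — not because it is deep, but because it is where one must connect the coordinate-wise (orthant-style) descriptions of the inner polyhedra to the global, set-indexed inequalities defining $\partial^f(X)$. The key realization is that the vertex inequalities of $\inpolygrow(X)$ and $\inpolyshrink(X)$ are precisely the ingredients of the two tight modular upper bounds, so once $x(Y) - x(X)$ is split as $x(Y \setminus X) - x(X \setminus Y)$, no estimate beyond Theorem~\ref{altviewsthm2} and the convexity of $\partial^f(X)$ is needed.
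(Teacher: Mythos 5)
Your proposal is correct and follows essentially the same route as the paper: the paper's proof is the single sentence that the lemma ``follows directly from the definitions of the supergradients, the corresponding polyhedra, and of submodularity,'' and your argument---coordinate-wise comparisons via diminishing returns for the leftmost inclusions, and the reduction of $\inpolygrow(X), \inpolyshrink(X) \subseteq \partial^f(X)$ to Eqns.~\eqref{modnembounds1} and~\eqref{modnembounds2} followed by convexity of $\partial^f(X)$---is precisely that outline carried out in detail. The one substantive point the paper leaves implicit, and which you correctly supply, is that membership in $\partial^f(X)$ must be verified for \emph{every} point of the inner polyhedra (which are unbounded), not merely for their extreme points $\ggrow_X$ and $\gshrink_X$ as in Theorem~\ref{altviewsthm2}.
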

\begin{proof}
  The proof of this lemma follows directly from the definitions of the
  supergradients, the corresponding polyhedra, and of submodularity.
\end{proof}
 
\begin{figure}[tbh]
\centering{
\includegraphics[width = 0.6\textwidth,]{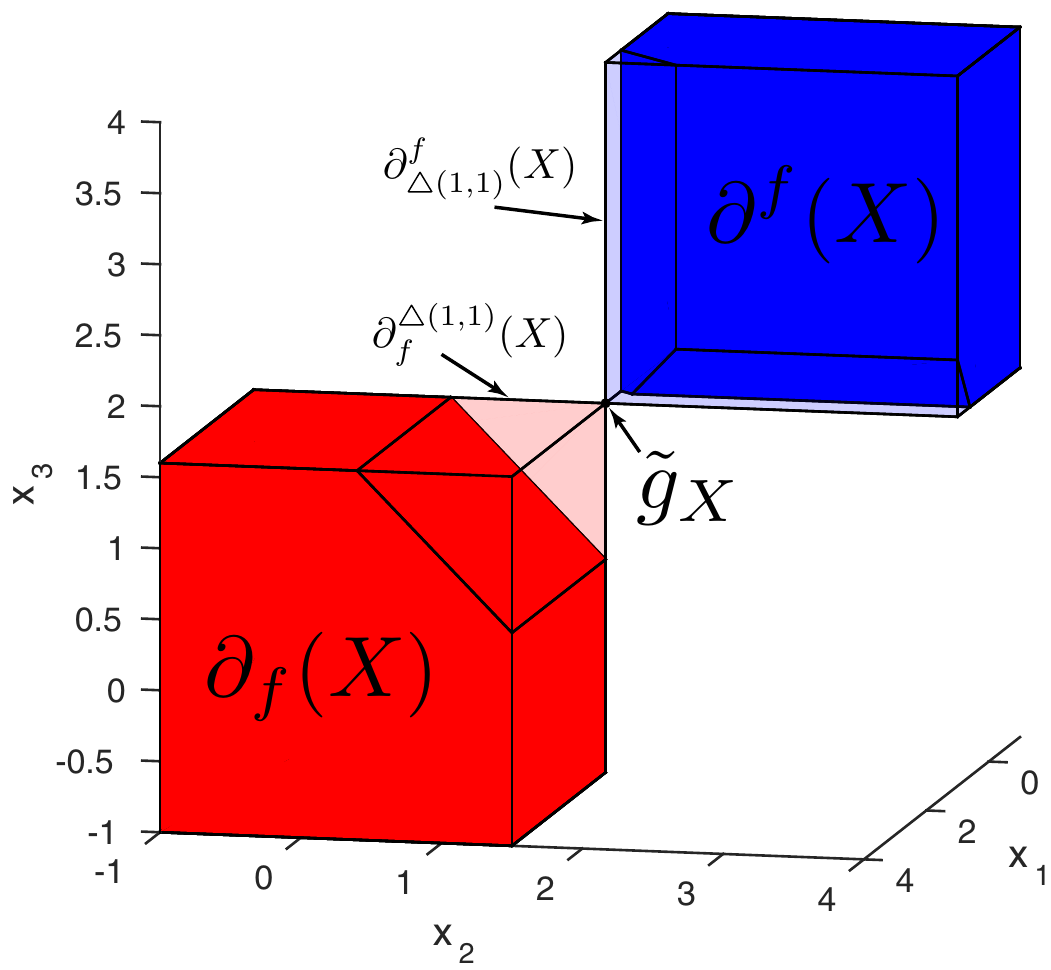}
}
\caption{An illustration to compare the relative positions of the sub-
  and superdifferentials on the submodular function $f:
  2^{\{1, 2, 3\}} \rightarrow \mathbb{R}$ defined 
  as shown in Table~\ref{tab:submod_instance_1}.
The 
  subdifferentials appear in red, while the superdifferential is shown in
  blue and are defined at $X = \{1\}$.  Also shown
  is the point $\tilde g_X$ defined in Eqn.~\eqref{eq:gtilde} and
  the corresponding outer bounds 
  ($\partial_f^{\symmdiff(1, 1)}(X)$ and
$\partial^f_{\symmdiff(1, 1)}(X)$ defined in Eqns~\eqref{localsub} and
\eqref{localsup}) of the two semidifferentials.
  \rishabh{Jeff}{I think this
    figure is probably not correct as the supergradients don't have
    this shape if the axes are standard. Should fix
    this.}\jeff{Rishabh}{Added a real example here.}
  \rishabh{Jeff}{Make vector version}\jeff{Rishabh}{Added a vector
    version above.}
   \rishabh{Jeff}{Thanks. I've annotated the figure a bit more in a new figure pdf and added more
     comments as well as fixed many typos.}
}
\label{subsupdiffrel}
\end{figure}

\subsubsection{Connections between the subdifferential and superdifferential at $X$}
\label{sec:conn-betw-subd}

There are some interesting connections between $\partial_f(X)$ and
$\partial^f(X)$. Firstly, it is clear from the definitions that
$\partial_f(X) \subseteq \partial_f^{\symmdiff(1, 1)}(X)$ and
$\partial^f(X) \subseteq \partial^f_{\symmdiff(1, 1)}(X)$. Notice also
that both $\partial_f^{\symmdiff(1, 1)}(X)$ and
$\partial^f_{\symmdiff(1, 1)}(X)$ (from Eqns.~\eqref{localsub}
and~\eqref{localsup} respectively) are simple polyhedra containing a
single extreme point $\tilde{g}_X \in \mathbb R^n$ defined as follows:
\begin{align}
\tilde{g}_X(j) \triangleq 
\begin{cases}
f(j |X \setminus j) & \text{ if }  j \in X\\
f(j | X) & \text { if } j \notin X\\
\end{cases}
\label{eq:gtilde}
\end{align}
The point $\tilde{g}_X$ is, in general, neither a subgradient nor a
supergradient at $X$. Each of the semidifferentials, $\partial_f(X)$
and $\partial^f(X)$, however, are contained within a (distinct)
polyhedra defined via $\tilde{g}_X$. In particular, 
$\tilde{g}_X \in \partial_f^{\symmdiff(1, 1)}(X)$
and 
$\tilde{g}_X \in \partial^f_{\symmdiff(1, 1)}(X)$,
and $\tilde{g}_X$ is
an extreme point both of $\partial_f^{\symmdiff(1, 1)}(X)$ and
$\partial^f_{\symmdiff(1, 1)}(X)$.
An illustration of this is in
Figure~\ref{subsupdiffrel}. The subdifferential $\partial_f(X)$ is the
red polyhedron, while the superdifferential $\partial^f(X)$ is the
blue polyhedron. Moreover, the light red and the light blue polyhedra
are $\partial_f^{\symmdiff(1, 1)}(X)$ and $\partial^f_{\symmdiff(1,
  1)}(X)$, respectively, defined at $X = \{1\}$.

Since $\partial^f_{\symmdiff(1, 1)}(X)$ is not a superdifferential,
but rather an outer bound on one, the modular function
\begin{align}
\tilde m^X(Y) \triangleq f(X) - \tilde g_X(X) + \tilde g_X(Y)
\end{align}
is not everywhere a modular upper bound on the submodular function
$f$, although it is tight at $Y=X$. If we consider, however, a subset
$[\emptyset , X ] \cup [X , V ] = \{ Y \in 2^V : Y \subseteq X \text{
  or } Y \supseteq X \}$ of sets, then a supergradient property is
retained.
\begin{lemma}
\label{thm:supergrad_subset_supersets}
Given $X \subseteq V$, and any $Y \in [\emptyset , X ] \cup [X , V ]$, then
\begin{align}
  f(Y) \leq f(X) - \tilde g_X(X) + \tilde g_X(Y)
\end{align}
\begin{proof}
Suppose $Y \subseteq X$, then
\begin{align}
  f(X) - f(Y)
=   f( X \setminus Y | Y)  \geq \sum_{j \in X \setminus Y} f(j | X \setminus j)
\end{align}
If, on the other hand, $Y \supseteq X$, then
\begin{align}
  f(Y) - f(X) 
= f(Y \setminus X | X) \leq \sum_{j \in Y \setminus X} f(j | X)
\end{align}
Combining the two together yields:
\begin{align}
f(Y) &\leq f(X) + \sum_{j \in Y \setminus X} f(j|X) - \sum_{j \in X \setminus Y} f(j | X \setminus j) \\
    &= f(X) + \tilde g_X(Y \setminus X) - \tilde g_X(X \setminus Y)
 = f(X) + \tilde g_X(Y) - \tilde g_X(X)
\end{align}
\end{proof}
\end{lemma}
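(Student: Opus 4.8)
The plan is to verify the claimed inequality separately on each of the two intervals $[\emptyset, X]$ and $[X, V]$ whose union forms the index set, since on each interval only one of the two branches in the definition of $\tilde g_X$ is active. On subsets $Y \subseteq X$ the difference $\tilde g_X(Y) - \tilde g_X(X) = -\tilde g_X(X \setminus Y)$ involves only the values $\tilde g_X(j) = f(j \mid X \setminus j)$ for $j \in X$, while on supersets $Y \supseteq X$ the difference $\tilde g_X(Y) - \tilde g_X(X) = \tilde g_X(Y \setminus X)$ involves only the values $\tilde g_X(j) = f(j \mid X)$ for $j \notin X$. Thus $\tilde g_X$ behaves like $\ggrow_X$ on $X$ and like $\gshrink_X$ off $X$, which is exactly what makes it a valid supergradient when restricted to these two intervals even though it fails as a global one.

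First, for $Y \subseteq X$, I would rewrite the target inequality in the equivalent form $f(X) - f(Y) \geq \tilde g_X(X \setminus Y) = \sum_{j \in X \setminus Y} f(j \mid X \setminus j)$. The left-hand side is the marginal $f(X \setminus Y \mid Y)$, which I expand by adding the elements of $X \setminus Y$ to $Y$ one at a time, producing a telescoping sum of marginals $f(j \mid S_j)$ with $Y \subseteq S_j \subseteq X \setminus \{j\}$. Diminishing returns then gives $f(j \mid S_j) \geq f(j \mid X \setminus \{j\})$ for each such $j$, and summing yields the required lower bound.

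Second, for $Y \supseteq X$, I would rewrite the target as $f(Y) - f(X) \leq \tilde g_X(Y \setminus X) = \sum_{j \in Y \setminus X} f(j \mid X)$. The left-hand side is $f(Y \setminus X \mid X)$; expanding by adding the elements of $Y \setminus X$ to $X$ one at a time gives a telescoping sum of marginals $f(j \mid T_j)$ with $X \subseteq T_j$, and diminishing returns now bounds each $f(j \mid T_j) \leq f(j \mid X)$, so summing yields the required upper bound. Finally I would combine the two cases into the single displayed inequality, using that $\tilde g_X(Y) - \tilde g_X(X) = \tilde g_X(Y \setminus X) - \tilde g_X(X \setminus Y)$ holds in either case (one of the two terms vanishing).

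The argument is entirely elementary once the case split is made; the only thing to get right is the direction of the diminishing-returns inequality, which flips between the two cases because the conditioning sets shrink toward $X \setminus \{j\}$ in the subset case but grow away from $X$ in the superset case. The main conceptual point --- rather than a genuine obstacle --- is recognizing that the failure of $\tilde g_X$ to be a global supergradient is confined to the ``incomparable'' sets $Y$ with $Y \not\subseteq X$ and $Y \not\supseteq X$, which are precisely the sets excluded from $[\emptyset, X] \cup [X, V]$.
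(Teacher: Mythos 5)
Your proposal is correct and follows essentially the same route as the paper's proof: the same case split into $Y \subseteq X$ and $Y \supseteq X$, with the two key bounds $f(X \setminus Y \mid Y) \geq \sum_{j \in X \setminus Y} f(j \mid X \setminus j)$ and $f(Y \setminus X \mid X) \leq \sum_{j \in Y \setminus X} f(j \mid X)$, then recombined via the modularity identity $\tilde g_X(Y) - \tilde g_X(X) = \tilde g_X(Y \setminus X) - \tilde g_X(X \setminus Y)$. The only difference is that you spell out the telescoping-plus-diminishing-returns justification for those two bounds, which the paper leaves implicit as standard consequences of submodularity.
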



\subsubsection{Examples of inner and outer bounds for specific
  superdifferentials}
\label{sec:examples-inner-outer}

\begin{wraptable}[14]{R}{0.2\textwidth}
\vspace{-0.9\baselineskip}
\begin{tabular}{ |c | c| } \hline
 $X$ & $f(X)$      \\ \hline\hline
 $\emptyset$ & $0$ \\ \hline
 $\{1\}$ & $1$ \\ \hline
 $\{2\}$ & $2$ \\ \hline
 $\{3\}$ & $2$ \\ \hline
 $\{1, 2\}$ & $2.5$ \\ \hline
 $\{2, 3\}$ & $3$ \\ \hline
 $\{1, 3\}$ & $2.8$ \\ \hline
 $\{1, 2, 3\}$ & $3$ \\ \hline
\end{tabular}
\vspace{-0.4\baselineskip}
\caption{An illustrative submodular function defined on $V = \{1,2,3\}$.}
\label{tab:submod_instance_1}
\end{wraptable}

We next investigate the inner and outer bounds of specific
instances. First, consider the superdifferential 
$\partial^f(\emptyset)$
at the empty set $X=\emptyset$. In this case, notice that all three
supergradients are the same vector, i.e $\ggrow_{\emptyset} =
\gshrink_{\emptyset} = \gbar_{\emptyset}$, with the individual elements
being $\ggrow_{\emptyset}(j) = \gshrink_{\emptyset}(j) =
\gbar_{\emptyset}(j) = f(j), j \in V$. Therefore, in this case, the inner 
bounds
(Eqns.~\eqref{eq:innerbound_grow}--\eqref{eq:innerbound_bar})
are
exactly the superdifferential itself, and 
$\inpolygrow(\emptyset) = \inpolyshrink(\emptyset) =
\inpolybar(\emptyset) = \inpolygrowshrink(\emptyset)
= \partial^f(\emptyset)$.  
Also, the largest outer bound $\partial^f_{\symmdiff(1, 1)}(\emptyset)$
from Eqn.~\eqref{localsup}
has the relationship $\partial^f_{\symmdiff(1, 1)}(\emptyset)
= \partial^f(\emptyset)$ since  
$\tilde{g}_{\emptyset}$
is also identical to these supergradients.
Therefore, in this case, all of the inner
and outer polyhedral bounds are identical to the
superdifferential. This phenomenon also occurs for the
superdifferential at the ground set $\partial^f(V)$. For other sets,
however, this does not hold and the relationship between the inner and
outer bounds and the superdifferential can be strict.

\begin{figure}[tbh]
\centering{
\includegraphics[width = 0.9\textwidth]{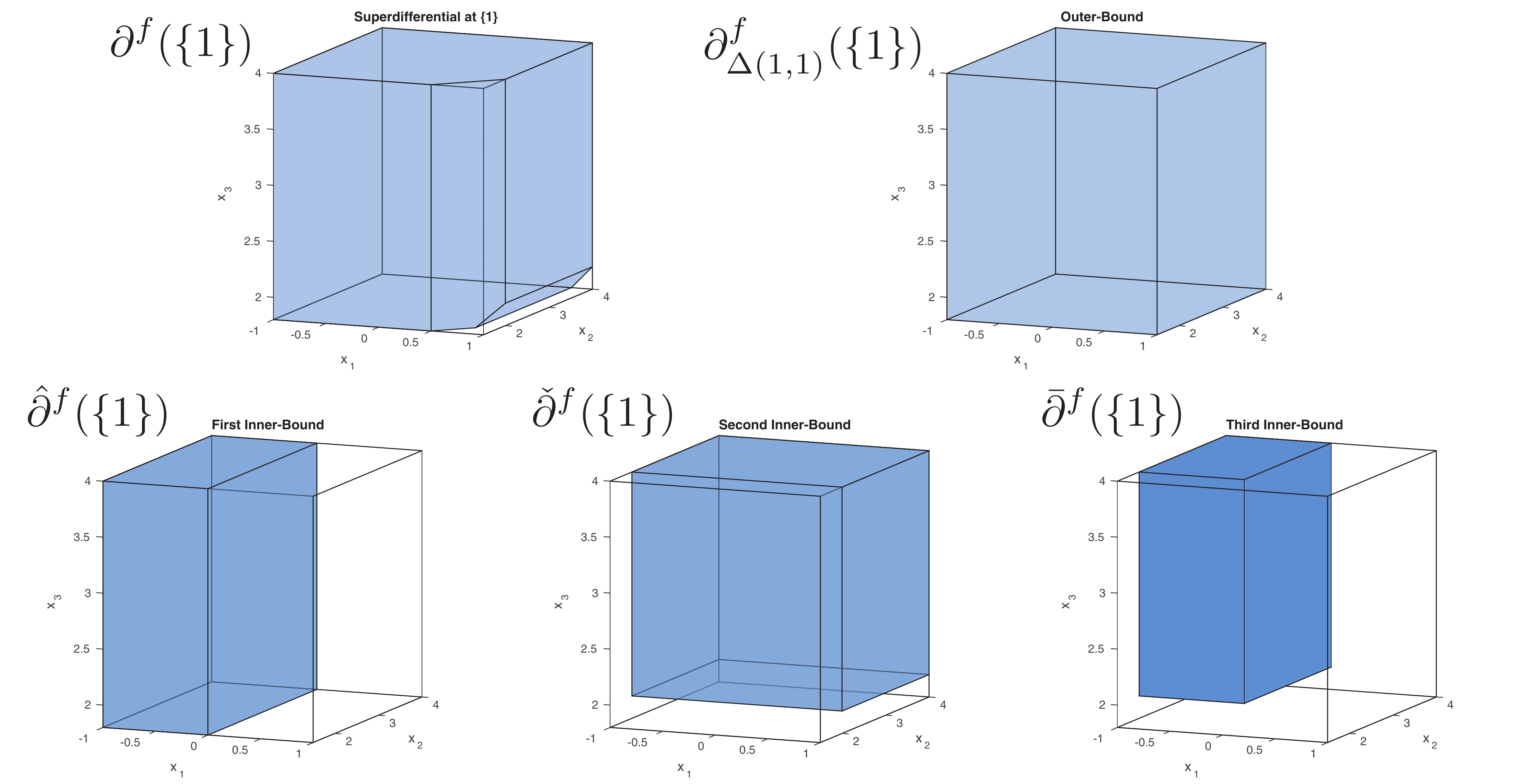}
}
\caption{A visualization of the inner and outer bounds of the superdifferential. The submodular function is given in
Table~\ref{tab:submod_instance_1}.
The shown superdifferential $\partial^f(X)$ is at $X = \{1\}$. The first
figure (top left) shows the submodular supergradient $\partial^f(X)$ itself,
while the second one (top right) is the outer bound $\partial^f_{\Delta(1,
  1)}(X)$. The bottom three figures show the inner bounds
$\inpolygrow(\{1\})$,
$\inpolyshrink(\{1\})$,
and 
$\inpolybar(\{1\})$.
The polyhedral inner bound
$\inpolygrowshrink(\{1\}) = 
\text{conv}(\inpolygrow(\{1\}), \inpolyshrink(\{1\}))$
is not shown.
\rishabh{Jeff}{Fix this figure to show a real instance in the 2D case,
  and give the corresponding submodular
  function.}}\jeff{Rishabh}{Added a real example!}\rishabh{Jeff}{can
you please make this vector. Also, I'm not getting a good 3D sense
from this figure, can you try rotating it, or something with
the colors?}\jeff{Rishabh}{I added two figures above, which are both
vectors, and give different views of this.}
\jeff{Rishabh}{I've changed the figure to use the new notation.}
\label{3douterinnersupervis}
\end{figure}

In the next example, we analyze the inner and outer bounds of the
superdifferentials for some specific submodular functions in order to
get further intuition about them.

%

\begin{example}
\label{ex3}
In this example, we show how in 2-D some of the inner and
outer bounds are exact, and other of the inner bounds (resp.\
outer bounds) are strictly smaller (resp.\ larger) than their
corresponding exact superdifferentials.
Assume the ground set is $V = \{1,2\}$. From Lemma~\ref{thm:super_at_empty_and_full}, we know that the superdifferentials $\partial^f(\emptyset)$ and $\partial^f(\{1, 2\})$ are simple polyhedra and, as mentioned 
at the beginning of Section~\ref{sec:examples-inner-outer},
the inner and outer bounds are identical to the superdifferential itself. 

Consider, however, $\partial^f(\{1\})$. Recall from Example~\ref{ex1}
that the extreme points here are $\{f(\{1\}), f(\{2\})\}$ and
$\{f(\{1\} | \{2\}), f(\{2\} | \{1\})\}$ respectively. Notice that
$\ggrow = (f(\{1\}), f(\{2\}))$ and $\gshrink = (f(\{1\} | \{2\}),
f(\{2\} | \{1\}))$, and hence both of these supergradients are extreme
points of the superdifferential in two dimensions. Also note that
$\gbar = (f(\{1\} | \{2\}), f(\{2\}))$. Therefore, $\gbar$ lies in
the interior of the superdifferential for a strictly submodular
function\footnote{A strict submodular function is a submodular
  function where none of the defining inequalities act as equalities.}
since (considering inequalities in Eqns.~\eqref{eq1d2}--\eqref{eq3d2}
governing $\partial^f(\{1\})$ from Example~\ref{ex1}), we have $x_1 =
f(\{1\} | \{2\}) < f(\{1\})$, $x_2 = f(\{2\}) > f(\{2\} | \{1\})$, and
$x_1 - x_2 = f(\{1\} | \{2 \}) - f(\{2\}) = f(\{1,2\}) - 2f(\{2\}) <
f(\{1\}) - f(\{2\})$ (which follows since $f(\{1\}, \{2\}) < f(\{1\})
+ f(\{2\})$).  In this case, therefore,
Lemma~\ref{thm:super_innerbound_relations} becomes
%
$\inpolybar(\{1\}) \subset \inpolyshrink(\{1\}) \subset \inpolygrowshrink(\{1\}) = \partial^f(\{1\})$
and 
$\inpolybar(\{1\}) \subset \inpolygrow(\{1\}) \subset \inpolygrowshrink(\{1\}) = \partial^f(\{1\})$.

Similarly, observe from Eqn.~\eqref{eq:gtilde} that $\tilde{g}_{\{1\}}
= (f(\{1\}), f(\{2\} | \{1\}) \in 
\partial^f_{\symmdiff(1, 1)}(\{1\})$ 
does not belong to $\partial^f(\{1\})$
when $f$ is strictly submodular since it violates Eqn.~\eqref{eq3d2},
i.e., $f(\{1\}) - f(\{2\} | \{1\}) \leq f(\{1\}) - f(\{2\})$ (this
does not hold since it would require $f(\{2\} | \{1\}) \geq f(\{2\})$
which violates strict submodularity). Hence $\partial^f_{\symmdiff(1,
  1)}(\{1\}) \supset \partial^f(\{1\})$. The same phenomena is true
for $\partial^f(\{2\})$.
\end{example}

We can also consider the superdifferential in the three dimensional
setting when $V = \{1, 2, 3\}$. In this case we must consider an
specific submodular function instance, and this
is done in Table~\ref{tab:submod_instance_1}.
Consider $\partial^f(\{1\})$. An illustration of
this is in Figure~\ref{3douterinnersupervis}. The 
various polyhedra are shown shaded in blue.
Note that the rectangular polyhedron (upper right case)
is the outer bound $\partial^f_{\symmdiff(1,
  1)}(X)$. In this case, 
it holds that 
\begin{align}
\inpolybar(\{1\}) 
\subset \inpolygrow(\{1\}) 
\subset \inpolygrowshrink(\{1\}) 
\subset \partial^f(\{1\}) \subset \partial^f_{\symmdiff(1,1)}(\{1\})
\intertext{and}
\inpolybar(\{1\}) 
\subset \inpolyshrink(\{1\}) 
\subset \inpolygrowshrink(\{1\}) 
\subset \partial^f(\{1\}) \subset \partial^f_{\symmdiff(1,1)}(\{1\})
\end{align}
That is, the subset relationships are strict in this case.

\subsubsection{Superdifferentials of subclasses of submodular functions}
\label{sec:superd-subcl-subm}

While it is hard to characterize superdifferentials of general
submodular functions, certain subclasses have easy
characterizations. An important such subclass of the class of
$M^{\natural}$-concave\footnote{In this paper, we consider only those
  $M^{\natural}$-concave functions defined on $2^V = \{ 0,1\}^V$ while
  $M^{\natural}$-concave functions are typically
  defined~\cite{murota2003discrete} on $\mathbb Z^V$.}
functions~\cite{murota2003discrete} defined on $2^V$. 
These include a number of special
cases like matroid rank functions, concave over cardinality functions
etc. All $M^{\natural}$-concave functions defined on $2^V$ are submodular on $2^V$
but not vice verse. 
In some sense, $M^{\natural}$-concave functions very closely resemble concave
functions. In particular, one can maximize these functions exactly in
polynomial time~\cite{murota2003discrete}. These functions also admit
simple characterizations of their superdifferential. In particular, the
superdifferential of this class of functions can be represented using
only $O(n^2)$ inequalities. The following theorem provides a compact
representation of the superdifferential of these functions.
\begin{lemma}
Given a submodular function $f$ which is also $M^{\natural}$-concave on $\{0,1\}^V$, its superdifferential satisfies:
\begin{align}
\partial^f(X) = \partial^f_{\symmdiff(2, 2)}(X).
\end{align}
In particular, it can be characterized via $O(n^2)$ inequalities.
\end{lemma}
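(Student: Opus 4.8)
The plan is to prove the two inclusions separately. The inclusion $\partial^f(X) \subseteq \partial^f_{\symmdiff(2,2)}(X)$ is immediate, since by Eqn.~\eqref{eq:super_diff_outer_bounds} the right-hand side is defined by a subset of the inequalities defining $\partial^f(X)$. All the work is in the reverse inclusion $\partial^f_{\symmdiff(2,2)}(X) \subseteq \partial^f(X)$, and the first step is to reinterpret membership. Fix $x \in \mathbb{R}^n$ and define the modular-shifted function $g(Y) \triangleq f(Y) - x(Y)$. Subtracting a modular function from $f$ preserves the $M^{\natural}$-concave exchange axiom (both move types in the axiom conserve total $x$-mass across the two resulting sets), so $g$ is again $M^{\natural}$-concave, and by definition $x \in \partial^f(X)$ if and only if $X \in \argmax_{Y \subseteq V} g(Y)$. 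Using Lemma~\ref{thm:super_partial_one_two_is_esy} together with the definition of the swap constraints in $\partial^f_{3,\symmdiff(2,2)}(X)$, membership $x \in \partial^f_{\symmdiff(2,2)}(X)$ is equivalent to saying that $g(X) \geq g(X \setminus \{i\})$ for all $i \in X$, $g(X) \geq g(X \cup \{j\})$ for all $j \notin X$, and $g(X) \geq g((X \setminus \{i\}) \cup \{j\})$ for all $i \in X$, $j \notin X$; that is, $X$ is a \emph{local} maximizer of $g$ with respect to single-element deletions, additions, and swaps.

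The theorem thus reduces to the assertion that, for an $M^{\natural}$-concave $g$, local optimality under these three move types implies global optimality --- a hallmark property of $M^{\natural}$-concavity~\cite{murota2003discrete}. I would establish this by a minimal-counterexample argument. Suppose $g(Y) > g(X)$ for some $Y$, and choose such a $Y$ minimizing the Hamming distance $|X \symmdiff Y|$. If $X \setminus Y \neq \emptyset$, pick $i \in X \setminus Y$ and apply the exchange axiom to the pair $(X,Y)$ at $i$: either $g(X) + g(Y) \leq g(X \setminus \{i\}) + g(Y \cup \{i\})$, or there is $j \in Y \setminus X$ with $g(X) + g(Y) \leq g((X \setminus \{i\}) \cup \{j\}) + g((Y \cup \{i\}) \setminus \{j\})$. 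In the first case, local optimality gives $g(X \setminus \{i\}) \leq g(X)$, whence $g(Y \cup \{i\}) \geq g(Y) > g(X)$ while $|X \symmdiff (Y \cup \{i\})| = |X \symmdiff Y| - 1$, contradicting minimality. In the second case, the swap bound $g((X \setminus \{i\}) \cup \{j\}) \leq g(X)$ forces $g((Y \cup \{i\}) \setminus \{j\}) \geq g(Y) > g(X)$ with Hamming distance reduced by two, again a contradiction. The remaining case $X \subsetneq Y$ is symmetric: pick $j \in Y \setminus X$ and apply the axiom to $(Y,X)$ at $j$; since $X \setminus Y = \emptyset$ only the non-swap alternative can fire, yielding $g(Y \setminus \{j\}) \geq g(Y) > g(X)$ with smaller distance.

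I expect the main obstacle to be handling the exchange axiom cleanly across all boundary configurations of $X$ relative to $Y$ --- in particular, verifying that the move produced by the axiom is always one whose defining inequality is guaranteed by $\partial^f_{\symmdiff(2,2)}(X)$ (a deletion, addition, or swap), and that each application strictly decreases $|X \symmdiff Y|$ so that the induction terminates. Once local-implies-global is in hand, the two inclusions combine to give $\partial^f(X) = \partial^f_{\symmdiff(2,2)}(X)$. The inequality count then follows by collecting constraints: $\partial^f_1(X)$ and $\partial^f_2(X)$ contribute $O(n)$ inequalities each (Lemma~\ref{thm:super_partial_one_two_is_esy}), while the swap constraints in $\partial^f_{3,\symmdiff(2,2)}(X)$ number at most $|X| \cdot |V \setminus X| = O(n^2)$, so $\partial^f(X)$ admits a representation with $O(n^2)$ inequalities.
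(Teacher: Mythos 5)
Your proposal is correct, but it takes a genuinely different route from the paper. The paper's proof is essentially an appeal to a known result: after stating the $M^{\natural}$-concave exchange axiom, it invokes Theorem 6.61 of \cite{murota2003discrete}, which directly supplies the description of the superdifferential (stated there for subdifferentials of $M^{\natural}$-convex functions) via inequalities indexed by sets of Hamming distance at most two from $X$, and identifies that description with $\partial^f_{\symmdiff(2,2)}(X)$. You instead prove the equality from first principles: you translate $x \in \partial^f(X)$ into global optimality of $X$ for the shifted function $g = f - x$, translate $x \in \partial^f_{\symmdiff(2,2)}(X)$ into local optimality of $X$ under single-element deletions, additions, and swaps (correctly using Lemma~\ref{thm:super_partial_one_two_is_esy} for the first two families and the definition of $\partial^f_{3,\symmdiff(2,2)}(X)$ for the swaps), note that subtracting a modular function preserves the exchange axiom, and then establish local-implies-global optimality for $M^{\natural}$-concave functions by a minimal-Hamming-distance counterexample argument. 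Your case analysis is sound: in each branch of the exchange axiom, the local constraints force the ``other'' set produced by the exchange to retain value exceeding $g(X)$ while its symmetric difference with $X$ strictly drops, and when $X \subsetneq Y$ the swap branch is vacuous so the deletion/addition branch must fire, so the descent terminates. The trade-off: the paper's route is shorter and rests on a standard theorem of discrete convex analysis, while yours is self-contained, re-derives the local-to-global optimality principle that underlies the cited theorem, and makes explicit the interpretation --- the same one driving Section~\ref{submodmaxopt} --- that $\partial^f_{\symmdiff(2,2)}(X)$ encodes local optimality and $\partial^f(X)$ encodes global optimality of the modular-shifted function.
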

\begin{proof}
A set function $\mu$ is said to be $M^{\natural}$ concave~\cite{murota2008submodular}, if for any $X, Y \subseteq V$ and any $i \in X \backslash Y$, we have that either the following inequality is true:
\begin{align}
\mu(X) + \mu(Y) \leq \mu(X \backslash \{ i \} ) + \mu(Y \cup \{ i \}),
\end{align}
or if not, then there is some $j \in Y \backslash X$ where:
\begin{align}
\mu(X) + \mu(Y) \leq \mu( (X \backslash \{i \}) \cup \{j\}) + \mu( (Y \cup \{i\}) \backslash \{j\}).
\end{align}
This is called the exchange property for said functions.  We then
invoke Theorem 6.61 in \cite{murota2003discrete} where the authors
show that for a $M^{\natural}$ {\em convex} function (which is supermodular
on $2^V$ and is defined using the opposite inequality to the
above), its subdifferential (which in fact corresponds
to a superdifferential of a submodular function)
can be expressed by just considering sets
$Y$ satisfying $|X \backslash Y| \leq 1, |Y \backslash X| \leq 1$ (i.e.,
of Hamming distance less than two). In particular, we have that,
\begin{align}
\partial^{\mu}(X) = \{x \in \mathbb{R}^n: &x(j) \leq \mu(j | X \backslash j), \forall j \in X\\
			 							& x_j \geq \mu(j | X), \forall j \notin X \\
										& x_i - x_j \leq \mu(X) - \mu(X \cup j \backslash i), \forall i \in X, j \notin X\}
\end{align}
Hence the superdifferential of a $M^{\natural}$ concave function (which is submodular) can be expressed with the same number of inequalities and the corresponding polyhedron is $\partial^f_{\symmdiff(2, 2)}(X)$. 
\rishabh{Jeff}{3/31/2015: Please give more details for this proof. I.e. give a complete definition of $M^{\natural}$ convex functions in the binary case. I think we should do this a bit more cavalierly just in case something
is not right, as there is a bit of subtlety when defining $M^{\natural}$ convex functions. 
See in particular his \url{http://www.kurims.kyoto-u.ac.jp/~kenkyubu/bessatsu/open/B23/pdf/B23-10.pdf}
paper section 5.1, page 203 for an easy definition to use.}\jeff{Rishabh}{Yes, I have added a lot more details above.}
\end{proof}
\rishabh{Jeff}{
This is pretty powerful. What would be useful to show is
if $\partial^f(X) = \partial^f_{\symmdiff(2, 2)}(X)$ is a necessary and sufficient condition
for $M^{\natural}$ concave functions (and we'd need a full proof for this). Moreover, for other fixed $\ell > 2$,
the condition $\partial^f(X) = \partial^f_{\symmdiff(\ell,\ell)}(X)$ might define a chain of subclasses of submodular
functions that can be maximized in polynomial time (for fixed $\ell$).
}\jeff{Rishabh}{Yes, I think this is an interesting problem. I have added a description of this as an open problem towards the end.}



\subsection{Generalized Submodular Upper Polyhedron}
\label{sec:gener-subm-upper}


In this section, we generalize the submodular upper polyhedron from
Section~\ref{uppersubpolysec} in a manner analogous to how the
generalized submodular lower polyhedron of Section~\ref{gensubmodpoly}
generalized the submodular (lower) polyhedron of
Section~\ref{submodpoly}. Unlike for the submodular lower polyhedron case,
however, for the generalized submodular upper polyhedron some real
utility will ensue.

We define the \emph{generalized submodular upper polyhedron} as the set
of affine upper bounds of $f$ as follows:
\begin{align}
\mathcal P^f_{\text{gen}} \triangleq \{(x, c), x \in \mathbb{R}^n, c \in \mathbb{R}: x(X) + c \geq f(X), \forall X \subseteq V\}
\label{eq:gen_sub_up_poly}
\end{align}
Again it is easy to see that $\mathcal P^f_{\text{gen}} \cap \{(x, c): c = 0\} = \{(x, c): x \in \mathcal P^f, c = 0\}$. In other words, the slice $c = 0$ of the \emph{generalized submodular upper polyhedron} is the submodular upper polyhedron of $f$. Also note that the inequality at $X = \emptyset$ implies that $c \geq 0$.
This polyhedron shall prove to be useful while defining concave extensions of $f$. The generalized submodular upper polyhedron also has interesting connections with the superdifferentials. In particular, we have the following:
\begin{lemma}
\label{suppdiffandantipolyrel1}
Given a submodular function $f$, $(x, c) \in \mathcal P^f_{\text{gen}}$ lies on a face of the polyhedron if and only if there exists a set $X$ such that $x \in \partial^f(X)$ and $c = f(X) - x(X)$.
 \end{lemma}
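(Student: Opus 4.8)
The plan is to mirror exactly the proof of Lemma~\ref{lemma:face_gen_lower_poly}, exploiting the fact that the generalized submodular upper polyhedron $\mathcal P^f_{\text{gen}}$ is defined by the same family of affine constraints as the generalized lower polyhedron but with every inequality reversed, and that the superdifferential $\partial^f(X)$ is the sign-flipped analogue of the subdifferential $\partial_f(X)$. First I would observe that, by the definition of a face of a polyhedron, a point $(x,c) \in \mathcal P^f_{\text{gen}}$ lies on a face if and only if at least one of the defining inequalities $x(Y) + c \ge f(Y)$ is active, i.e., there exists a set $X \subseteq V$ with $x(X) + c = f(X)$ while every remaining constraint $x(Y) + c \ge f(Y)$ continues to hold for all $Y \subseteq V$.

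The key algebraic step is then to solve the active constraint for the intercept, giving $c = f(X) - x(X)$, and to substitute this back into the remaining inequalities. Doing so rewrites $x(Y) + c \ge f(Y)$ as $f(Y) - x(Y) \le f(X) - x(X)$ for all $Y \subseteq V$, which is precisely the defining condition $x \in \partial^f(X)$ from Eqn.~\eqref{supdiff-def}. This establishes the forward direction and identifies the pair $(X, c)$ exactly as asserted in the statement.

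For the converse I would simply reverse these steps: assuming $x \in \partial^f(X)$ and $c = f(X) - x(X)$, the superdifferential inequalities yield $x(Y) + c \ge f(Y)$ for every $Y \subseteq V$, so $(x,c) \in \mathcal P^f_{\text{gen}}$, while the choice $Y = X$ makes the corresponding constraint tight, placing $(x,c)$ on a face. I expect no serious obstacle here --- the whole argument is a direct inequality-flip of the lower-polyhedron case. The only point worth keeping in mind is that, since $f(\emptyset) = 0$, the constraint at $X = \emptyset$ forces $c \ge 0$ (as already noted right after Eqn.~\eqref{eq:gen_sub_up_poly}), which is the concave counterpart of the $c \le 0$ bound appearing in Lemma~\ref{lemma:face_gen_lower_poly}.
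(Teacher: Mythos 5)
Your proposal is correct and follows essentially the same route as the paper's own proof: both characterize a point on a face of $\mathcal P^f_{\text{gen}}$ by the existence of a tight constraint $x(X) + c = f(X)$ together with the remaining inequalities $x(Y) + c \geq f(Y)$, and then read off $c = f(X) - x(X)$ and $x \in \partial^f(X)$ by direct substitution, mirroring Lemma~\ref{lemma:face_gen_lower_poly}. The only difference is cosmetic --- you spell out the converse direction explicitly, which the paper leaves implicit in its ``if and only if'' phrasing.
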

\begin{proof}
  The proof of this lemma is analogous to the one for the generalized
  submodular lower polyhedron in Lemma~\ref{lemma:face_gen_lower_poly}.
 In particular, observe that $(x, c)$
  lies on a face of $\mathcal P^f_{\text{gen}}$ if and only if there
  exists a set $X$ such that $x(X) + c = f(X)$ and for all $Y
  \subseteq V, x(Y) + c \geq f(Y)$. It then directly implies that $x
  \in \partial^f(X)$ and $c = f(X) - x(X)$.
\end{proof}  
This then implies the following corollary:
 \begin{corollary}
Given a submodular function $f$, a point $(x, c)$ is an extreme point of $\mathcal P^f_{\text{gen}}$, if and only if $x$ is an extreme point of $\partial^f(X)$ for some set $X$
and $c = f(X) - x(X)$. 
\label{suppdiffandantipolyrel1corr}
\end{corollary}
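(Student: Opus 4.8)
The plan is to reduce the statement to the preceding Lemma~\ref{suppdiffandantipolyrel1}, which already identifies the \emph{faces} of $\mathcal P^f_{\text{gen}}$ with the superdifferentials. Since a corollary about extreme points is really a statement about $0$-dimensional faces, the only real work is to show that the face/superdifferential correspondence preserves extreme-point structure in both directions.

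First I would make the correspondence explicit. For each $X \subseteq V$ consider the affine map $\iota_X : \mathbb R^n \to \mathbb R^{n+1}$ given by $\iota_X(x) = (x, f(X) - x(X))$. Its linear part $x \mapsto (x, -x(X))$ is injective, so $\iota_X$ is an affine injection, and I claim its image on $\partial^f(X)$ is exactly the face
\begin{align}
F_X \triangleq \mathcal P^f_{\text{gen}} \cap \{(x,c) : x(X) + c = f(X)\}.
\end{align}
Indeed, substituting $c = f(X) - x(X)$ turns the defining inequalities $x(Y) + c \geq f(Y)$ into $f(Y) - x(Y) \leq f(X) - x(X)$ for all $Y$, i.e.\ into $x \in \partial^f(X)$; hence $\iota_X(\partial^f(X)) = F_X$ and $\iota_X$ restricts to an affine \emph{bijection} between $\partial^f(X)$ and $F_X$, with inverse the coordinate projection $(x,c)\mapsto x$. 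Because affine bijections carry vertices to vertices in both directions, $x$ is an extreme point of $\partial^f(X)$ if and only if $(x, f(X)-x(X))$ is an extreme point of $F_X$. (By Theorem~\ref{altviewsthm2} each $\partial^f(X)$ is nonempty, so every $F_X$ is a genuine nonempty face.)

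With this in place both implications follow from two standard polyhedral facts. For the forward direction, if $(x^\ast,c^\ast)$ is an extreme point of $\mathcal P^f_{\text{gen}}$, then by Lemma~\ref{suppdiffandantipolyrel1} there is a set $X$ with $x^\ast \in \partial^f(X)$ and $c^\ast = f(X)-x^\ast(X)$, so $(x^\ast,c^\ast)\in F_X$. An extreme point of a polyhedron that lies inside one of its faces is automatically extreme in that face, so $(x^\ast,c^\ast)$ is extreme in $F_X$, whence the bijection $\iota_X^{-1}$ makes $x^\ast$ an extreme point of $\partial^f(X)$. For the converse, if $x^\ast$ is extreme in $\partial^f(X)$ and $c^\ast = f(X)-x^\ast(X)$, then $(x^\ast,c^\ast)=\iota_X(x^\ast)$ is extreme in $F_X$; and an extreme point of a face of a polyhedron is an extreme point of the whole polyhedron, so $(x^\ast,c^\ast)$ is extreme in $\mathcal P^f_{\text{gen}}$.

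The one delicate point I would state carefully rather than wave at is the pair of face-transitivity facts just invoked: (i) extreme points of $P$ lying in a face $F$ are extreme in $F$, and (ii) extreme points of $F$ are extreme in $P$. Both are immediate from the supporting-hyperplane description $F_X = \mathcal P^f_{\text{gen}} \cap \{(x,c): x(X)+c=f(X)\}$: any convex representation $(x^\ast,c^\ast)=\tfrac12(p+q)$ with $p,q\in \mathcal P^f_{\text{gen}}$ at a point where the constraint indexed by $X$ is tight forces that same constraint to be tight at $p$ and $q$, pushing them into $F_X$. I do not expect any submodularity-specific obstruction; all the combinatorial content sits in Lemma~\ref{suppdiffandantipolyrel1}, and the remainder is generic convex-geometry bookkeeping.
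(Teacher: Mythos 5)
Your proof is correct, but it takes a different route from the paper's. The paper argues directly with the algebraic (basic-solution) characterization of vertices: an extreme point of $\mathcal P^f_{\text{gen}} \subseteq \mathbb R^{n+1}$ must have $n+1$ tight constraints $x(X_i) + c = f(X_i)$; one of these, $X_0$, is used to eliminate $c$ via the substitution $c = f(X_0) - x(X_0)$, and the remaining $n$ tight constraints are then read as tight constraints of $\partial^f(X_0)$, with the reverse direction obtained by running the same substitution backwards. You instead reduce everything to Lemma~\ref{suppdiffandantipolyrel1}: you make the face $F_X = \mathcal P^f_{\text{gen}} \cap \{(x,c) : x(X) + c = f(X)\}$ explicit, exhibit the affine bijection $\iota_X(x) = (x, f(X)-x(X))$ between $\partial^f(X)$ and $F_X$, and then invoke two generic convex-geometry facts (extreme points of $P$ lying in a face are extreme in the face; extreme points of an exposed face are extreme in $P$), both of which you justify by the tightness-forcing argument on convex combinations. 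The trade-off is instructive: the paper's route is more self-contained but is actually the more delicate one --- counting $n+1$ tight sets is not by itself sufficient for extremality, since one needs linear independence of the constraint vectors $(1_{X_i},1)$, and one must check that this independence survives the substitution (equivalently, that the vectors $1_{X_i} - 1_{X_0}$, $i \geq 1$, are independent in $\mathbb R^n$), a bookkeeping step the paper glosses over. Your convex-combination argument sidesteps linear-independence considerations entirely, at the cost of invoking (and proving) the two face-transitivity facts; it is arguably the cleaner and more rigorous of the two.
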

\begin{proof}
Assume that $(x, c)$ is an extreme point of $\mathcal P^f_{\text{gen}}$. Then, there must be $n+1$ sets $X_0, X_1, \cdots, X_n$ such that $x(X_i) + c = f(X_i), \forall i = 0, 1, 2, \cdots, n$, and $x(X) + c \geq f(X), \forall X \subseteq V$. Set $c = f(X_0) - x(X_0)$. This implies that $x(X_i) + f(X_0) - x(X_0) = f(X_i), \forall i = 1, 2, \cdots, n$, and $x(X) + f(X_0) - x(X_0) \geq f(X_0), \forall X \subseteq V$. This implies that $x$ is an extreme point of $\partial^f(X_0)$. To prove the other direction, we start with a set $X_0$, such that $x$ is an extreme point of $\partial^f(X_0)$. Set $c = f(X_0) - x(X_0)$. Then, following the fact that $x$ is an extreme point of $\partial^f(X_0)$, we know that there exist $n$ sets $X_1, \cdots, X_n$ such that $x(X_i) + f(X_0) - x(X_0) = f(X_i), \forall i = 1, 2, \cdots, n$, and $x(X) + f(X_0) - x(X_0) \geq f(X_0), \forall X \subseteq V$. Substituting for $c$, we observe that $x(X_i) + c = f(X_i), \forall i = 0, 1, 2, \cdots, n$, and $x(X) + c \geq f(X), \forall X \subseteq V$. This proves that $(x, c)$ is an extreme point of $\mathcal P^f_{\text{gen}}$.
\end{proof}
\jeff{Rishabh}{AUGTODO: 8/19/2015: We need to have a proof of this here.}\rishabh{Jeff}{Hi Jeff, I went through this proof, and for some reason I am having trouble proving it now. I have currently removed this corollary. Fortunately, I think the result below, however, does not depend on this.}
\jeff{Rishabh}{9/3/2015: We talked about it today, and came up with a proof, so can you please add it here?}

This implies an interesting characterization of a linear program over the generalized submodular upper polyhedron. 
 \begin{lemma}\label{suppdiffandantipolyrel2}
 For submodular function $f$, and a $y \in \mathbb{R}^n$,
 \begin{align}\label{suppdiffandantipolyreleq}
 \min_{(x, c) \in \mathcal P^f_{\text{gen}}} [ \langle x, y \rangle + c ] 
  = \min\{\min_{x \in \partial^f(X)} [ \langle x, y \rangle + f(X) - x(X)] \,\, | \,\ X \subseteq V\}.
 \end{align}
 \end{lemma}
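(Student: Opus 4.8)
The plan is to establish the identity by proving two inequalities separately, in direct analogy with the proof of the equality $\text{(i)}=\text{(ii)}$ in Lemma~\ref{gensubpolysubpolyrel} for the generalized submodular \emph{lower} polyhedron, but with the roles of $\max$ and $\min$ interchanged. Write $L \triangleq \min_{(x,c)\in\mathcal P^f_{\text{gen}}}[\langle x,y\rangle + c]$ for the left-hand side and $R \triangleq \min\{\min_{x\in\partial^f(X)}[\langle x,y\rangle + f(X) - x(X)] : X\subseteq V\}$ for the right-hand side. Throughout, note that by Theorem~\ref{altviewsthm2} every $\partial^f(X)$ is non-empty, so the inner minimizations are over non-empty sets and $\mathcal P^f_{\text{gen}}$ itself is non-empty.

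First I would prove $L \leq R$, which is the direction requiring no boundedness hypothesis. For any $X\subseteq V$ and any $x\in\partial^f(X)$, set $c = f(X)-x(X)$. The defining inequalities of the superdifferential, namely $f(Y)-x(Y)\leq f(X)-x(X)$ for all $Y$, can be rewritten as $x(Y)+c\geq f(Y)$ for all $Y\subseteq V$, which is exactly the membership condition $(x,c)\in\mathcal P^f_{\text{gen}}$ from Eqn.~\eqref{eq:gen_sub_up_poly}. Hence $(x,c)$ is feasible for the minimization defining $L$, so $L \leq \langle x,y\rangle + c = \langle x,y\rangle + f(X)-x(X)$. Minimizing the right-hand side over all $X$ and all $x\in\partial^f(X)$ yields $L\leq R$.

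Next I would prove $R\leq L$ via the facial structure. Assuming $L$ is finite, the linear program attains its minimum on a face of $\mathcal P^f_{\text{gen}}$: the polyhedron is pointed, since any line direction $(d,e)$ in its recession cone must satisfy $d(X)+e=0$ for all $X$, and evaluating at $X=\emptyset$ forces $e=0$ and hence $d=0$. Let $(x^*,c^*)$ be an optimal point; it lies on a face, so by Lemma~\ref{suppdiffandantipolyrel1} there is a set $X$ with $x^*\in\partial^f(X)$ and $c^*=f(X)-x^*(X)$. Then $L=\langle x^*,y\rangle + c^* = \langle x^*,y\rangle + f(X)-x^*(X)$ is one of the quantities over which $R$ minimizes, giving $R\leq L$. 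Combined with the previous paragraph this yields $L=R$.

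The main obstacle is the unbounded case, i.e.\ ensuring the argument degrades gracefully when $L=-\infty$. Here I would argue that $R=-\infty$ as well, so that equality holds in the extended reals and, together with $L\leq R$, covers every case. Concretely, if $L=-\infty$ there is a recession direction $(d,e)$ of $\mathcal P^f_{\text{gen}}$ with $\langle d,y\rangle + e<0$ and $d(X)+e\geq 0$ for all $X$. Choosing $X^*\in\argmin_X d(X)$ makes $d$ a recession direction of $\partial^f(X^*)$ (since $d(Y)\geq d(X^*)$ for all $Y$), and the constraint $e\geq -d(X^*)$ combined with $\langle d,y\rangle<-e$ forces $\langle d,y\rangle - d(X^*)<0$; thus the inner objective $\langle x,y\rangle + f(X^*)-x(X^*)$ decreases without bound along $d$, giving $R=-\infty$. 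I expect this recession-cone bookkeeping to be the only genuinely delicate point, as the core correspondence between faces of $\mathcal P^f_{\text{gen}}$ and the superdifferentials is already supplied by Lemma~\ref{suppdiffandantipolyrel1}.
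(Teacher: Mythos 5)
Your proof is correct and follows essentially the same route as the paper's: the easy inequality $L \leq R$ comes from observing that each point $(x, f(X)-x(X))$ with $x \in \partial^f(X)$ lies in $\mathcal P^f_{\text{gen}}$, and the reverse inequality comes from applying the facial characterization of Lemma~\ref{suppdiffandantipolyrel1} to an optimal point of the linear program. The one place you go beyond the paper is the unbounded case: the paper's proof silently assumes the minimum is attained (at an extreme point), whereas for general $y \in \mathbb{R}^n$ (e.g., any $y$ with a negative coordinate) the left-hand side is genuinely $-\infty$, and your recession-cone argument showing the right-hand side is then also $-\infty$ patches an edge case the paper's proof does not address.
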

 \begin{proof}
We first show that $\min_{(x, c) \in \mathcal P^f_{\text{gen}}} [ \langle x, y \rangle + c ]\leq \min\{\min_{x \in \partial^f(X)} \langle x, y \rangle + f(X) - x(X)\,\, | \,\ X \subseteq V\}$. Observe that for any set $X$, and point $x \in \partial^f(X)$, $(x, f(X) - x(X)) \in \partial^f(X)$. Hence the second expression can be obtained by taking only a subset of the polyhedron $\mathcal P^f_{\text{gen}}$, and hence is a upper bound. 
Next, we show that $\min_{(x, c) \in \mathcal P^f_{\text{gen}}} [ \langle x, y \rangle + c ] \geq \min\{\min_{x \in \partial^f(X)} \langle x, y \rangle + f(X) - x(X)\,\, | \,\ X \subseteq V\}$ by 
invoking Lemma~\ref{suppdiffandantipolyrel1}. The minimum on the l.h.s.\ must occur at an extreme point of $\mathcal P^f_{\text{gen}}$, which implies that $x \in \partial^f(X)$ for some set $X$, and $c = f(X) - x(X)$. Hence this implies that $\min_{(x, c) \in \mathcal P^f_{\text{gen}}} \langle x, y \rangle + c \geq \min\{\min_{x \in \partial^f(X)} \langle x, y \rangle + f(X) - x(X)\,\, | \,\ X \subseteq V\}$, since the l.h.s.\ equals a particular instance of the RHS. This completes the proof.
 \end{proof}
 
Unfortunately, however, the generalized submodular upper polyhedron is no longer easy to characterize. This is related to the fact that the superdifferentials of a submodular function are not easy to characterize. 
 
 \begin{lemma}\label{NPgenantisubpoly}
   The generalized submodular upper polyhedron membership problem for
   a submodular function $f$ (i.e., given an $x \in \mathbb{R}^n$ and $c \in
   \mathbb{R}$, solve the query ``Is $(x, c) \in \mathcal
   P^f_{\text{gen}}$?'') is NP hard for $c > 0$. Furthermore, for any $y
   \in \mathbb{R}^n$, solving a linear program over this polyhedron,
   i.e $\min_{(x, c) \in \mathcal P^f_{\text{gen}}} \langle x, y
   \rangle + c$ is also NP hard.
\end{lemma}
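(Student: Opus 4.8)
The plan is to reduce from unconstrained submodular maximization, whose decision version is NP-complete (it contains, e.g., \textsc{Max-Cut}, and general non-monotone hardness is established in \cite{janvondrak}), and then to lift the resulting membership hardness to the linear program through the polynomial-time equivalence of separation and optimization.

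First I would restate membership in the usual way: by definition $(x, c) \in \mathcal P^f_{\text{gen}}$ iff $x(X) + c \geq f(X)$ for every $X \subseteq V$, i.e.\ iff
\[
c \;\geq\; \max_{X \subseteq V} \bigl[ f(X) - x(X) \bigr].
\]
Since $X \mapsto f(X) - x(X)$ is submodular (a submodular function minus a modular one), the right-hand side is exactly an instance of submodular maximization. Taking $X = \emptyset$ already forces $c \geq 0$, and when $c = 0$ the test degenerates to checking $x \in \mathcal P^f$, which is easy by Lemma~\ref{antisubpoly}. This is precisely why the statement singles out $c > 0$, and it is the reason the reduction must genuinely use a strictly positive threshold.

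For the reduction I would take an arbitrary normalized submodular $g$ with $g(\emptyset) = 0$ together with a threshold $t > 0$, and set $f := g$, $x := \mathbf 0$, and $c := t$. The displayed characterization then gives $(\mathbf 0, t) \in \mathcal P^f_{\text{gen}}$ iff $\max_{X} g(X) \leq t$, so a polynomial-time membership test valid for $c > 0$ would decide submodular maximization, which is impossible unless $\mathrm P = \mathrm{NP}$. (Alternatively, one can reduce directly from the superdifferential membership problem of Lemma~\ref{NPsuperdiffmembership}: for $y$ and $\emptyset \subset X \subset V$, the point $\bigl(y, f(X) - y(X)\bigr)$ lies in $\mathcal P^f_{\text{gen}}$ exactly when $y \in \partial^f(X)$, because $X$ itself attains the value $f(X) - y(X)$.)

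For the linear program I would invoke the equivalence between the (weak) separation and (weak) optimization problems over a well-described polyhedron \cite{grotschel1984geometric, schrijver2003combinatorial}, the same route used just above for the superdifferential. A separation oracle for $\mathcal P^f_{\text{gen}}$ answers the membership query immediately, so separation is at least as hard as the membership problem already shown intractable; hence if $\min_{(x, c) \in \mathcal P^f_{\text{gen}}} [\langle x, y \rangle + c]$ were computable in polynomial time for every $y$, optimization, separation, and membership would all be polynomial, forcing $\mathrm P = \mathrm{NP}$. I expect the main obstacle to be this last step rather than the membership reduction: applying Gr\"otschel--\lovasz--Schrijver requires presenting $\mathcal P^f_{\text{gen}}$ as a well-described polyhedron, and since it is unbounded one must either supply an a priori bound on the relevant encoding lengths or first intersect it with a large box so that the equivalence applies cleanly. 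A fully self-contained alternative would be a direct construction of a hard linear-program instance on an augmented ground set, but the separation/optimization equivalence is by far the most economical argument and mirrors the paper's earlier treatment.
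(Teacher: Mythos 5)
Your proposal is correct and follows essentially the same route as the paper's proof: membership in $\mathcal P^f_{\text{gen}}$ is restated as the decision version of unconstrained submodular maximization ($c \geq \max_{X \subseteq V}[f(X) - x(X)]$), and hardness of the linear program then follows from the polynomial-time equivalence of membership/separation and optimization via \cite{grotschel1984geometric, schrijver2003combinatorial}. Your version is in fact somewhat more careful than the paper's, since you exhibit an explicit hard instance ($x = \mathbf 0$, $c = t > 0$), explain why the restriction to $c > 0$ is genuinely needed, and flag the technical caveat that applying the separation--optimization equivalence to the unbounded polyhedron $\mathcal P^f_{\text{gen}}$ requires a well-described presentation (e.g., intersecting with a box).
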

\begin{proof}
The first part of the result follows from the fact that asking whether $(x, c) \in \mathcal P^f_{\text{gen}}$ is equivalent to asking whether $\max_{X \subseteq V} [f(X) - x(X) - c] \leq 0$, which can be rewritten as $\max_{X \subseteq V} [f(X) - x(X)] \leq c$. This is the decision version of submodular maximization, which is NP hard. The second part follows directly from the first since the membership problem on a polyhedron is equivalent to a linear program over this polyhedron~\cite{grotschel1984geometric, schrijver2003combinatorial}.
 \end{proof} 
We can also prove the second part (that solving a linear program over the generalized submodular polyhedron is NP hard) since it is equivalent to computing the concave extension of a submodular function (we show this in Lemma~\ref{lemma:lpconcaveextension}). Computing, and in fact even evaluating at a point, this concave extension, however, is NP hard~\cite{dughmi2009submodular, vondrak2007submodularity}.
 
Recall that in the case of the generalized submodular lower
polyhedron, the extreme points of this polyhedron were identical to
the extreme points of the submodular lower polyhedron (i.e., all
extreme points of the generalized submodular lower polyhedra occurred
when $c = 0$) --- this that the linear program over the two
polyhedra was the same. This is not the case in the generalized
submodular upper polyhedra. To see this, we consider a simple example
with $V = \{1, 2\}$.

\begin{figure}[tbh]
\begin{center}
\includegraphics[width = 0.7\textwidth]{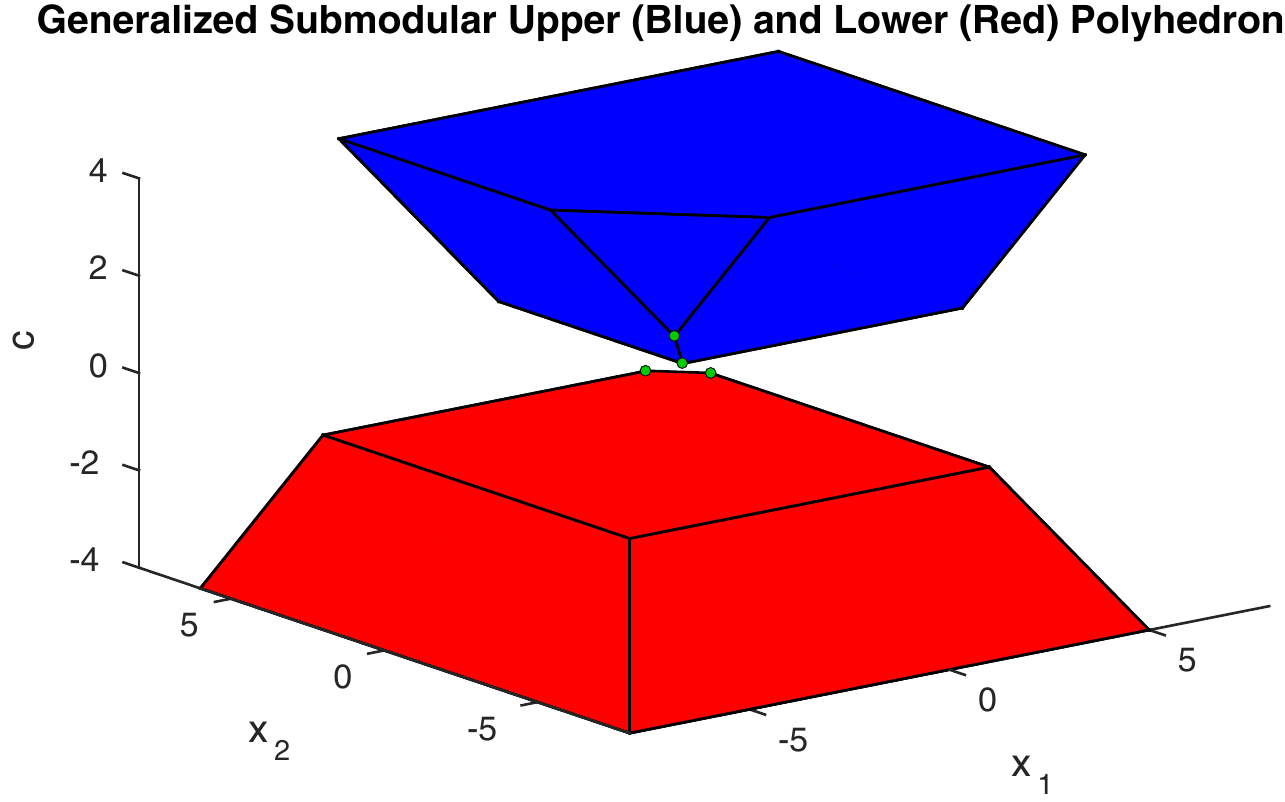}
\end{center}
\caption{The top figure shows the 
generalized submodular upper polyhedron 
(Eqn.~\eqref{eq:gen_sub_up_poly})
in blue, 
while the bottom shows 
the generalized lower polyhedron 
(Eqn.~\eqref{eq:gen_sub_low_poly})
in red for a submodular function $f: 2^{\{1, 2\}} \rightarrow \mathbf{R}$, with $f(\emptyset) = 0, f(\{1\}) = 1, f(\{2\}) = 2, f(\{1, 2\}) = 2.5$. The polyhedra live
in three dimensions for two-dimensional submodular functions.
Notice that all the extreme points (green) of the generalized submodular lower polyhedron are on the plane $c = 0$ -- the two extreme points are $(1, 1.5, 0)$ and $(0.5, 2, 0)$. In the generalized submodular upper polyhedron, however, one of the extreme points is on $c = 0$ (this extreme point is $(1, 2, 0)$), while the other extreme point is $(0.5, 1.5, 0.5)$ (here $c = 0.5 > 0$).
\rishabh{Jeff}{8/19/2015: also we should mark on the figure precisely where
the extreme points are.}\rishabh{Jeff}{this is now done.}
}
\label{genupperlowerfigs}
\end{figure}

\begin{example}\label{ex4}
First consider the generalized submodular lower polyhedra when $V = \{1, 2\}$. 
\begin{align}
\mathcal P^f_{\text{gen}} = \{(x, c) \in \mathbb{R}^3: &c \leq 0, \label{eqex1} \\ 
			&x_1 + c \leq f(\{1\}), \label{eqex2}\\
			 							& x_2  + c\leq f(\{2\}), \label{eqex3}\\
										& x_1 + x_2 + c \leq f(\{1, 2\})\} \label{eqex4}
\end{align}
It is immediate that the only extreme points are $(f(\{1\}), f(\{2\}
| \{1\}), 0)$ and $(f(\{1\} | \{2\}), f(\{2\}), 0)$, which are
obtained by setting Eqns~\eqref{eqex1}, \eqref{eqex2}, \eqref{eqex4}
and Eqns~\eqref{eqex1}, \eqref{eqex3}, \eqref{eqex4} as
equalities. The extreme points in this case are a direct product
between the extreme points of $\mathcal P_f$ and $c = 0$. Hence all
extreme points lie on the face $c = 0$.
\end{example}
This is not the case for the generalized submodular upper polyhedron. Consider again the example with $V = \{1, 2\}$.
\begin{example}\label{ex5}
The generalized upper submodular polyhedron in this case is, 
\begin{align}
\mathcal P^f_{\text{gen}} = \{(x, c) \in \mathbb{R}^3: &c \geq 0, \label{eqex11} \\ 
			&x_1 + c \geq f(\{1\}), \label{eqex12}\\
			 							& x_2  + c\geq f(\{2\}), \label{eqex13}\\
										& x_1 + x_2 + c \geq f(\{1, 2\})\} \label{eqex14}
\end{align}
It is again immediate that the only extreme points are $ ( f(\{1\}), f(\{2\}), 0)$ and \\ $(f(\{1\} | \{2\}), f(\{2\} | \{1\}), f(\{1\}) + f(\{2\}) - f(\{1, 2\}))$, which are obtained by setting Eqns~\eqref{eqex11}, \eqref{eqex12}, \eqref{eqex13} and Eqns~\eqref{eqex12}, \eqref{eqex13}, \eqref{eqex14} as equalities (setting the other combination of inequalities as equalities does not give extreme points). Hence while one of the extreme points here is $\{f(\{1\}), f(\{2\}), 0\}$, which is the direct product between $\mathcal P^f$ and $c = 0$, the other extreme point occurs at $c$, and when $f$ is strictly submodular, $c > 0$.
\end{example}
An illustration of the generalized submodular upper and lower polyhedra is shown in Figure~\ref{genupperlowerfigs}.


\subsubsection{Inner and outer bounds on the generalized submodular upper polyhedron}
\label{sec:inner-outer-bounds}

In a manner similar to the superdifferential, we can provide inner and
outer bounds of the generalized submodular upper polyhedron. In
particular, let $g_X \in \partial^f(X)$ be a supergradient at $X$ that
is feasible to obtain (such as the ones in
Eqns.~\eqref{eq:ggrow_def}--\eqref{eq:gbar_def}).  Then, $m^X(Y) =
f(X) + g_X(Y) - g_X(X)$ is a modular upper bound of $f(Y), \forall Y
\subseteq V$.  Given any set $\mathcal G = \{ g_X \in \partial^f(X) |
X \subseteq V \}$ of such supergradients, we may define a polytope as
follows:
\begin{align}
\mathcal P^f_{\mathcal G, \text{gen}} 
\triangleq \text{conv-hull}\{(g_X, f(X) - g_X(X)), \forall X \subseteq V,
g_X \in \mathcal G
\}.
\end{align}
Since for any $X \subseteq V$ and $g_X \in \partial^f(X)$, we have
that $(g_X, f(X) - g_X(X)) \in P^f_{\text{gen}}$, it follows from the
convexity of $P^f_{\text{gen}}$ that $\mathcal P^f_{\mathcal G,
  \text{gen}} \subseteq \mathcal P^f_{\text{gen}}$.  Moreover, larger
inner bounds of $P^f_{\text{gen}}$ can be obtained by taking the
convex hull of multiple such polytopes of the form $\mathcal
P^f_{\mathcal G, \text{gen}}$ for various $\mathcal G$. We shall in
particular be interested by the polytopes $\growsymb{\mathcal G} = \{
\ggrow_X | X \subseteq V \}$, $\shrinksymb{\mathcal G} = \{ \gshrink_X
| X \subseteq V \}$, and $\barsymb{\mathcal G} = \{ \gbar_X | X
\subseteq V \}$, formed using
Eqns.~\eqref{eq:ggrow_def}--\eqref{eq:gbar_def}), which we will refer
to with $\mathcal P^f_{\growsymb{\mathcal G}, \text{gen}}$, $\mathcal
P^f_{\shrinksymb{\mathcal G}, \text{gen}}$, and $\mathcal
P^f_{\barsymb{\mathcal G},\text{gen}}$.  These bounds, as we shall
see, have interesting connections to concave extensions (which we
shall describe Section~\ref{submodccvext}) and ultimately to
submodular maximization. 

In a fashion analogous to how, in
Section~\ref{sec:outer-bounds-superd}, we defined outer bounds on the
submodular differential, we can similarly define outer bounds of the
generalized submodular upper polyhedron by considering only a subset
of inequalities that define $\mathcal P_{\text{gen}}^f$. We do not
pursue this here and leave it to future work (see
Section~\ref{sec:concl-open-probl}).

\rishabh{Jeff}{8/19/2015:I've changed quite a bit of notation above and
added clarification. In general, don't forget to put the ``text'' macro around the
string ``gen'' in math mode since it is a string.}

\section{Concave extensions of a submodular function}
\label{submodccvext}

Following the characterizations of the convex extensions of a
submodular function, we can define the concave extensions also from
two viewpoints, one in the \emph{distributional} setting and another
in the \emph{polyhedral} setting. These results follow in the lines of
the results shown in
Section~\ref{submodcvxext}
for the convex extensions.

\subsection{Polyhedral characterization of the concave extension}
\label{sec:polyh-char-conc}

Similar to the convex extension, the concave extension of any set
function (not necessarily submodular) can be seen as the pointwise
supremum of concave functions that lower bound the set
function~\cite{dughmi2009submodular}. Precisely, let
\begin{align}
\Psi_f \triangleq \{\psi: \psi \text{ is concave in }[0, 1]^V \text{ and } \psi(1_X) \geq f(X), \forall X \subseteq V\}.
\end{align}
 Then define the concave extension
$\cex f : [0, 1]^{|V|} \to \mathbb R$ as follows:
\begin{align} \label{concaveextccv}
\cex f(w) \triangleq \min_{\psi \in \Psi_f} \psi(w).
\end{align} 
Following arguments similar to the convex extension, Eqn.~\eqref{concaveextccv} can be expressed as a linear program over the generalized submodular upper polyhedron.
\begin{lemma}
\label{lemma:lpconcaveextension}
The concave extension in Eqn.~\eqref{concaveextccv} for any set function $f$ can be expressed as:
\begin{align}
\label{concaveextaff}
\cex f(w) = \min_{(y, c) \in \mathcal P^f_{\text{gen}}} \bigl[ \langle y, w \rangle + c \bigr]
\end{align}
\end{lemma}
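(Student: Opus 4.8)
The plan is to mirror the proof of Lemma~\ref{convexextafflemma} (the convex analogue), reversing every inequality and replacing ``subgradient of a convex lower bound'' by ``supergradient of a concave upper bound.'' I would prove the two inequalities separately.

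First I would show $\cex f(w) \leq \min_{(y,c) \in \mathcal P^f_{\text{gen}}}[\langle y, w\rangle + c]$. The key observation is that every affine map $z \mapsto \langle y, z\rangle + c$ with $(y,c) \in \mathcal P^f_{\text{gen}}$ is an element of $\Psi_f$: it is affine, hence concave on $[0,1]^V$, and by the defining inequalities of the generalized submodular upper polyhedron (Eqn.~\eqref{eq:gen_sub_up_poly}) we have $\langle y, 1_X\rangle + c = y(X) + c \geq f(X)$ for all $X \subseteq V$, so it upper bounds $f$ at the vertices. Since $\cex f(w)$ is the infimum over all of $\Psi_f$, it is at most $\langle y, w\rangle + c$ for each such $(y,c)$, and taking the minimum over $(y,c)$ gives the desired bound.

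For the reverse inequality $\cex f(w) \geq \min_{(y,c)}[\langle y, w\rangle + c]$, I would fix $w$ and let $\hat\psi \in \Psi_f$ attain the minimum in Eqn.~\eqref{concaveextccv}, so $\cex f(w) = \hat\psi(w)$. Because $\hat\psi$ is concave on $[0,1]^V$, it admits a supergradient $y$ at $w$ together with an offset $c$ such that $\langle y, z\rangle + c \geq \hat\psi(z)$ for all $z$ and $\langle y, w\rangle + c = \hat\psi(w)$; that is, $\langle y, \cdot\rangle + c$ is an affine upper bound of $\hat\psi$ that is tight at $w$. Evaluating at the vertices, $y(X) + c = \langle y, 1_X\rangle + c \geq \hat\psi(1_X) \geq f(X)$ for all $X$, so $(y,c) \in \mathcal P^f_{\text{gen}}$. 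Consequently $\min_{(y',c') \in \mathcal P^f_{\text{gen}}}[\langle y', w\rangle + c'] \leq \langle y, w\rangle + c = \hat\psi(w) = \cex f(w)$. Combining the two inequalities yields Eqn.~\eqref{concaveextaff}.

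The one delicate point, and the step I expect to need the most care, is guaranteeing the existence of the supergradient $(y,c)$ of $\hat\psi$ at $w$. For $w$ in the interior of $[0,1]^V$ this is immediate from concavity (every finite concave function has a supporting hyperplane at interior points), but for $w$ on the boundary of the cube a supergradient need not exist in the naive sense. I would handle this exactly as the convex-case proof implicitly does --- relying on the fact that the extension is being evaluated in the relative interior of the relevant face, or passing to a limit of interior supergradients --- and note that what ultimately matters is only that the resulting bound is affine and feasible for $\mathcal P^f_{\text{gen}}$. I would also remark that, as in the convex case, this equivalence holds for arbitrary (not necessarily submodular) set functions, since submodularity was never used.
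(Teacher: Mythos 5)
Your proposal is correct and follows essentially the same route as the paper's own proof, which likewise mirrors Lemma~\ref{convexextafflemma}: take the minimizing concave function $\hat\psi \in \Psi_f$, extract a supergradient at $w$ to obtain an affine upper bound tight at $w$, and verify feasibility for $\mathcal P^f_{\text{gen}}$ at the vertices --- indeed you are more explicit than the paper, which writes out only the direction $\cex f(w) \geq \min_{(y,c) \in \mathcal P^f_{\text{gen}}}[\langle y, w\rangle + c]$ and leaves the easy reverse inequality implicit. The boundary subtlety you flag is genuine but is silently ignored by the paper as well; if you fill it in, prefer the restriction-to-the-relevant-face argument (extending the resulting affine function off the face's affine hull so it remains feasible at all vertices of the cube) over the pure limiting argument, since a concave $\hat\psi$ may be discontinuous at the boundary of $[0,1]^V$ and so $\hat\psi(w)$ can lie strictly below the limit of its interior values.
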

\begin{proof}
The proof of this lemma follows the proof of Lemma~\ref{convexextafflemma}. For a given $w$, let $\hat{\psi}$ be an $\argmin$ in Eqn.~\eqref{concaveextccv}. Then since $\hat{\psi}$ is a concave function in $[0, 1]^V$, there exists a supergradient $x \in \mathbb{R}^n$ at $w$ and value $d$, such that $\langle x, y \rangle + d \geq \hat{\psi}(y), \forall y$ and $\langle x, w \rangle + d = \hat{\psi}(w)$. In other words, $\langle x, y \rangle + d$ is a linear upper bound of $\hat{\psi}(y)$, tight at $w$. Hence $\cex f(w) = \langle x, w \rangle + d$. Finally notice that $(x, d) \in \mathcal P^f_{\text{gen}}$ since $x(X) + d \geq \hat{\psi}(1_X) \geq f(X), \forall X \subseteq V$.
\end{proof}
Unlike the case 
shown in Lemma~\ref{linprogsubpoly}
for the convex extension, however, this is not equivalent to an optimization over the submodular upper polyhedron. 
That is, we may not assume $c=0$ in Eqn.~\eqref{concaveextaff} for a submodular function.
Moreover, this expression requires solving a linear program over the submodular upper polyhedron, and it follows from Theorem~\ref{NPgenantisubpoly} that obtaining the concave extension is NP hard. We shall revisit this result in the next subsection while investigating the distributional characterization.

\subsection{Concave upper and lower bounds of the concave extension}
\label{sec:concave-upper-lower}

Interestingly, we can define a number of concave extensions based on
relaxations of the polyhedral representation. In particular, consider
the inner approximations of the generalized submodular upper
polyhedron $\mathcal P^f_{\mathcal G, \text{gen}}$, defined via a particular
set of supergradients 
$\mathcal G = \{ g_X \in \partial^f(X) | X \subseteq V \}$.
Instead of minimizing over all affine upper bounds,
we can minimize only over a particular class of modular upper
bounds. Then, we can define the following form of a concave extension:
\begin{align}\label{concaveextsupergrad}
\cex f_{\mathcal G}(w) 
\triangleq \min_{(y, c) \in \mathcal P^f_{\mathcal G, \text{gen}}} [\langle y, w \rangle + c] = \min_{Y \subseteq V} \,\ [\langle y, g_Y \rangle + f(Y) - g_Y(Y)], \,\,\ \forall w \in [0, 1]^{|V|}
\end{align}
In particular, the above turns the linear program into a discrete
optimization problem. Moreover, the concave extension $\cex f_{\mathcal G}$ is
guaranteed to be an upper bound of $\cex f$. We can define three
variants of these extensions using the 
polytopes $\growsymb{\mathcal G} = \{
\ggrow_X | X \subseteq V \}$, $\shrinksymb{\mathcal G} = \{ \gshrink_X
| X \subseteq V \}$, and $\barsymb{\mathcal G} = \{ \gbar_X | X
\subseteq V \}$
and which we call $\cex f_{\growsymb{\mathcal G}}$, $\cex f_{\shrinksymb{\mathcal G}}$, and
$\cex f_{\barsymb{\mathcal G}}$. These concave extensions
can, in fact, be obtained in polynomial time since it involves
submodular function minimization for each evaluation.

The class of concave extensions suggested by
Eqn.~\eqref{concaveextsupergrad} has some connections
to a form of concave extension proposed
in~\cite{vondrak2007submodularity} for monotone submodular
functions. In particular, where \cite{vondrak2007submodularity}
defined a concave function $\cex f_g$ that takes the following form:
\begin{align}
\cex f_{\mathcal G_\text{v}}(x) 
\triangleq \min\{[f(Y) + \sum_{j \in V} x(j) f(j | Y)] | Y \subseteq V\}
= \min\{[f(Y) + \sum_{j \notin Y} x(j) f(j | Y)] | Y \subseteq V\}
\end{align}
This extension can be seen as a special case of Eqn.~\eqref{concaveextsupergrad} with a particular set of supergradients 
$\mathcal G_\text{v} = \{ g_X \in \partial^f(X) | X \subseteq V \}$
defined as:
\begin{align}
g_X(j) = 
\begin{cases}
0 & \text{ if }  j \in X\\
f(j |X) & \text { if } j \notin X\\
\end{cases}
\end{align}
This supergradient is related to the supergradient $\gshrink_X$ 
in Eqn.~\eqref{eq:gshrink_def}
except that it replaces the
values $f(j | V \setminus j )$ 
for $j \in X$ with $0$. For a monotone
submodular function, this remains a
supergradient (but not for a non-monotone submodular function).
This form of concave extension is NP hard to evaluate~(see Section 3.7
in \cite{vondrak2007submodularity}) but is still useful in
obtaining approximate maximizers for certain special cases (see Section~\ref{concavemax}).

Using outer bounds of the generalized submodular upper polyhedron,
defined by considering only a subset of inequalities that define
$\mathcal P_{\text{gen}}^f$, it would be possible to define tractable
lower bounds on the concave extension. We leave this to future work
(see Section~\ref{sec:concl-open-probl}).

\subsection{Distributional characterization of the concave extension}
\label{sec:distr-char-conc}

As with the convex extension and as shown in
Section~\ref{sec:distr-char-conv}, an alternate and equivalent
characterization of the concave extension can be viewed through a
distributional lens.

\begin{lemma}\label{eqconcvext}
Recall from Eqn.~\eqref{eq:lambda_w}
the set $\Lambda_w$ defined here again for convenience:
\begin{align}
\Lambda_w \triangleq 
\Bigl\{
\{
\lambda_S, S \subseteq V \}: 
\sum_{S \subseteq V} \lambda_S 1_S = w, \sum_{S \subseteq V} \lambda_S = 1,
\text{ and } \forall S, \lambda_S \geq 0
\Bigr\}.
\tag{\ref{eq:lambda_w}}
\end{align}
The concave extension from Eqn.~\eqref{concaveextaff} then can also be represented as:
\begin{align}
\label{concaveextdist}
\cex f(w) = \max_{\lambda \in \Lambda_w} \sum_{S \subseteq V} \lambda_S f(S)
\end{align}
\end{lemma}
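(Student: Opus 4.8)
The plan is to prove this by linear programming duality, building directly on the polyhedral characterization already established in Lemma~\ref{lemma:lpconcaveextension}. That lemma gives $\cex f(w) = \min_{(y, c) \in \mathcal P^f_{\text{gen}}} [\langle y, w \rangle + c]$, and the right-hand side is exactly a linear program in the free variables $y \in \mathbb{R}^n$ and $c \in \mathbb{R}$, with one inequality constraint $y(X) + c \geq f(X)$ for each subset $X \subseteq V$. The goal is to identify the distributional maximum $\max_{\lambda \in \Lambda_w} \sum_S \lambda_S f(S)$ as precisely the dual of this program.

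First I would form the LP dual. Assigning a multiplier $\lambda_X \geq 0$ to each constraint $y(X) + c \geq f(X)$, the dual objective is $\sum_X \lambda_X f(X)$. Because $y$ and $c$ are free (unconstrained) primal variables, the dual constraints are equalities: matching the coefficient of each $y_i$ forces $\sum_{X : i \in X} \lambda_X = w_i$, i.e.\ $\sum_X \lambda_X 1_X = w$, while matching the coefficient of $c$ forces $\sum_X \lambda_X = 1$. Together with $\lambda_X \geq 0$, these are exactly the defining conditions of $\Lambda_w$ in Eqn.~\eqref{eq:lambda_w}, so the dual program is $\max_{\lambda \in \Lambda_w} \sum_X \lambda_X f(X)$.

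The weak-duality inequality is immediate and worth recording: for any primal-feasible $(y, c)$ and dual-feasible $\lambda$,
\begin{align}
\sum_X \lambda_X f(X) \leq \sum_X \lambda_X \bigl( y(X) + c \bigr) = \bigl\langle y, \textstyle\sum_X \lambda_X 1_X \bigr\rangle + c = \langle y, w \rangle + c,
\end{align}
which already yields $\max_{\lambda \in \Lambda_w} \sum_X \lambda_X f(X) \leq \cex f(w)$. To upgrade this to equality I would invoke strong duality, which only requires checking that both programs are feasible. The primal is feasible (take $y = \mathbf 0$ and $c = \max_X f(X)$, so $\mathcal P^f_{\text{gen}} \neq \emptyset$), and the dual is feasible for every $w \in [0,1]^n$ because $[0,1]^n$ is the convex hull of the vertices $\{ 1_X : X \subseteq V \}$, so some $\lambda \in \Lambda_w$ exists. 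With both feasibility conditions in hand, the LP strong duality theorem gives that the two optima are finite and equal, which is exactly the claimed identity.

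The only genuinely delicate point is the bookkeeping in setting up the dual correctly --- in particular, remembering that the free primal variables $y$ and $c$ force the dual constraints to be equalities rather than inequalities, which is precisely what turns the $2^{|V|}$ multipliers into a probability distribution over subsets matched to $w$. Everything else is routine verification of feasibility together with the weak-duality computation above. Note that submodularity of $f$ is never used: as with the convex case discussed in Section~\ref{sec:distr-char-conv}, the equivalence of the polyhedral and distributional characterizations holds for an arbitrary set function $f$.
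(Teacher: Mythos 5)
Your proof is correct. It is worth noting, however, that the paper never actually writes out an argument for this lemma: it simply remarks that the proof ``follows on similar lines as the convex extension'' and defers to~\cite{dughmi2009submodular}, exactly as it did for the convex case in Section~\ref{sec:distr-char-conv}. Your LP-duality route is therefore a genuinely different, self-contained derivation. You take the polyhedral characterization of Lemma~\ref{lemma:lpconcaveextension} as the primal, correctly note that the free variables $(y,c)$ force the dual constraints to be equalities (which is what turns the multipliers into a distribution in $\Lambda_w$), verify feasibility on both sides, and invoke strong duality; all of these steps check out, and your weak-duality computation is exactly right. The argument the paper alludes to via its citation is instead the ``distributional'' one, working directly from the definition in Eqn.~\eqref{concaveextccv}: one shows that $F(w) \triangleq \max_{\lambda \in \Lambda_w} \sum_{S} \lambda_S f(S)$ is itself a concave function agreeing with $f$ on indicator vectors, hence $\cex f(w) \leq F(w)$, and conversely that any $\psi \in \Psi_f$ satisfies $\psi(w) = \psi\bigl(\sum_S \lambda_S 1_S\bigr) \geq \sum_S \lambda_S \psi(1_S) \geq \sum_S \lambda_S f(S)$ by concavity (Jensen), hence $\cex f(w) \geq F(w)$. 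That route avoids duality machinery and works straight from the definition; yours buys a shorter argument by leveraging the polyhedral lemma the paper has already proved, reducing the equivalence to a standard LP fact. Your closing observation that submodularity is never used is also correct and consistent with the paper, which states the analogous equivalence for arbitrary set functions.
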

The proof of the above follows on similar lines as the convex extension, and is shown in~\cite{dughmi2009submodular}. Unfortunately, unlike the convex extension, this extension is NP hard to evaluate and optimize over. 
\begin{proposition}\label{NPccvext}
Given a submodular function $f$, it is NP hard to evaluate and optimize $\cex f$.
\end{proposition}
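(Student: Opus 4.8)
The statement bundles two hardness claims --- hardness of \emph{evaluating} $\cex f$ at a given point $w \in [0,1]^n$, and hardness of \emph{optimizing} (i.e.\ maximizing) $\cex f$ over $[0,1]^n$ --- and the plan is to establish them by two different routes. The first is essentially already in place from the polyhedral side, while the second admits a clean direct reduction through the distributional characterization in Lemma~\ref{eqconcvext}.

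For the evaluation claim, I would invoke Lemma~\ref{lemma:lpconcaveextension}, which expresses $\cex f(w) = \min_{(y,c) \in \mathcal P^f_{\text{gen}}} [\langle y, w \rangle + c]$; that is, the value of the concave extension at $w$ is exactly a linear program (with cost direction $w$) over the generalized submodular upper polyhedron. By Lemma~\ref{NPgenantisubpoly} this linear program is NP hard: the membership problem $(x,c) \in \mathcal P^f_{\text{gen}}$ already encodes the decision version of submodular maximization ($\max_X [f(X) - x(X)] \leq c$), and optimization over a polyhedron is polynomially equivalent to membership in it. This is the same route indicated when the concave extension was first defined, and it matches the known hardness of evaluating the concave closure in~\cite{dughmi2009submodular, vondrak2007submodularity}.

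For the optimization claim I would argue directly from the distributional form $\cex f(w) = \max_{\lambda \in \Lambda_w} \sum_{S \subseteq V} \lambda_S f(S)$. For every $w$ the quantity $\sum_S \lambda_S f(S)$ is a convex combination of the values $\{ f(S) \}_{S \subseteq V}$ (since $\sum_S \lambda_S = 1$ and $\lambda_S \geq 0$), so $\cex f(w) \leq \max_{X \subseteq V} f(X)$. Conversely, at $w = 1_{X^*}$ for a maximizer $X^* \in \argmax_X f(X)$, the point mass $\lambda_{X^*} = 1$ lies in $\Lambda_{1_{X^*}}$ and attains $f(X^*)$, so $\cex f(1_{X^*}) = \max_X f(X)$. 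Hence
\begin{align}
\max_{w \in [0,1]^n} \cex f(w) = \max_{X \subseteq V} f(X),
\end{align}
and any algorithm that maximizes $\cex f$ would solve unconstrained submodular maximization, which is NP hard (e.g.\ Max-Cut is a special case~\cite{janvondrak}). For contrast, it is worth remarking that \emph{minimizing} $\cex f$ is easy: since $\cex f$ is concave, its minimum over the box $[0,1]^n$ is attained at a vertex $1_X$, where $\cex f(1_X) = f(X)$, so $\min_w \cex f(w) = \min_X f(X)$ is just submodular minimization.

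The main obstacle is the evaluation part, which genuinely rests on the polyhedral NP-hardness of Lemma~\ref{NPgenantisubpoly} (equivalently the concave-closure hardness of~\cite{vondrak2007submodularity}); the subtle point is that the hardness must hold for cost directions restricted to $w \in [0,1]^n$ rather than arbitrary $y \in \mathbb R^n$, which is precisely the regime in which $\min_{(y,c)} [\langle y, w\rangle + c]$ coincides with $\cex f$. The optimization part, by contrast, is immediate once the distributional characterization is available, since it collapses the continuous maximization over $[0,1]^n$ to a maximization over the vertices $\{0,1\}^n$.
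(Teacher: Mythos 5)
Your proposal is necessarily a different route from the paper's, because the paper does not actually prove this proposition: its entire ``proof'' is the citation to \cite{vondrak2007submodularity} (the paper does gesture at your evaluation argument informally in Section~\ref{sec:polyh-char-conc}, right after Lemma~\ref{lemma:lpconcaveextension}, but never formalizes it). Your optimization half is correct, complete, and self-contained: the distributional form of Lemma~\ref{eqconcvext} gives $\cex f(w) \leq \max_{X \subseteq V} f(X)$ for every $w$ since $\sum_S \lambda_S f(S)$ is a convex combination of the values $f(S)$, attainment at $w = 1_{X^*}$ follows from the point mass $\lambda_{X^*} = 1$, and hence $\max_{w \in [0,1]^n} \cex f(w) = \max_{X \subseteq V} f(X)$, so maximizing $\cex f$ solves unconstrained submodular maximization. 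This reduction is a genuine addition over the paper.

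The evaluation half, however, contains the gap you flag yourself, and flagging it is not the same as closing it. NP-hardness of the LP problem of Lemma~\ref{NPgenantisubpoly} over arbitrary cost directions $y \in \mathbb{R}^n$ does not formally imply hardness of its restriction to $y = w \in [0,1]^n$, which is the only regime in which that LP equals $\cex f(w)$; a problem can be NP-hard while an important special case is polynomial. The fix is a short recession-cone observation: both $(e_i, 0)$ and $(-e_i, 1)$ are recession directions of $\mathcal P^f_{\text{gen}}$, so the LP value is $-\infty$ whenever $y_i < 0$ or $y_i > 1$ for some $i$. Thus the LP is trivial outside the box, and a polynomial-time evaluator for $\cex f$ on $[0,1]^n$ would yield a polynomial-time algorithm for the full LP problem, contradicting Lemma~\ref{NPgenantisubpoly}. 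Alternatively, a direct reduction avoids that lemma entirely: for a symmetric submodular function (e.g.\ a graph cut function), pairing a maximizer $X^*$ with its complement via $\lambda_{X^*} = \lambda_{V \setminus X^*} = \tfrac{1}{2}$ shows $\cex f(\tfrac{1}{2}\mathbf{1}) = \max_{X} f(X)$, so evaluating $\cex f$ at a single fractional point encodes Max-Cut. One further caution: when invoking Lemma~\ref{NPgenantisubpoly} you must rely on its membership-based proof (decision version of submodular maximization plus the membership--optimization equivalence of \cite{grotschel1984geometric}), not on the paper's follow-up remark that the LP hardness ``can also be proved'' via the hardness of the concave extension --- that remark is this very proposition, and leaning on it would make your argument circular.
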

This result is shown in~\cite{vondrak2007submodularity}. 



Similar as for the polyhedral characterization, we can relax the
distributional characterization to consider specific simplified
distributions. In particular, we can obtain the multilinear extension,
through a particular distribution, namely $\{ \lambda_S = \prod_{i \in S} x_i
\prod_{i \notin S} (1 - x_i), S \subseteq V \} \in \Lambda_x$. 
Then the multilinear extension is
defined as follows:
\begin{equation}
\label{eq:multlinextdist}
\mex f(x) \triangleq 
\sum_{S \subseteq V} \lambda_S f(S) 
= \sum_{S \subseteq V} f(S) \prod_{i \in S} x_i \prod_{i \notin S} (1 - x_i)
\end{equation}
It is not hard to see that this forms a lower bound on the concave
extension $\cex f$. This extension is not concave, however, unlike the
extensions described in Section~\ref{sec:concave-upper-lower} that
are. Similar to the concave extension it is hard to evaluate this
extension, and typically requires
sampling~\cite{vondrak2007submodularity} in practice to get
an estimate, although special cases exist
where it can be analytically expressed and computed exactly~\cite{cvxframework}.

\subsection{Concave extensions of subclasses of submodular functions}
While the concave extension $\cex f(x)$ is NP hard to compute in general, it can be done efficiently for certain subclasses of submodular functions. These include, for example, sums of weighted matroid rank functions~\cite{vondrak2007submodularity}, and the class of $M^{\natural}$-concave functions~(c.f. Theorem 6.42 in \cite{murota2003discrete}).

\subsection{Concave extensions and submodular maximization}
\label{concavemax}
The concave extensions and the multilinear extension have interesting connections to submodular maximization.  The following lemma from~\cite{vondrak2007submodularity} connects many of these extensions:
\begin{lemma}\cite{vondrak2007submodularity}
For every monotone submodular function $f$, $\cex f(x) \geq \mex f(x) \geq (1 - \frac{1}{e})\cex f(y)$. 
\end{lemma}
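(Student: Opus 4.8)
The plan is to prove the two inequalities separately, reading the statement with the convention that the two points are related coordinatewise by $x_i = 1 - e^{-y_i}$, so that $x$ is the endpoint of the continuous-greedy trajectory that targets $y$. The left inequality is then the easy direction and the right inequality is the substantive one.

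The left inequality $\cex f(x) \geq \mex f(x)$ is immediate from the distributional characterization in Lemma~\ref{eqconcvext}. Indeed, the product weights $\lambda_S = \prod_{i \in S} x_i \prod_{i \notin S}(1 - x_i)$ satisfy $\sum_S \lambda_S = 1$ and $\sum_S \lambda_S 1_S = x$, hence $\{\lambda_S\} \in \Lambda_x$. Since $\cex f(x)$ is the maximum of $\sum_S \lambda_S f(S)$ over all of $\Lambda_x$, while $\mex f(x) = \sum_S \lambda_S f(S)$ is the value at this one feasible point, the bound follows.

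For the right inequality I would use the continuous-greedy differential argument of Vondr\'ak~\cite{vondrak2007submodularity}. Let $\{\lambda^*_S\}$ be a maximizer for $\cex f(y)$, so that $\cex f(y) = \sum_S \lambda^*_S f(S)$ and $S \sim \lambda^*$ has marginals $\Pr[i \in S] = y_i$. Define a trajectory $x(t) \in [0,1]^V$ by $x(0) = 0$ and $\dot x_i(t) = (1 - x_i(t)) y_i$, whose solution is $x_i(t) = 1 - e^{-y_i t}$; in particular $x(1) = x$. Writing $F = \mex f$ and $\phi(t) = F(x(t))$, multilinearity gives $\frac{d}{dt}\phi(t) = \sum_i (1 - x_i(t)) y_i\, \partial_i F(x(t))$, and the identity $(1 - x_i)\partial_i F(x) = \mathbb{E}_{\hat R \sim x}[\mathbf 1_{i \notin \hat R}(f(\hat R \cup i) - f(\hat R))]$ (with $\hat R$ the product-distributed set) lets me rewrite this, using $y_i = \mathbb{E}_{S \sim \lambda^*}[\mathbf 1_{i \in S}]$, as $\mathbb{E}_{S \sim \lambda^*}\,\mathbb{E}_{\hat R}\big[\sum_{i \in S \setminus \hat R}(f(\hat R \cup i) - f(\hat R))\big]$. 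Submodularity then gives $\sum_{i \in S \setminus \hat R}(f(\hat R \cup i) - f(\hat R)) \geq f(\hat R \cup S) - f(\hat R)$, and monotonicity gives $f(\hat R \cup S) \geq f(S)$, so that $\phi'(t) \geq \mathbb{E}_{S \sim \lambda^*}[f(S)] - \phi(t) = \cex f(y) - \phi(t)$. With $\phi(0) = 0$, this linear differential inequality integrates to $\phi(t) \geq (1 - e^{-t})\cex f(y)$, and evaluating at $t = 1$ yields $\mex f(x) = \phi(1) \geq (1 - 1/e)\cex f(y)$.

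The main obstacle is this middle computation: justifying the derivative identity $(1-x_i)\partial_i F(x) = \mathbb{E}[\mathbf 1_{i \notin \hat R}(f(\hat R \cup i) - f(\hat R))]$ from multilinearity, and then correctly applying submodularity to bound the sum of single-element marginal gains below by the joint gain $f(\hat R \cup S) - f(\hat R)$ — this is where both submodularity and monotonicity (for the final $f(\hat R \cup S) \geq f(S)$) are essential, and where the $1 - 1/e$ factor is born out of the solution of $\phi' \geq \cex f(y) - \phi$. The remaining bookkeeping (the trajectory stays in $[0,1]^V$, $F$ is smooth since multilinear, and the marginals of $\lambda^*$ equal $y$) is routine.
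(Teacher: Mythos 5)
Your argument is mathematically sound, but it cannot be judged ``the same as the paper's'' for a simple reason: the paper gives no proof of this lemma at all --- it is stated with a bare citation to Vondr\'ak's thesis. What you have done is reconstruct the cited continuous-greedy style analysis, and the reconstruction is correct: the identity $(1-x_i)\partial_i F(x) = \mathbb{E}_{\hat R \sim x}\bigl[\mathbf{1}_{i \notin \hat R}\bigl(f(\hat R \cup i) - f(\hat R)\bigr)\bigr]$ follows from multilinearity, the exchange of the $y$-weights for an expectation over $S \sim \lambda^*$ uses independence of $S$ and $\hat R$ correctly, the submodularity bound $\sum_{i \in S \setminus \hat R} f(i \mid \hat R) \geq f(\hat R \cup S) - f(\hat R)$ and the monotonicity step $f(\hat R \cup S) \geq f(S)$ are both right, and integrating $\phi' \geq \cex f(y) - \phi$ with $\phi(0) = f(\emptyset) = 0$ (the paper's standing normalization) is routine. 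The left inequality via membership of the product distribution in $\Lambda_x$ is also exactly the observation the paper itself makes when it introduces the multilinear extension as a lower bound on $\cex f$. The one point you should repair is your reading of the statement. The $x$/$y$ mismatch is almost certainly a typo, not an invitation to couple the two points: the result being quoted from Vondr\'ak is the same-point bound $\cex f(x) \geq \mex f(x) \geq (1-\frac{1}{e})\,\cex f(x)$. Your coupling $x_i = 1 - e^{-y_i}$ proves something slightly different, and under the standard reading your proof as written has a (small) hole. It is easily closed: since $1 - e^{-y_i} \leq y_i$ and $\mex f$ is coordinatewise nondecreasing for monotone $f$ (each $\partial_i \mex f(x) = \mathbb{E}\bigl[f(\hat R \cup i) - f(\hat R \setminus i)\bigr] \geq 0$), your conclusion $\mex f(1-e^{-y}) \geq (1-\frac{1}{e})\,\cex f(y)$ immediately yields $\mex f(y) \geq (1-\frac{1}{e})\,\cex f(y)$. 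Add that one line and the proof covers the intended statement in full.
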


It is also possible  to relate all of the three extensions of a submodular function, 
namely the convex extension, the concave extension, and the multilinear extension.
\begin{lemma}
Given a submodular function, it holds that
\begin{align}
\cex f(x) \geq \mex f(x) \geq \lex f(x)
\end{align}
\end{lemma}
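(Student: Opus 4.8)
The plan is to route the entire argument through the \emph{distributional} characterizations of the three extensions, exploiting the fact that the multilinear extension is nothing more than the common objective $\sum_{S \subseteq V} \lambda_S f(S)$ evaluated at one distinguished point of the simplex $\Lambda_x$. First I would recall from Lemma~\ref{eqconcvext} (Eqn.~\eqref{concaveextdist}) that $\cex f(x) = \max_{\lambda \in \Lambda_x} \sum_{S \subseteq V} \lambda_S f(S)$, and from Eqn.~\eqref{convexextdist} that $\lex f(x) = \min_{\lambda \in \Lambda_x} \sum_{S \subseteq V} \lambda_S f(S)$. The key structural observation is that these are the very same linear functional $\lambda \mapsto \sum_S \lambda_S f(S)$ over the very same polytope $\Lambda_x$, differing only in the direction of optimization.

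Next I would verify that the product distribution $\lambda^\star = \{ \lambda^\star_S = \prod_{i \in S} x_i \prod_{i \notin S}(1 - x_i) \}$ defining $\mex f$ in Eqn.~\eqref{eq:multlinextdist} is a feasible member of $\Lambda_x$. Concretely: each $\lambda^\star_S \geq 0$; the weights sum to one since they are the joint masses of $n$ independent Bernoulli variables; and the marginal constraint $\sum_{S \subseteq V} \lambda^\star_S 1_S = x$ holds coordinatewise because the probability that coordinate $i$ is included is exactly $\sum_{S \ni i} \prod_{j \in S} x_j \prod_{j \notin S}(1-x_j) = x_i$. This places $\mex f(x) = \sum_{S} \lambda^\star_S f(S)$ as the value of the shared objective at a single admissible point $\lambda^\star \in \Lambda_x$.

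With the feasibility of $\lambda^\star$ established, both inequalities follow by elementary sandwiching over $\Lambda_x$. Since $\cex f(x)$ is the \emph{maximum} of the objective over $\Lambda_x$, it dominates the value at $\lambda^\star$, yielding $\cex f(x) \geq \mex f(x)$; since $\lex f(x)$ is the \emph{minimum} of the same objective over the same set, it is dominated by the value at $\lambda^\star$, yielding $\mex f(x) \geq \lex f(x)$. Chaining these gives the claim.

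There is no genuine obstacle in this proof; the substance is recognizing that the multilinear extension is a member of the same parametric family that generates the two envelope extensions, so the result reduces to ``a maximum over a set is at least any element, which is at least the minimum.'' The only step requiring care is the marginal computation confirming $\lambda^\star \in \Lambda_x$ (in particular the identity $\sum_{S \ni i} \prod_{j \in S} x_j \prod_{j \notin S}(1-x_j) = x_i$); everything else is immediate from the definitions already in hand.
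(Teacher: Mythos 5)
Your proposal is correct and follows essentially the same route as the paper: both arguments use the distributional characterizations of $\lex f$, $\cex f$, and $\mex f$, noting that the multilinear extension is the common objective $\sum_{S} \lambda_S f(S)$ evaluated at the particular product distribution in $\Lambda_x$, while the concave and convex extensions are respectively the maximum and minimum of that objective over $\Lambda_x$. Your explicit verification that the product distribution lies in $\Lambda_x$ (the marginal identity) is a detail the paper leaves implicit, but it is the same proof.
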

\begin{proof}
  The proof of this result follows directly from the distributional
  characterization of the convex and concave extensions,
  Eqns.~\eqref{convexextdist}, \eqref{concaveextdist}, and
  \eqref{eq:multlinextdist}.  Note that the multilinear extension uses
  a particular distribution, the concave extension is a pointwise
  maximum over all such distributions, while the convex extension is a
  pointwise minimum over these distributions.
\end{proof}

The facts above were used in providing a relaxation based algorithm
for maximizing a subclass of submodular functions
efficiently~\cite{vondrak2007submodularity}. This relaxation based
algorithm maximizes the concave extension $\cex f(x)$ that, while
NP hard to optimize in general, can be maximized in certain special
cases. The particular special case which is considered in
\cite{vondrak2007submodularity} is the class of weighted matroid rank
functions for which the concave extension has a simple
form. Furthermore, a pipage rounding method ensures no
integrality gap with respect to the multilinear extension, thus
providing a $1 - \frac{1}{e}$ approximation algorithm for the problem
of maximizing a monotone submodular function subject to a matroid
constraint. Furthermore, later, a conditional gradient style
algorithm, also called the continuous greedy
algorithm~\cite{vondrak2008optimal}, directly optimizes the
multi-linear extension thereby providing a general $1 - 1/e$
approximation algorithm for monotone submodular maximization subject
to matroid constraints. This was later extended to the non-monotone
case by~\cite{chekuri2011submodular}.

\section{Optimality Conditions for submodular maximization}\label{submodmaxopt}
Just as the subdifferential of a submodular function provides optimality conditions for submodular minimization, the superdifferential provides the optimality conditions for submodular maximization. 

\subsection{Unconstrained submodular maximization}
In this section, we consider the general problem of unconstrained submodular maximization:
\begin{equation}
\max_{X \subseteq V} f(X)
\end{equation}
Given a submodular function, we can give KKT-like conditions for
submodular maximization, and this is done in the following theorem:
\begin{lemma}
For a submodular function $f$, a set $A$ is a maximizer of $f$, if $\mathbf{0} \in \partial^f(A)$.
\end{lemma}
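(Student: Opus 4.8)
The plan is to show that the condition $\mathbf{0} \in \partial^f(A)$ is exactly a restatement of $A$ being a global maximizer of $f$, by unwinding the definition of the superdifferential in Eqn.~\eqref{supdiff-def}. This is the concave analogue of Lemma~\ref{optsubmin} (the condition $\mathbf{0} \in \partial_f(A)$ for minimization), and the proof should mirror that one almost verbatim, with inequalities reversed.

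First I would substitute $x = \mathbf{0}$ directly into the defining inequalities of $\partial^f(A)$. By Eqn.~\eqref{supdiff-def}, $\mathbf{0} \in \partial^f(A)$ means
\begin{align}
f(Y) - \mathbf{0}(Y) \leq f(A) - \mathbf{0}(A), \quad \forall Y \subseteq V.
\end{align}
Since $\mathbf{0}(Y) = 0$ and $\mathbf{0}(A) = 0$ for the zero vector, this reduces immediately to
\begin{align}
f(Y) \leq f(A), \quad \forall Y \subseteq V,
\end{align}
which is precisely the statement that $A$ is a (global) maximizer of $f$. The two conditions are therefore literally equivalent, and no submodularity is even needed for this direction of reasoning — the equivalence holds for any set function, just as the superdifferential in Eqn.~\eqref{supdiff-def} is defined for any set function. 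I would note, however, that Theorem~\ref{altviewsthm2} guarantees $\partial^f(A)$ is non-empty for submodular $f$, so the condition is well-posed.

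There is essentially no obstacle here: the lemma as stated is a one-line consequence of the definition, and the only subtlety worth flagging is that the statement uses ``if'' where the equivalence is in fact ``if and only if.'' I would point out that the converse also holds by the identical substitution, so that $A$ maximizes $f$ \emph{if and only if} $\mathbf{0} \in \partial^f(A)$, making this the exact concave counterpart of the minimization optimality condition in Lemma~\ref{optsubmin}. The harder and more interesting content — reducing the exponentially many inequalities to a checkable set, or relating approximate optimality to approximation guarantees — lies in the surrounding results (e.g.\ the local-maximum characterization via $\partial^f_{\symmdiff(1,1)}$ and the outer-bound hierarchy), not in this basic optimality statement itself.
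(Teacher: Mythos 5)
Your proof is correct and is exactly the argument the paper intends (the paper states this lemma without proof precisely because it follows by substituting $x = \mathbf{0}$ into Eqn.~\eqref{supdiff-def}, which immediately gives $f(Y) \leq f(A)$ for all $Y \subseteq V$). Your side remarks are also accurate: the equivalence is indeed an ``if and only if,'' holds for arbitrary set functions, and submodularity enters only through results like Theorem~\ref{altviewsthm2} (non-emptiness) and the hardness of checking the condition (Lemma~\ref{NPsuperdiffmembership}).
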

However as expected, finding the set $A$, with the property above, or even verifying if for a given set $A$, $\mathbf 0 \in \partial^f(A)$ are both NP hard problems (from Lemma~\ref{NPsuperdiffmembership}). However thanks to submodularity, we show that the aforementioned outer bounds on the superdifferential provide approximate optimality conditions for submodular maximization. Moreover, unlike the superdifferential, these bounds are easy to obtain.

\begin{proposition}
\label{prop:local_max_poly}
For a submodular function $f$, if $\mathbf 0
\in \partial^f_{\symmdiff(1, 1)}(A)$ then $A$ is a local maxima of $f$
(that is, $\forall B \supseteq A$, $f(A) \geq f(B)$ and $\forall C
\subseteq A$, $f(A) \geq f(C)$). Furthermore, if we define $S =
\argmax_{X \in \{ A, V \setminus A \}} f(A)$, then $f(S) \geq
\frac{1}{3} OPT$ where $OPT$ is the optimal value.
\end{proposition}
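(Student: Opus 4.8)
The plan is to treat the two assertions in turn, since they are of quite different character: the first is essentially definitional once we recall the redundancy structure of $\partial^f_{\symmdiff(1,1)}$, while the second is the classical local-search analysis for unconstrained maximization. For the local-maximality claim I would simply invoke the equivalent description of the outer bound given in Eqn.~\eqref{eq:localsup_long}. Instantiating $x = \mathbf 0$ there, the membership $\mathbf 0 \in \partial^f_{\symmdiff(1,1)}(A)$ is by definition the statement that $f(Y) \leq f(A)$ for every $Y \in [\emptyset, A]\cup[A,V]$, i.e. for every subset and every superset of $A$. This is exactly the asserted local-maximality condition, so no extra work is required: the real content, namely that the $n$ Hamming-distance-one inequalities of Eqn.~\eqref{localsup} already force the inequality on all subsets and supersets, is precisely the submodularity fact packaged into the equality of Eqns.~\eqref{eq:localsup_long} and \eqref{localsup}.

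For the approximation guarantee, let $O$ be a global maximizer, so $f(O) = OPT$. The first part, applied to the sets $A \cap O \subseteq A$ and $A \cup O \supseteq A$, gives the two local-optimality bounds $f(A \cap O) \leq f(A)$ and $f(A \cup O) \leq f(A)$. I would then apply the submodular inequality $f(P)+f(Q) \geq f(P\cup Q)+f(P\cap Q)$ twice. Taking $P = A\cup O$ and $Q = V\setminus A$, and using $(A\cup O)\cup(V\setminus A)=V$ together with $(A\cup O)\cap(V\setminus A)=O\setminus A$, yields
\begin{align}
f(A\cup O) + f(V\setminus A) \geq f(V) + f(O\setminus A). \nonumber
\end{align}
Taking $P = A\cap O$ and $Q = O\setminus A$ (disjoint sets with union $O$), and using $f(\emptyset)=0$, yields $f(A\cap O) + f(O\setminus A) \geq f(O)$.

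Combining these two inequalities to cancel the $f(O\setminus A)$ term, and then substituting the two local-optimality bounds, produces
\begin{align}
f(O) \leq f(A\cap O) + f(A\cup O) + f(V\setminus A) - f(V) \leq 2f(A) + f(V\setminus A) - f(V). \nonumber
\end{align}
Using $f(V) \geq 0$ then gives $OPT = f(O) \leq 2f(A) + f(V\setminus A) \leq 3\max\{f(A), f(V\setminus A)\} = 3f(S)$, which is the claim. The step I expect to be the only real subtlety is the nonnegativity hypothesis: the derivation naturally leaves a residual $-f(V)$, so the clean factor of $\frac{1}{3}$ requires $f(V) \geq 0$ (the ambient nonnegativity assumption standard in this setting, consistent with $f(\emptyset)=0$), and this should be stated explicitly. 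I would also flag the apparent typo in the statement, where $\argmax_{X\in\{A,\,V\setminus A\}} f(A)$ should read $\argmax_{X\in\{A,\,V\setminus A\}} f(X)$, so that $f(S) = \max\{f(A), f(V\setminus A)\}$ as used above.
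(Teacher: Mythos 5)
Your proof is correct, and for the first claim it coincides with the paper's: the paper likewise observes that local optimality follows directly from the definition of $\partial^f_{\symmdiff(1,1)}(A)$, i.e., from the equivalence of Eqns.~\eqref{eq:localsup_long} and~\eqref{localsup}, exactly as you argue. For the second claim, however, the paper supplies no argument of its own --- it attributes the $\frac{1}{3}$ bound wholesale to Theorem~3.4 of~\cite{janvondrak} --- whereas you reconstruct that local-search analysis from scratch; your two applications of submodularity (to $A\cup O$ with $V\setminus A$, and to $A\cap O$ with $O\setminus A$) combined with the two local-optimality inequalities are precisely the steps of the cited proof, so the mathematics agrees, but your version is self-contained. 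What your route buys is transparency about hypotheses: you correctly notice that the derivation leaves a residual $-f(V)$, so the clean factor of $\frac{1}{3}$ needs $f(V)\geq 0$ (in \cite{janvondrak} the function is assumed nonnegative throughout), a hypothesis the paper's statement omits. The omission is not cosmetic: on $V=\{1,2,3\}$ the function with $f(\{2\})=1$, $f(\{1,2,3\})=-1$, and $f=0$ on all other sets is submodular and satisfies $\mathbf{0}\in\partial^f_{\symmdiff(1,1)}(\{1\})$, yet $\max\{f(\{1\}),f(\{2,3\})\}=0<\frac{1}{3}\,OPT=\frac{1}{3}$. So your explicit flagging of $f(V)\geq 0$ (and of the typo $f(A)$ for $f(X)$ in the definition of $S$, which you read in the intended way) is a genuine improvement on the paper's treatment rather than a deviation from it.
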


The above result is interesting since a very simple outer bound on the
superdifferential leads us to an approximate optimality condition for
submodular maximization. The local optimality condition follows
directly from the definition of $\partial^f_{\symmdiff(1, 1)}(A)$ and
the approximation guarantee follows directly from Theorem 3.4
in~\cite{janvondrak}.

We can also provide a sufficient condition for the maximizers of a submodular function.
\begin{lemma}
\label{prop:global_max_inpoly}
  If for any set $A$, $\mathbf{0} \in \inpolygrowshrink(A)$,
  then $A$ is the global maxima of the submodular function. 
\end{lemma}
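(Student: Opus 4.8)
The plan is to reduce this statement to the exact optimality condition already established earlier, namely that $\mathbf{0} \in \partial^f(A)$ certifies that $A$ globally maximizes $f$. Concretely, I would first recall from Lemma~\ref{thm:super_innerbound_relations} that $\inpolygrowshrink(A) \subseteq \partial^f(A)$. Hence the hypothesis $\mathbf{0} \in \inpolygrowshrink(A)$ immediately yields $\mathbf{0} \in \partial^f(A)$, and unwinding the definition of the superdifferential in Eqn.~\eqref{supdiff-def} at $x = \mathbf{0}$ gives $f(Y) \leq f(A)$ for every $Y \subseteq V$, which is precisely global optimality of $A$.

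To make the argument self-contained rather than merely citing the containment, I would expose the convexity that underlies it. By definition $\inpolygrowshrink(A) = \text{conv}(\inpolygrow(A), \inpolyshrink(A))$, so if $\mathbf{0} \in \inpolygrowshrink(A)$ there exist $x_1 \in \inpolygrow(A)$, $x_2 \in \inpolyshrink(A)$ and $\lambda \in [0,1]$ with $\mathbf{0} = \lambda x_1 + (1 - \lambda) x_2$. Each of $\inpolygrow(A)$ and $\inpolyshrink(A)$ is contained in the superdifferential $\partial^f(A)$ (this is the content of Lemma~\ref{thm:super_innerbound_relations}, resting on the fact that $\ggrow_A$ and $\gshrink_A$, together with the defining face inequalities, are genuine supergradients, as verified in Theorem~\ref{altviewsthm2}). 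Writing the superdifferential inequalities for $x_1$ and $x_2$, namely $f(Y) - x_i(Y) \leq f(A) - x_i(A)$ for all $Y$, and forming the convex combination with weights $\lambda$ and $1-\lambda$, the $x$-terms cancel because $\lambda x_1 + (1-\lambda) x_2 = \mathbf{0}$, leaving $f(Y) \leq f(A)$ for all $Y$.

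There is no serious obstacle here: the entire difficulty has already been absorbed into the two ingredients I rely on --- the inner-bound containment $\inpolygrowshrink(A) \subseteq \partial^f(A)$ and the superdifferential optimality condition. The only point meriting a sentence of care is that the reasoning exploits the convexity of $\partial^f(A)$ (a polyhedron, hence convex), which is exactly what allows a point expressed as a convex combination of elements drawn from the two subsets $\inpolygrow(A)$ and $\inpolyshrink(A)$ of $\partial^f(A)$ to remain a supergradient at $A$; submodularity itself enters only indirectly, through Theorem~\ref{altviewsthm2}'s guarantee that $\ggrow_A$ and $\gshrink_A$ lie in $\partial^f(A)$ to begin with.
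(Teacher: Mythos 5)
Your proposal is correct and follows essentially the same route as the paper's own proof: invoke the containment $\inpolygrowshrink(A) \subseteq \partial^f(A)$ from Lemma~\ref{thm:super_innerbound_relations} to conclude $\mathbf{0} \in \partial^f(A)$, which by the definition of the superdifferential (Eqn.~\eqref{supdiff-def}) gives $f(Y) \leq f(A)$ for all $Y \subseteq V$. Your second paragraph, unwinding the convex combination so the $x$-terms cancel, is a valid self-contained elaboration of that containment but does not change the substance of the argument.
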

\begin{proof}
  This proof follows from the fact that $\inpolygrowshrink(A)
  \subseteq \partial^f(A) \subseteq \partial^f_{(1, 1)}(A)$.  Thus, if
  $\mathbf{0} \in \inpolygrowshrink(A)$, it must also belong to
  $\partial^f(A)$, which means $A$ is the global optimizer of $f$.
\end{proof}

Based on Lemmas~\ref{prop:local_max_poly}
and~\ref{prop:global_max_inpoly}, if a local maxima $A$ is found
(which is relatively easy because $\partial^f_{(1, 1)}(A)$ is easy to
characterize) and if it happens that $\mathbf{0} \in
\inpolygrowshrink(A)$ (which is easy to check since
$\inpolygrowshrink(A)$ is easy to characterize), then we have a
certificate of a global maxima of the submodular function.

\subsection{Constrained submodular maximization}
\label{sec:constr-subm-maxim}

We can also provide similar results for constrained submodular
maximization as we show in the present section.  We consider a
constrained submodular maximization problem with $\mathcal C \subseteq 2^V$
representing a set of sets, and consider the following problem:
\begin{equation}
\max_{X \in \mathcal C} f(X)
\end{equation}
For example $\mathcal C$ could represent a cardinality constraint $\{X
\subseteq V: |X| \leq m\}$, or a spanning tree, matching, s-t path
constraints, etc. Another common type of constraints are matroid
independence constraints (i.e., $\mathcal C$ consists of the set of
independent sets of some matroid).Denote $\mathcal I$ is the
independent set of a matroid $\mathcal M$. Then $\mathcal C = \{X: X
\in \mathcal I\}$ is a matroid constraint. Similarly $\mathcal C = \{X
\subseteq V: c(X) \leq B\}$ represents a knapsack constraint.  We
therefore refer to these as \emph{combinatorial constraints}.

We then define a constraint-cognizant modification $\partial^f_{\mathcal C}(A)$ of the
superdifferential (Eqn.~\eqref{supdiff-def}) as follows:
\begin{align}
\forall X \in \mathcal C \subseteq 2^V,\;\;
\partial^f_{\mathcal C}(X) \triangleq \{x \in \mathbb{R}^n: f(Y) - x(Y) \leq f(X) - x(X), \forall Y \in \mathcal C\}
\label{eq:constraint_cognizant_super}
\end{align}
In other words, we only consider the feasible sets associated with the constraints. Then we can trivially define a KKT like optimality condition for the optimization problem:
\begin{lemma}
For a submodular function $f$, a set $A$ is a maximizer of the problem $\max_{X \in \mathcal C} f(X)$, if $0 \in \partial^f_{\mathcal C}(A)$.
\end{lemma}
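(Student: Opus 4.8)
The plan is to observe that the statement is a direct translation of the defining inequality of the constraint-cognizant superdifferential $\partial^f_{\mathcal C}(A)$ once we evaluate it at the specific point $x = \mathbf 0$, exactly mirroring how $\mathbf 0 \in \partial_f(A)$ characterizes minimizers in Lemma~\ref{optsubmin} and how $\mathbf 0 \in \partial^f(A)$ gives the unconstrained maximization condition. The whole argument is a substitution.

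First I would note that the hypothesis $\mathbf 0 \in \partial^f_{\mathcal C}(A)$ presupposes that $A \in \mathcal C$, since $\partial^f_{\mathcal C}(\cdot)$ is defined in Eqn.~\eqref{eq:constraint_cognizant_super} only for feasible sets $X \in \mathcal C$. Thus $A$ is itself a feasible point of the constrained problem. Next I would substitute $x = \mathbf 0$ into the inequality defining $\partial^f_{\mathcal C}(A)$, namely $f(Y) - x(Y) \leq f(A) - x(A)$ for all $Y \in \mathcal C$. Since $\mathbf 0(Y) = 0$ and $\mathbf 0(A) = 0$, the modular terms vanish and the condition collapses to $f(Y) \leq f(A)$ for every $Y \in \mathcal C$. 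Combined with the feasibility of $A$ established above, this is precisely the statement that $A$ attains $\max_{X \in \mathcal C} f(X)$, which establishes the claim.

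There is no real obstacle here: the entire content is the observation that evaluating the superdifferential's defining affine inequalities at the origin strips away the linear (modular) part and leaves a pure comparison of function values across the feasible region $\mathcal C$. The only points worth flagging are (i) that, as with the analogous unconstrained maximization lemma, only the sufficiency (``if'') direction is asserted, and (ii) that, unlike in the unconstrained setting, we make no use of submodularity whatsoever --- the implication holds for an arbitrary set function, and submodularity only becomes relevant when one seeks tractable surrogates (such as outer bounds) for $\partial^f_{\mathcal C}(A)$ so that the optimality certificate can actually be checked efficiently.
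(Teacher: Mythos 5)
Your proposal is correct and matches the paper's treatment: the paper states this lemma without a written proof, regarding it as a trivial consequence of the definition in Eqn.~\eqref{eq:constraint_cognizant_super}, and your substitution of $x = \mathbf 0$ into that defining inequality (together with noting that $A \in \mathcal C$ is built into the definition) is exactly that intended argument. Your added remarks --- that only sufficiency is claimed and that submodularity plays no role in this implication --- are accurate and consistent with the paper's framing.
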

Clearly finding the set above is NP-hard. However, similar to the
unconstrained setting, we show that, in a number of cases,
approximating the superdifferential can lead to polynomial time
algorithms for constrained submodular maximization with worst case
approximation guarantees. This is done in several scenarios as we next
discuss.

\subsubsection{Constrained monotone submodular function maximization}
\label{sec:constr-monot-subm}

Consider here a case where $f$ is a monotone submodular function, and
$\mathcal C$ is the constraint that the set belongs to the
intersection of the independence sets of $k$ matroids. Let $\mathcal
M_1, \mathcal M_2, \cdots, \mathcal M_k$ represent the $k$ matroids,
with corresponding independence sets $\mathcal I_1, \mathcal I_2,
\cdots, \mathcal I_k$. Then $\mathcal C = \{X: X \in \cap_{i = 1}^k
\mathcal I_i\}$. Analogous to how we defined outer bounds
$\partial^f_{\symmdiff(k, l)}(X)$ on the superdifferential in
Section~\ref{sec:outer-bounds-superd} and
Eqn.~\eqref{eq:super_diff_outer_bounds}, we can also define the
outer-bounds $\partial^f_{\mathcal C, \symmdiff(k, l)}$ of
$\partial^f_{\mathcal C}$ that correspond to $\partial^f_{\symmdiff(k,
  l)}(X)$ but that also are restricted to $\mathcal C$ in the sense of
Eqn.~\eqref{eq:constraint_cognizant_super}. We then make the following
observation for the problem of monotone submodular maximization
subject to matroid constraints.
\begin{observation}
Given a monotone submodular function $f$ and a constraint set $\mathcal C = \cap_{i = 1}^k \mathcal I_i$, for any set $A \in \mathcal C$, the following holds:
\begin{enumerate}

\item If $\mathbf{0} \in \partial^f_{\mathcal C, \symmdiff (2,
    k+1)}(A)$, then $f(A)$ is guaranteed to be at least $\frac{1}{k+1}$
  times the optimal value. In particular, for the special case of
  monotone submodular maximization subject to a single matroid
  constraint, for any set $A \in \mathcal C$, if $\mathbf{0}
  \in \partial^f_{\mathcal C, (2, 2)}(A)$, then $f(A)$ is guaranteed
  to be at least $\frac{1}{2}$ times the optimal value;

\item If $k = 1$ and $\mathcal C$ is a cardinality (uniform matroid
  constraint $\{X: |X| \leq m\}$), for any $r > 0$, a set $A$
  satisfying $\mathbf{0} \in \partial^f_{\mathcal C, \symmdiff(r + 1, r+1)}$ is
  guaranteed to have an approximation guarantee no worse than
  $\frac{m}{2m - r}$.
\end{enumerate}
\end{observation}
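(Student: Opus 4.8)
The plan is to read each hypothesis $\mathbf 0 \in \partial^f_{\mathcal C, \symmdiff(k,l)}(A)$ as an (approximate) local-optimality certificate and then run an exchange argument against a global optimizer $O \in \argmax_{X\in\mathcal C} f(X)$. Substituting $x=\mathbf 0$ into the defining inequalities of the constraint-cognizant superdifferential (Eqn.~\eqref{eq:constraint_cognizant_super}), the condition $\mathbf 0\in\partial^f_{\mathcal C,\symmdiff(k,l)}(A)$ says precisely that $f(Y)\le f(A)$ for every feasible $Y\in\mathcal C$ with $|Y\setminus A|\le k-1$ and $|A\setminus Y|\le l-1$; that is, no feasible exchange inserting at most $k-1$ elements and deleting at most $l-1$ elements can strictly improve $f$. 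Thus part~1 ($\symmdiff(2,k+1)$) is local optimality under inserting one element and deleting up to $k$, while part~2 ($\symmdiff(r+1,r+1)$) is local optimality under swapping up to $r$ elements in and out. The global bound $f(O)-f(A)\le\sum_{o\in O\setminus A}f(o\mid A)$, from monotonicity ($f(O)\le f(O\cup A)$) and submodularity (telescoping the marginals), will be the common starting point.

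For part~1 I would combine local optimality with a matroid-intersection exchange lemma. The single-matroid special case ($k=1$, $\symmdiff(2,2)$) is clean and I would present it in full: assuming without loss of generality $|A|=|O|$, Brualdi's symmetric exchange theorem gives a bijection $\pi:O\setminus A\to A\setminus O$ with $(A\setminus\{\pi(o)\})\cup\{o\}\in\mathcal C$. Local optimality gives $f(A)\ge f((A\setminus\pi(o))\cup\{o\})$, so by submodularity $f(o\mid A)\le f(o\mid A\setminus\pi(o))\le f(\pi(o)\mid A\setminus\pi(o))$. Summing over $o$, using that $\pi$ is a bijection onto $A\setminus O$ and the telescoping estimate $\sum_{a\in A\setminus O}f(a\mid A\setminus a)\le f(A)-f(A\cap O)\le f(A)$, yields $f(O)\le 2f(A)$. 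For general $k$ I would invoke the $k$-matroid generalization of this exchange, associating to each $o\in O\setminus A$ a removal set $T_o\subseteq A\setminus O$ with $|T_o|\le k$, $(A\setminus T_o)\cup\{o\}\in\mathcal C$, and each element of $A\setminus O$ appearing in at most $k$ of the $T_o$, and then run the classical accounting \cite{fisher1978analysis,nemhauser1978,lee2009non} to reach $f(O)\le(k+1)f(A)$.

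For part~2 I would first reduce to $|A|=|O|=m$ using monotonicity: since single additions are among the admissible $r$-swaps, local optimality forces us to pad $A$ to size $m$ without loss of value. If $|A\setminus O|\le r$, then $Y=O$ is itself an admissible $r$-swap, so local optimality gives $f(O)\le f(A)$, and the ratio $1\ge\frac{m}{2m-r}$. The interesting regime $|A\setminus O|>r$ I would handle by an averaging argument over all admissible $r$-swaps: for $S\subseteq A\setminus O$ and $T\subseteq O\setminus A$ of size $r$, local optimality gives $f((A\setminus S)\cup T)\le f(A)$; lower-bounding the left side by submodularity and averaging over the family of such swaps should produce, after simplification, $f(O)\le\bigl(2-\tfrac rm\bigr)f(A)$, which rearranges to $f(A)\ge\frac{m}{2m-r}f(O)$. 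A useful sanity check is $r=m$, where the bound collapses to global optimality $f(A)\ge f(O)$, and $r=1$, where it gives the $\frac{m}{2m-1}$ bound consistent with one-swap local search.

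The main obstacle in both parts is the submodular accounting over grouped exchanges, not the local-optimality reduction, which is immediate. For $k=1$ and for the easy case of part~2 the charging telescopes cleanly, but as soon as several elements are deleted at once the naive per-element charge $f(A)-f(A\setminus T_o)\le\sum_{a\in T_o}f(a\mid A\setminus a)$ fails (by submodularity the inequality in fact runs the other way), so recovering the exact constants $\frac{1}{k+1}$ and $\frac{m}{2m-r}$ requires the generalized exchange lemma controlling how many times each deleted element is reused together with a careful averaging that extracts the $f(O)$ and $f(A)$ terms with the correct coefficients. I expect the real work to lie in verifying the precise constants in the averaging step for part~2 and in confirming the each-element-used-at-most-$k$-times property of the $k$-matroid exchange.
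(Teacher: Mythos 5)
Your reduction of $\mathbf 0 \in \partial^f_{\mathcal C,\symmdiff(k,l)}(A)$ (via Eqn.~\eqref{eq:constraint_cognizant_super}) to local optimality under feasible exchanges that insert at most $k-1$ and delete at most $l-1$ elements is exactly the paper's reading of the hypothesis, but from there the two arguments diverge: the paper's entire proof of this observation is that translation plus citations --- Corollary 2.4 of \cite{lee2009non} (and \cite{fisher1978analysis} for $k=1$) for part 1, and Theorem 8 of \cite{filmus2013inequalities} for part 2 --- whereas you attempt to re-derive the guarantees from first principles. Your single-matroid derivation is complete and correct (Brualdi bijection, $f(o \mid A)\le f(\pi(o)\mid A\setminus \pi(o))$ from swap-optimality and submodularity, then the telescoping bound $\sum_{a\in A\setminus O}f(a\mid A\setminus a)\le f(A)-f(A\cap O)\le f(A)$), and is genuinely more than the paper provides for that case. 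For general $k$ and for part 2, however, your sketch stops exactly where the cited results do the real work: as you yourself observe, the per-element charge $f(A)-f(A\setminus T_o)\le \sum_{a\in T_o}f(a\mid A\setminus a)$ is false (submodularity yields the reverse inequality), and the repair is the grouped-deletion accounting lemma of \cite{lee2009non} --- if each $a\in A\setminus O$ lies in at most $k$ of the sets $T_o$, then $\sum_{o}\bigl[f(A)-f(A\setminus T_o)\bigr]\le k\bigl[f(A)-f(A\cap O)\bigr]$ --- which is a nontrivial result, and likewise the constant $\frac{m}{2m-r}$ is precisely the content of Theorem 8 of \cite{filmus2013inequalities}. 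So a finished version of your proof either establishes these lemmas or cites them; the latter is literally the paper's proof. What your route buys is a self-contained treatment of the headline $k=1$ case and a transparent view of where the constants come from; what the paper's route buys is brevity, since the point of the observation is the dictionary between superdifferential outer bounds and local-search optimality, not the guarantees themselves.

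One step of yours needs repair: the ``without loss of generality $|A|=|O|$ (resp.\ $|A|=m$) by padding'' move is not sound, because local optimality of $A$ does not transfer to the padded set. Handle small $A$ directly instead. In part 2, if $|A|<m$ then every single addition is an admissible swap, so local optimality gives $f(o\mid A)\le 0$ for all $o\notin A$, whence $f(O)\le f(O\cup A)\le f(A)+\sum_{o\in O\setminus A}f(o\mid A)\le f(A)$ and $A$ is already globally optimal. In part 1, use the form of the matroid exchange lemma that maps each $o\in O\setminus A$ either injectively into $A\setminus O$ or to $\emptyset$ (the latter when $A+o$ remains independent); elements mapped to $\emptyset$ contribute $f(o\mid A)\le 0$ by the same single-addition argument, so the accounting goes through even when $|A|<|O|$.
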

The first part of the observation (i.e., point 1) follow directly from
Corollary 2.4 in~\cite{lee2009non}. In the case of $k = 1$, the same
result was shown in~\cite{fisher1978analysis}). Moreover, it was also
shown in~\cite{fisher1978analysis}, that for the problem of monotone
submodular maximization subject to $k > 1$ matroid constraints, the
approximate optimality conditions $\mathbf 0 \in \partial^f_{\mathcal C, \symmdiff(2,
  2)}(A)$, can be arbitrarily bad, thus requiring ``higher order''
optimality conditions. We also remark that when $r = 1$ (i.e.,
submodular maximization subject to a single matroid constraint), this
is the same approximation factor that can be obtained by the simple greedy
algorithm~\cite{nemhauser1978}. 

The second part (point 2) of the above observation (which is submodular
maximization subject to cardinality constraints) follows from Theorem~8 in \cite{filmus2013inequalities}. In the case when $r = 1$, the
condition $\mathbf{0} \in \partial^f_{\mathcal C, \symmdiff(2, 2)}$ provides a
guarantee of $m/(2m-1)$ which is a slight improvement of $1/2$ in the
special case of cardinality constraints. An interesting observation is
that with better forms of local optima (i.e., the condition $\mathbf{0}
\in \partial^f_{\mathcal C, \symmdiff(r + 1, r+1)}$, is a local optima up to
size $r$), imply better approximation guarantees to this problem.

The approximation factor for $k \geq 2$ matroids can actually be improved as shown in~\cite{matroidimproved}. 
\begin{observation}
  Given a maximization problem of a monotone submodular function $f$
  subject to $k > 1$ matroid constraints, for a set $A \in \mathcal
  C$, if $\mathbf{0} \in \partial^f_{\mathcal C, \symmdiff(p+1, kp+1)}(A)$,
  then $f(A)$ is guaranteed to be at least $\frac{1}{k+1/p}$ times the
  optimal value. In particular, for a special case of monotone
  submodular maximization subject to $2$ matroid constraints, for any
  set $A \in \mathcal C$, if $\mathbf{0} \in \partial^f_{\mathcal C,
    \symmdiff (p+1, 2p+1)}(A)$, then $f(A)$ is guaranteed to be at least
  $\frac{1}{2+1/p}$ times the optimal value.
\end{observation}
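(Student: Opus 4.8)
The plan is to recognize that the hypothesis $\mathbf{0} \in \partial^f_{\mathcal C, \symmdiff(p+1, kp+1)}(A)$ is merely a polyhedral restatement of the $p$-swap local-optimality condition, after which the claimed factor is exactly the guarantee established for such local optima in~\cite{matroidimproved}; the work is almost entirely one of translation.

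First I would unfold the membership condition. Following the outer-bound construction of Section~\ref{sec:outer-bounds-superd} together with the constraint-cognizant definition in Eqn.~\eqref{eq:constraint_cognizant_super}, the polyhedron $\partial^f_{\mathcal C, \symmdiff(p+1, kp+1)}(A)$ collects those $x \in \mathbb{R}^n$ with $f(Y) - x(Y) \leq f(A) - x(A)$ for every feasible $Y \in \mathcal C$ satisfying $|Y \setminus A| \leq p$ and $|A \setminus Y| \leq kp$ (the symmetric-difference thresholds $p+1$ and $kp+1$ translating into exactly these bounds, as in Eqn.~\eqref{eq:super_diff_outer_bounds}). Substituting $x = \mathbf{0}$ collapses each such inequality to $f(Y) \leq f(A)$. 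Hence $\mathbf{0} \in \partial^f_{\mathcal C, \symmdiff(p+1, kp+1)}(A)$ says precisely that no feasible set obtained from $A$ by inserting at most $p$ elements and deleting at most $kp$ elements improves the objective, i.e.\ that $A$ is a $p$-swap local optimum for maximizing $f$ over the intersection of the $k$ matroids.

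Next I would invoke the local-search analysis of~\cite{matroidimproved}. Its core is a generalized matroid-exchange argument: given the local optimum $A$ and an optimal feasible set $O$, one produces a balanced family of feasible swaps, each inserting at most $p$ elements of $O \setminus A$ and removing at most $kp$ elements of $A \setminus O$, so that every element of the symmetric difference is accounted for the right number of times. Local optimality makes each such swap non-improving; summing the resulting inequalities and applying submodularity together with monotonicity of $f$ yields $f(A) \geq \frac{1}{k + 1/p}\,\mathrm{OPT}$. Setting $k = 2$ immediately specializes $\partial^f_{\mathcal C, \symmdiff(p+1, 2p+1)}(A)$ to the stated $\frac{1}{2 + 1/p}$ factor.

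The only genuinely nontrivial point — and hence the step I would be most careful about — is matching the neighborhood exactly: I must confirm that the symmetric-difference bounds $(p+1, kp+1)$ coincide with the ``add at most $p$, drop at most $kp$'' swaps used in the cited exchange lemma, and that each swap produced there remains simultaneously independent in all $k$ matroids (so that $Y \in \mathcal C$) and within those size bounds. Once this bookkeeping is verified, the observation is immediate, since our approximate-optimality hypothesis is definitionally the local-optimality premise of the theorem in~\cite{matroidimproved}; everything else is a direct quotation of that result, specialized to $k = 2$.
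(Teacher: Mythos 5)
Your proposal matches the paper's own justification: the paper proves this observation simply by noting that the membership condition $\mathbf{0} \in \partial^f_{\mathcal C, \symmdiff(p+1, kp+1)}(A)$ is the $p$-exchange local-optimality condition (add at most $p$, delete at most $kp$ feasible swaps) and citing Corollary 3.1 of~\cite{matroidimproved}, which is exactly your translate-then-quote argument. Your additional sketch of the exchange-family machinery inside~\cite{matroidimproved} is a faithful description of that cited proof, so the approaches coincide.
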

This is currently the best known result for $k > 1$ matroids, a result
that follows from Corollary 3.1 in \cite{matroidimproved}.


Overall, the main insight in these results is that the local optima which are
obtained through the local search algorithms can all be viewed as
(approximate) \emph{optimality conditions} obtained via outer bounds
of the superdifferential of a submodular function.


\subsubsection{Constrained non-monotone submodular function maximization}
\label{sec:constr-symm-subm}

Finally, we consider the case of non-monotone submodular maximization
subject to $k$ matroid constraints. We first consider the case of symmetric
submodular functions, i.e., $f(A)=f(V \setminus A)$ for all $A \subseteq V$.
\begin{observation}
Given a symmetric submodular function $f$,  
\begin{enumerate}

\item If the constraint set is as follows $\mathcal C = \cap_{i = 1}^k
  \mathcal I_i$, then any set $A \in \mathcal C$ satisfying $\mathbf{0}
  \in \partial^f_{\mathcal C, \symmdiff (2, k+1)}(A)$, $f(A)$ is guaranteed to be at least
  $\frac{1}{k+2}$ times the optimal value.


\item If $\mathcal C$ is the set of bases of a
  matroid, then any set $A$ satisfying $0 \in \partial^f_{\mathcal C, \symmdiff (2,
    2)}(A)$ is guaranteed to have a valuation at least $1/3$ of the
  optimal.

\end{enumerate}
\end{observation}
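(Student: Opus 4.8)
The plan is to first decode the superdifferential membership condition into a concrete local-search optimality statement, and then to recognize that statement as exactly the fixed-point condition analyzed in the local-search literature for non-monotone submodular maximization under matroid constraints~\cite{lee2009non}. Recall that, by the definition in Eqn.~\eqref{eq:constraint_cognizant_super} together with the outer-bound restriction used for $\partial^f_{\mathcal C, \symmdiff(k,l)}$, the condition $\mathbf{0} \in \partial^f_{\mathcal C, \symmdiff(2, k+1)}(A)$ means precisely that $f(Y) \leq f(A)$ for every feasible $Y \in \mathcal C$ with $|Y \setminus A| \leq 1$ and $|A \setminus Y| \leq k$. In other words, no feasible move that inserts at most one element of $V \setminus A$ and deletes at most $k$ elements of $A$ can strictly increase $f$; this is exactly the local optimum analyzed in the $k$-matroid local-search algorithms.

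Second, I would set up the exchange argument against a global optimizer $O \in \mathcal C$. Using the matroid exchange property for the intersection of $k$ matroids, one obtains, for each $o \in O \setminus A$, a deletion set $T_o \subseteq A \setminus O$ with $|T_o| \leq k$ such that $(A \setminus T_o) \cup \{o\} \in \mathcal C$, and with controlled multiplicity (the standard combinatorial exchange lemma for $k$ matroids underlying Corollary~2.4 of~\cite{lee2009non}). Applying the local optimality inequality $f(A) \geq f((A \setminus T_o) \cup \{o\})$ to each such move, expanding the right-hand side via submodularity, and summing over all $o \in O \setminus A$ should yield a single inequality of the shape $(k+1)\, f(A) \geq f(A \cup O) + (\text{nonnegative terms})$.

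Third --- and this is where symmetry is essential --- for a general non-monotone function the term $f(A \cup O)$ cannot be lower bounded by $f(O)$, so the monotone bound $\tfrac{1}{k+1}$ degrades. Here I would invoke $f(A) = f(V \setminus A)$ to generate a complementary inequality (by running the same exchange argument, or a direct submodular estimate, on the complement), supplying the missing control on $f(A \cup O)$ and trading the clean $\tfrac{1}{k+1}$ for the stated $\tfrac{1}{k+2}$. For part~2, where $\mathcal C$ is the set of bases of a single matroid, feasibility forces equal-cardinality exchanges, so the admissible moves are delete-one/add-one swaps --- exactly $\partial^f_{\mathcal C, \symmdiff(2,2)}(A)$ --- and the base-exchange property specializes the argument with $k=1$, combining with symmetry to produce the factor $\tfrac{1}{3}$.

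I expect the main obstacle to be this third step: correctly combining the exchange inequalities with the symmetry relation so that the non-monotone terms cancel with the right constants to give $\tfrac{1}{k+2}$ (resp.\ $\tfrac{1}{3}$). The combinatorial exchange construction with bounded multiplicity is routine given the $k$-matroid structure, and the translation in the first step is immediate from the definitions; it is the bookkeeping of the submodular inequalities together with the symmetry substitution that carries the quantitative content and is precisely what~\cite{lee2009non} establishes, so in the paper's idiom the cleanest route is to verify the local-optimality translation and then cite that result.
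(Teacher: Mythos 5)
Your proposal is correct and takes essentially the same route as the paper: you unpack $\mathbf{0} \in \partial^f_{\mathcal C, \symmdiff(2, k+1)}(A)$ into exactly the add-one/delete-up-to-$k$ local-optimality condition within $\mathcal C$, and then (as you yourself conclude is cleanest) invoke the local-search guarantees for symmetric submodular functions of~\cite{lee2009non}, which is precisely what the paper does by citing its Theorem 2.8 for part 1 and Theorem 5.1 for part 2. Your sketch of the exchange-plus-symmetry argument (yielding $(k+2)f(A) \geq f(O)$) is an accurate unpacking of the internals of those cited theorems, but it is not a different method.
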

The results in this proposition follow directly from the definitions
above, and through the results in~\cite{lee2009non} (the first part
follows from Theorem 2.8, while the second part is implied by Theorem
5.1).

We lastly provide an approximation bound in terms of
superdifferentials for non-monotone submodular maximization.
\begin{observation}
Given a non-monotone submodular function $f$, and $\mathcal C$ is a cardinality (uniform matroid constraint $\{X: |X| \leq m\}$), for any $r > 0$, a set $A$ satisfying $\mathbf{0} \in \partial^f_{\mathcal C, (r + 1, r+1)}$ is guaranteed to have an approximation guarantee no worse than $\frac{r}{2m - r}$.
\end{observation}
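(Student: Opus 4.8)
The plan is to translate the polyhedral hypothesis into a purely combinatorial local-optimality statement and then recover the ratio from a local-search argument that combines submodularity with the $r$-swap inequalities. Unwinding the definition of the constraint-cognizant outer bound, the condition $\mathbf 0 \in \partial^f_{\mathcal C, \symmdiff(r+1, r+1)}(A)$ says exactly that $f(Y) \le f(A)$ for every feasible $Y \in \mathcal C$ (that is, $|Y|\le m$) with $|Y \setminus A| \le r$ and $|A \setminus Y| \le r$. In other words, $A$ is an \emph{$r$-swap local optimum}: no exchange deleting at most $r$ elements of $A$ and inserting at most $r$ new elements, while respecting the cardinality bound, can strictly increase $f$. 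This is precisely the certificate produced by $r$-swap local search, and the content of the observation is that such a certificate already forces $f(A) \ge \tfrac{r}{2m-r}\,\mathrm{OPT}$.

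Let $O$ be an optimal feasible set, $f(O)=\mathrm{OPT}$, and write $q=|A\setminus O|$. First I would dispose of the easy regime: if $q\le r$ and $|O\setminus A|\le r$, the single swap removing $A\setminus O$ and inserting $O\setminus A$ is admissible, so local optimality gives $f(O)\le f(A)$ outright, and since $\tfrac{r}{2m-r}\le 1$ (because $r\le m$) the claimed bound holds trivially. The substantive case is $q>r$, where no single admissible swap reaches $O$ and the bound must be assembled from many partial swaps.

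For the substantive case I would average the local-optimality inequalities over a balanced family of $r$-swaps between $A$ and $O$. For a removed set $S\subseteq A\setminus O$ and an inserted set $T\subseteq O\setminus A$ of size at most $r$, local optimality gives $f((A\setminus S)\cup T)\le f(A)$, and the submodular estimates \eqref{nembounds}--\eqref{nembounds2} let me lower-bound $f((A\setminus S)\cup T)$ by $f(A\cup T)$ minus removal marginals $\sum_{s\in S} f(s\mid A\setminus s)$. Summing over a collection in which each element of $A\setminus O$ is removed equally often and each element of $O\setminus A$ is inserted equally often, the removal terms telescope into at most $f(A)$ via the identity $\sum_{a\in A} f(a\mid A\setminus a)\le f(A)$, while the insertion terms accumulate a controlled fraction of $f(A\cup O)$. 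The counting weights are what produce the exact ratio $\tfrac{r}{2m-r}$: the $m$ positions can be exchanged only in blocks of size $r$, so the fraction of $O$ reachable per unit loss in $f(A)$ scales like $r/(2m-r)$. This mirrors the monotone computation behind the $\tfrac{m}{2m-r}$ bound of the preceding cardinality observation (Theorem~8 of \cite{filmus2013inequalities}), the weakening from $m$ to $r$ in the numerator being precisely the price of losing monotonicity.

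The main obstacle is the non-monotonicity. In the monotone case one freely uses $f(A\cup O)\ge f(O)=\mathrm{OPT}$ to convert the accumulated insertion marginals into a bound against $\mathrm{OPT}$; here $f(A\cup O)$ may be far smaller than $f(O)$, and marginals $f(b\mid A)$ for $b\notin A$ may be negative, so insertion can only hurt. Neutralizing this requires keeping the pure-deletion and pure-insertion swaps of size up to $r$ (the $Y\subseteq A$ and $Y\supseteq A$ conditions that are also part of $\partial^f_{\mathcal C,\symmdiff(r+1,r+1)}(A)$) in play, and it is exactly why the guarantee degrades to $\tfrac{r}{2m-r}$ and matches the non-monotone local-search analysis rather than the monotone one. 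Checking that every swap in the chosen family stays within cardinality $m$, and that the weights yield precisely $\tfrac{r}{2m-r}$, is the remaining bookkeeping.
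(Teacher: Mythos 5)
Your opening reduction is correct, and it is in fact the only step the paper itself supplies: unwinding Eqn.~\eqref{eq:constraint_cognizant_super} together with the definition of $\partial^f_{\symmdiff(k,l)}$, the hypothesis $\mathbf 0 \in \partial^f_{\mathcal C, \symmdiff(r+1,r+1)}(A)$ is the $r$-swap local-optimality certificate, and the paper then simply quotes Theorem~8 of~\cite{filmus2013inequalities} for the ratio $\frac{r}{2m-r}$. One caveat on your translation: the condition is not ``exactly'' the set of inequalities $f(Y)\le f(A)$ over feasible $Y$ with $|Y\setminus A|\le r$, $|A\setminus Y|\le r$; since $\partial^f_{\symmdiff(k,l)}$ always contains the full $\partial^f_1$ and $\partial^f_2$ parts, it also forces $f(Y)\le f(A)$ for \emph{every} feasible subset and superset of $A$, at any Hamming distance --- in particular $f(A\cap O)\le f(A)$ is available, an inequality that non-monotone local-search analyses typically cannot do without. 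Your restriction of the pure deletions and insertions to ``size up to $r$'' therefore weakens the toolkit the hypothesis actually provides.

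The genuine gap is that the quantitative heart of the statement --- constructing the swap family, fixing the averaging weights, and verifying that they produce $\frac{r}{2m-r}$ --- is never carried out: it is asserted (``the counting weights are what produce the exact ratio'') and then deferred as bookkeeping. Worse, the one structural difficulty you correctly identify, non-monotonicity, is precisely what the sketch fails to resolve. Your telescoping bounds the removal terms by $f(A)$ and accumulates the insertion terms toward a fraction of $f(A\cup O)$; that is the monotone argument, and for non-monotone $f$ there is no inequality $f(A\cup O)\ge f(O)$, so as written the computation yields no bound against $\mathrm{OPT}$ at all. Saying that pure deletions and insertions must be ``kept in play'' names the obstacle without overcoming it; showing how the deletion inequality $f(A\cap O)\le f(A)$ and the swap inequalities combine, with what coefficients, to substitute for the missing monotonicity is exactly the content of the cited theorem. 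Either complete that computation or do what the paper does and invoke Theorem~8 of~\cite{filmus2013inequalities} explicitly; as it stands, the proposal establishes the reduction to local optimality but not the approximation bound.
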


This result follows directly from Theorem 8
in~\cite{filmus2013inequalities}. The best bounds for non-monotone
submodular maximization require running several iterations of local
search procedures. In particular, the procedure of \cite{lee2009non}
runs $k+1$ local search procedures to obtain a $1/(k + 2 + 1/k)$
approximation algorithm for non-monotone submodular maximization
subject to a single matroid constraint. When $k = 1$, running two
rounds of this local search procedure results in a $1/4$
approximation. The individual local search procedures here obtains a
set $A$ satisfying $\mathbf{0} \in \partial^f_{\mathcal C, (2,
  k)}(A)$.


\section{Concave Characterizations: Discrete Separation Theorem and
  Fenchel Duality Theorem}
\label{submodfdtdstmax}

In Section~\ref{submodfdtdstmin}, we investigated forms of the
Discrete separation theorem (DST), the Fenchel duality theorem, and
the Minkowski sum theorem for submodular functions and their
associated polyhedra when seen from the convex perspective. We here analyze
forms of the discrete separation theorem and Fenchel duality theorem
for submodular functions and their associated polyhedra from the
concave perspective.

\subsection{(Concave) Discrete Separation Theorem}
\label{sec:discr-seper-theor-concave}

We first show that a restricted version of a form of Discrete
separation theorem holds that in some sense is the opposite of the
Frank's DST shown in Section~\ref{sec:discr-seper-theor}
(Lemma~\ref{lemma:orig_dst}).  In particular, we see how the current
result is a form of ``concave-like'' variant of the Discrete separation theorem. This
shows that, under some very mild restrictions, given a submodular
function $f$ and a supermodular function $g$ with $f(X) \leq g(X),
\forall X$, there exists a modular function $h$ such that $f(X) \leq
h(X) \leq g(X)$. This therefore shows how submodularity can be seen as
analogous to concavity, in the same way that
Lemma~\ref{lemma:orig_dst} shows how submodularity can be seen as
analogous to convexity.  The lemma follows:

\begin{lemma}[Concave Discrete Separation Theorem (CDST)]
  Given a submodular function $f$ and a supermodular function $g$,
  such that $f(X) \leq g(X), \forall X \subseteq V$, such that
  either $f(\emptyset) = g(\emptyset)$ or $f(V) = g(V)$, there exists a
  modular function $h$ such that $f(X) \leq h(X) \leq g(X), \forall X
  \subseteq V$. Moreover, when $f$ and $g$ are integral (and satisfy
  the above conditions), there exists an integral $h$ satisfying the
  above.
\label{thm:cdst}
\end{lemma}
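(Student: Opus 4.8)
The natural first attempt is to mimic the extension-based proof of the ordinary DST (Lemma~\ref{lemma:orig_dst}): build the convex (Lov\'asz) extension $\lex{f}$ of $f$ and the concave extension $\cex{g}$ of $g$ and separate them by an affine function. Here, however, this route breaks down. Since $f(S) \le g(S)$ for every $S$ and both extensions use the same chain coefficients (Lemma~\ref{lovaszextlemma}), one gets $\lex{f}(w) \le \cex{g}(w)$ on $[0,1]^V$; that is, the convex function now lies \emph{below} the concave one. The Hahn--Banach sandwich guarantees a separating affine function only in the opposite configuration (concave below convex, as in the ordinary DST), so no affine separator need exist in general --- which is precisely why some extra hypothesis is unavoidable. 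I would therefore abandon the extension approach and instead exhibit $h$ \emph{explicitly} from the singleton (or co-singleton) values of $f$ and $g$, using the boundary hypothesis to glue two modular bounds together.

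Consider first the case $f(\emptyset)=g(\emptyset)$, which together with the standing assumption $f(\emptyset)=0$ forces $g(\emptyset)=0$. I would take the modular function $h(Y) \triangleq \sum_{j \in Y} f(j)$ and verify the chain
\begin{align}
f(Y) \;\le\; \sum_{j\in Y} f(j) \;\le\; \sum_{j\in Y} g(j) \;\le\; g(Y), \qquad \forall Y \subseteq V.
\end{align}
The left inequality is the standard modular upper bound on a submodular function (telescope the marginals and apply diminishing returns with $f(\emptyset)=0$; this is the $X=\emptyset$ instance of Eqn.~\eqref{modnembounds1}). The right inequality is its supermodular mirror: superadditivity of $g$ with $g(\emptyset)=0$ gives $g(Y)\ge\sum_{j\in Y} g(j)$. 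The middle inequality is just $f(j)\le g(j)$ for each $j$, read off from $f\le g$ at singletons. Hence $f\le h\le g$, and $h$ is modular.

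For the case $f(V)=g(V)$ I would run the same argument anchored at the top of the lattice, taking $h(Y)\triangleq f(V) - \sum_{j\notin Y} f(j\mid V\setminus j)$. The bound $f(Y)\le h(Y)$ follows from $f(V)-f(Y)=f(V\setminus Y\mid Y)\ge \sum_{j\notin Y} f(j\mid V\setminus j)$ by diminishing returns, while the auxiliary modular function $h'(Y)\triangleq g(V)-\sum_{j\notin Y} g(j\mid V\setminus j)$ satisfies $h'(Y)\le g(Y)$ symmetrically by supermodularity of $g$. The crux is the middle inequality $h\le h'$: using $f(V)=g(V)$ it reduces to $f(j\mid V\setminus j)\ge g(j\mid V\setminus j)$ for every $j$, and since $f(j\mid V\setminus j)-g(j\mid V\setminus j)=g(V\setminus j)-f(V\setminus j)\ge 0$ by $g\ge f$, this holds; thus $f\le h\le h'\le g$. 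Equivalently, one can reduce this case to the previous one by the reflection $Y\mapsto V\setminus Y$, under which $f$ and $g$ remain submodular and supermodular and the condition $f(V)=g(V)$ becomes the empty-set condition.

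In both cases $h$ is modular and, when $f$ and $g$ are integral, $h$ is manifestly integral, being an integer combination of values of $f$; this yields the integral refinement at no extra cost. The only genuinely non-routine point is the conceptual one: recognizing that the ``extension plus separating hyperplane'' template fails in this direction, and that the right device is a pair of modular bounds --- a modular upper bound of $f$ and a modular lower bound of $g$ anchored at the \emph{same} lattice endpoint --- whose nesting is guaranteed exactly by the hypothesis that $f$ and $g$ agree at that endpoint. Everything else is a direct consequence of diminishing (resp.\ increasing) marginal returns.
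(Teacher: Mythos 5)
Your proposal is correct and follows essentially the same route as the paper: in the case $f(\emptyset)=g(\emptyset)$ you construct exactly the paper's separator $h(Y)=\sum_{j\in Y}f(j)$ and verify the same chain $f(Y)\le\sum_{j\in Y}f(j)\le\sum_{j\in Y}g(j)\le g(Y)$ using subadditivity of $f$, superadditivity of $g$, and $f\le g$ on singletons. For the case $f(V)=g(V)$ the paper simply invokes the reflection $Y\mapsto V\setminus Y$, which you also note as an alternative; your explicit top-anchored construction $h(Y)=f(V)-\sum_{j\notin Y}f(j\mid V\setminus j)$ is a correct, harmless expansion of that same step, and the integrality claim is immediate from the construction in both treatments.
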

\begin{proof}
Assume first that $f(\emptyset) = g(\emptyset)$. Then Let $h(X) = f(\emptyset) + \sum_{j \in X} f(j | \emptyset)$. Then the following chain of inequalities hold:
\begin{align}
f(X) \leq h(X) = f(\emptyset) + \sum_{j \in X} f(j | \emptyset) \leq g(\emptyset) + \sum_{j \in X} g(j | \emptyset) \leq g(X),
\end{align}
which follows since $f(j | \emptyset) = f(j) - f(\emptyset) \leq g(j) - g(\emptyset) = g(j | \emptyset)$. The rest of the inequalities follow from submodularity (and supermodularity) of $f$ (and $g$). The result for when $f(V) = g(V)$ analogously follows by considering the functions $f(V \backslash X)$ and $g(V \backslash X)$ which are submodular and supermodular respectively.
\end{proof}
In particular, when $f$ and $g$ are normalized $f(\emptyset) =
g(\emptyset) = 0$, the above always
holds with a normalized (i.e., $h(\emptyset) = 0$) modular function. 
Lemma~\ref{thm:cdst}, however, can in fact be further
generalized. In particular, it is not hard to see that the result goes
through whenever $\argmin_X [g(X) - f(X)]$ is either $\emptyset$ or
$V$. To show this, we provide a generalized form
that depends on the outcome of $\argmin_X [g(X) - f(X)]$.

\begin{lemma}[Generalized Concave Discrete Separation Theorem]
  Given a submodular function $f$ and a supermodular function $g$,
  such that $f(X) \leq g(X), \forall X \subseteq V$, let $A \in
  \argmin_X [g(X) - f(X)]$. Then there exists a modular function $h$
  such that $f(X) \leq h(X) \leq g(X)$, $\forall X: X \subseteq A$ and $
  X \supseteq A$.
\label{thm:gcdst}
\end{lemma}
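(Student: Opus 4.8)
The plan is to exhibit an explicit separating modular function, namely the tight modular upper bound of $f$ associated with the point $\tilde g_A$ from Eqn.~\eqref{eq:gtilde}. Concretely, I would set
\[
h(Y) \;=\; \tilde m^A(Y) \;=\; f(A) - \tilde g_A(A) + \tilde g_A(Y),
\]
which is modular since $\tilde g_A(Y) = \sum_{j\in Y}\tilde g_A(j)$. The two obligations $f(Y)\le h(Y)$ and $h(Y)\le g(Y)$ then split cleanly. The upper bound $f(Y)\le h(Y)$ for every $Y\in[\emptyset,A]\cup[A,V]$ is free: it is exactly the statement of Lemma~\ref{thm:supergrad_subset_supersets}. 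Note this recovers the special case $A=\emptyset$ in Lemma~\ref{thm:cdst}, where $\tilde m^\emptyset(Y)=f(\emptyset)+\sum_{j\in Y}f(j\mid\emptyset)$ is precisely the $h$ used there.

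The new content is the lower bound $h(Y)\le g(Y)$, and here I would first build a mirror-image object for $g$. Applying Lemma~\ref{thm:supergrad_subset_supersets} to the submodular function $-g$ produces a modular lower bound of $g$, tight at $A$, of the form $\tilde n^A(Y) = g(A) - \tilde g'_A(A) + \tilde g'_A(Y)$, where $\tilde g'_A(j) = g(j\mid A\setminus j)$ for $j\in A$ and $\tilde g'_A(j)=g(j\mid A)$ for $j\notin A$; this gives $\tilde n^A(Y)\le g(Y)$ on $[\emptyset,A]\cup[A,V]$ with $\tilde n^A(A)=g(A)$. Since $\tilde m^A(A)=f(A)\le g(A)=\tilde n^A(A)$, the two modular functions already agree in orientation at $A$. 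It then suffices to prove $\tilde m^A(Y)\le \tilde n^A(Y)$ for all relevant $Y$, because chaining gives $h(Y)=\tilde m^A(Y)\le\tilde n^A(Y)\le g(Y)$.

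The crux, and the step where the hypothesis $A\in\argmin_X[g(X)-f(X)]$ is used, is this coefficient-wise comparison. Writing out $\tilde n^A-\tilde m^A$ separately on subsets and supersets of $A$, I get for $Y\subseteq A$ the difference $[g(A)-f(A)] + \sum_{j\in A\setminus Y}\bigl[f(j\mid A\setminus j)-g(j\mid A\setminus j)\bigr]$, and for $Y\supseteq A$ the difference $[g(A)-f(A)] + \sum_{j\in Y\setminus A}\bigl[g(j\mid A)-f(j\mid A)\bigr]$. The leading term is nonnegative from $f\le g$. Minimality of $A$ supplies the sign of each summand: for $j\in A$, comparing $A$ with $A\setminus j$ yields $g(j\mid A\setminus j)\le f(j\mid A\setminus j)$, and for $j\notin A$, comparing $A$ with $A\cup j$ yields $f(j\mid A)\le g(j\mid A)$. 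Hence every summand is nonnegative and $\tilde n^A(Y)-\tilde m^A(Y)\ge 0$, completing the argument. I expect the main obstacle to be purely bookkeeping: correctly setting up the two supergradient constructions (one for $f$, one for $-g$) with matching orientation at $A$ so that the local single-element consequences of global minimality are exactly the inequalities needed for the telescoped sums, and noting that only local minimality (against single additions/removals) is actually invoked. Integrality, if desired, is immediate since $\tilde g_A$ and $f(A)-\tilde g_A(A)$ are integer combinations of integer marginals.
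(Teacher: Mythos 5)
Your proposal is correct and follows essentially the same route as the paper: your separating function $h=\tilde m^A$ is exactly the paper's $h(X) = f(A) + \sum_{j \in X \setminus A} f(j | A) - \sum_{j \in A \setminus X} f(j | A \setminus j)$, the bound $f\le h$ is obtained from Lemma~\ref{thm:supergrad_subset_supersets} exactly as in the paper, and your intermediate modular function $\tilde n^A$ is precisely the expression the paper produces inline via the ``supermodular variant'' of that same lemma. The only cosmetic difference is that the paper first shifts $f$ by $\alpha=\min_X[g(X)-f(X)]$ to normalize $f(A)=g(A)$, whereas you carry the slack $g(A)-f(A)\ge 0$ explicitly and note that only local minimality of $A$ (against single-element additions and removals) is invoked --- a mild sharpening, but the same argument.
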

\begin{proof}
  The proof of this result is analogous to the earlier one. 
  First, define $\alpha = \min_X g(X) - f(X)$
  and  $A \in \argmin_X g(X) - f(X)$.
  Then $\alpha \geq 0$, and we
  can define function $f'$ with $f'(X) = f(X) + \alpha$ so that $f'(A)
  = g(A)$. Then given a modular separating function $h$ with $f'(X)
  \leq h(X) \leq g(X)$ for all $X$, we have that $f'(A) = h(X) = g(A)$, and that
  $f(X) \leq h(X) \leq
  g(X)$ for all $X$. Hence, given $A
  \in \argmin_X g(X) - f(X)$, we may and do assume without loss of generality that
  $f(A) = g(A)$. 

Next, define 
\begin{align}
h(X) = f(A) + \sum_{j \in X \backslash A}
  f(j | A) - \sum_{j \in A \setminus X} f(j | A \backslash j)
\end{align} 
By Lemma~\ref{thm:supergrad_subset_supersets},
we have that $\forall X \in [0, A] \cup [A,V]$, 
$f(X) \leq h(X) = f(A) + \tilde g_A(X) - \tilde g_A(A)$.
Moreover, since $f(A) = g(A)$, and applying
a supermodular variant of Lemma~\ref{thm:supergrad_subset_supersets},
we have that $\forall X \in [0,A] \cup [A,V]$:
\begin{align}
h(X) &= f(A) + \sum_{j \in X \backslash A} f(j | A) -
  \sum_{j \in A \setminus X} f(j | A \backslash j) \\
   & \leq g(A) + \sum_{j \in
    X \backslash A} g(j | A) - \sum_{j \in A \setminus X} g(j | A
  \backslash j) \leq g(X).
\end{align}
\end{proof}
Observe that Lemma~\ref{thm:cdst} is a special case of
Lemma~\ref{thm:gcdst} as when $f(\emptyset) = g(\emptyset)$ (resp.\ 
$f(V) = g(V)$) we have $\alpha = 0$ and we may take $A = \emptyset$ (resp.\ $A=V$).
The discrete separation theorems, however, might not hold under the
most general conditions on $f$ and $g$. However, they do hold for
certain important subclasses. For example, if $f$ is a $M^{\natural}$-concave
function (which is submodular when restricted to $2^V$), and $g$ is a
$M^{\natural}$-convex function (which is supermodular when restricted
to $2^V$), the discrete separation theorem always holds~(c.f Theorem
8.15 in \cite{murota2003discrete}).

\subsection{Superdifferential Fenchel Duality Theorem}
\label{sec:superd-fench-dual}

Finally, we show that a version of the Fenchel duality theorem also
holds in certain restricted cases. Given a submodular function $f$ (or
equivalently supermodular function $g$), define the concave Fenchel
dual functions $f_*$, and correspondingly $g_*$, as:
\begin{align}
f_*(y) = \min_{X \subseteq V} [ y(X) - f(X)], \qquad \text{ and } \qquad
 g_*(y) = \max_{X \subseteq V} [y(X) - g(X)].
\end{align}
The Fenchel duals $f_*$ and $g_*$ are concave and convex functions
respectively. Unlike the convex Fenchel duals, obtaining these
expressions exactly is NP hard, since they correspond to submodular
maximization. These can however be approximately obtained up to
constant factors. The following lemma gives a restricted version
of Fenchel duality theorem:
\begin{lemma}
Given a submodular function $f$ and a supermodular function $g$ such that the 
concave discrete separation theorem holds,
\begin{align}
\max_{X \subseteq V} f(X) - g(X) = \min_x g_*(x) - f_*(x)
\end{align}
Furthermore, if $f$ and $g$ are integral and satisfy the CDST, the
maximum on the right hand side is attained by an integral vector $x$.
\end{lemma}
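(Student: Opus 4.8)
The plan is to prove the identity by establishing the two inequalities separately: the easy ``weak duality'' direction by elementary bounding (requiring no hypothesis), and the ``strong duality'' direction by reduction to the concave discrete separation theorem (Lemma~\ref{thm:cdst}).

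First I would dispatch weak duality, namely $\min_x [g_*(x) - f_*(x)] \geq \max_{X \subseteq V} [f(X) - g(X)]$. Fix any $x \in \mathbb{R}^n$ and any $X \subseteq V$. By definition $g_*(x) = \max_{Y \subseteq V} [x(Y) - g(Y)] \geq x(X) - g(X)$ and $f_*(x) = \min_{Y \subseteq V} [x(Y) - f(Y)] \leq x(X) - f(X)$, so subtracting yields $g_*(x) - f_*(x) \geq f(X) - g(X)$. Since this holds for every $x$ and every $X$, taking the minimum over $x$ on the left and the maximum over $X$ on the right gives the desired inequality.

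Next, for strong duality, let $A \in \argmax_X [f(X) - g(X)]$ and set $\beta = f(A) - g(A)$. Define the shifted supermodular function $g'(X) = g(X) + \beta$; then $f(X) \leq g'(X)$ for all $X$ with equality at $A$, and in fact $A \in \argmin_X [g'(X) - f(X)]$ since $g'(X) - f(X) = \beta - (f(X) - g(X)) \geq 0$. Because we assume the concave discrete separation theorem holds, Lemma~\ref{thm:cdst} applies to the pair $(f, g')$ and supplies a modular (affine) function $h(X) = x^*(X) + c$ with $f(X) \leq h(X) \leq g'(X)$ for all $X$. From $f(X) \leq x^*(X) + c$ for all $X$ we obtain $f_*(x^*) \geq -c$, and from $x^*(X) + c \leq g(X) + \beta$ for all $X$ we obtain $g_*(x^*) \leq \beta - c$. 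Hence $g_*(x^*) - f_*(x^*) \leq \beta = \max_X [f(X) - g(X)]$, which combined with weak duality forces equality and exhibits $x^*$ as a minimizer of the dual.

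The main subtlety is precisely the ``for all $X$'' scope of the separation: the bounds on $f_*(x^*)$ and $g_*(x^*)$ are a minimum and a maximum over \emph{all} subsets, so the separating $h$ must bracket $f$ and $g'$ globally, not merely on the chain $[\emptyset, A] \cup [A, V]$ produced by the generalized Lemma~\ref{thm:gcdst}. This is exactly why the hypothesis is stated as ``the concave discrete separation theorem holds'' (equivalently, $\emptyset$ or $V$ lies in $\argmax_X [f(X) - g(X)]$, giving $f(\emptyset) = g'(\emptyset)$ or $f(V) = g'(V)$), under which Lemma~\ref{thm:cdst} delivers a globally separating modular function. For the integrality claim, integral $f$ and $g$ make $\beta$ integral and hence $g'$ integral, so the integral version of Lemma~\ref{thm:cdst} produces an integral $h$, and therefore an integral minimizer $x^*$.
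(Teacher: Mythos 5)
Your proposal is correct, but it takes a genuinely different route from the paper. The paper's proof is a one-line reduction to the literature: it invokes Theorem 4 of \cite{fujishigenarayanan2005}, which states that for any pair of set functions for which the discrete separation theorem holds, the Fenchel duality theorem holds as well, and no argument is written out. You instead prove that implication directly: weak duality by elementary bounding (valid with no hypothesis at all), and strong duality by shifting $g$ to $g'(X) = g(X) + \beta$ with $\beta = \max_X [f(X) - g(X)]$, applying the separation hypothesis to the pair $(f, g')$, and reading a dual optimizer $x^*$ off the separating modular function $h(X) = x^*(X) + c$. This is essentially the content of the cited theorem, so your write-up has the advantage of making the result self-contained; it also makes the integrality claim transparent (the integral version of Lemma~\ref{thm:cdst} gives integral $h$, hence integral $x^*$), whereas the paper leaves that entirely to the citation. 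Your observation that the generalized separation of Lemma~\ref{thm:gcdst}, which only brackets $f$ and $g$ on $[\emptyset, A] \cup [A, V]$, is \emph{not} enough for duality is also a correct and worthwhile point, since the duals are extremized over all subsets. One caveat: your parenthetical claim that the hypothesis is ``equivalently, $\emptyset$ or $V$ lies in $\argmax_X [f(X) - g(X)]$'' overstates things. That condition is sufficient for Lemma~\ref{thm:cdst} to apply to the shifted pair, but it is not necessary for separation to hold --- for instance, for $M^{\natural}$-concave/$M^{\natural}$-convex pairs the separation theorem holds unconditionally, and the paper intends the lemma to cover such cases. Since your argument only ever uses the existence of a globally separating modular function for $(f, g')$, it remains valid under the hypothesis as the paper intends it; simply drop the claimed equivalence.
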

\begin{proof}
The proof of this result follows directly from Theorem 4 of~\cite{fujishigenarayanan2005}. In particular, \cite{fujishigenarayanan2005} show that given set functions $f$ and $g$ such that the discrete separation theorem holds, the Fenchel duality theorem will also hold for this pair of functions $f$ and $g$.
\end{proof}
Unlike the Fenchel duality theorem from the convex perspective, the
result above does not hold in the fully general setting. Moreover, if
the functions $f$ and $g$ are $M^{\natural}$-concave and
$M^{\natural}$-convex respectively, the Fenchel duality theorem always
holds~(c.f Theorem 8.21 in \cite{murota2003discrete}).

\subsection{Superdifferential Minkowski sum theorem}
\label{sec:superd-mink-sum}

Analogous to the results above, we show a certain restricted form of
the Minkowski sum theorem.
\begin{lemma}
Given two submodular functions $f_1$ and $f_2$, it holds that that\footnote{Recall, the addition of the polyhedra corresponds to point-wise addition}:
\begin{align}
\mathcal P^{f_1 + f_2} = \mathcal P^{f_1} + \mathcal P^{f_2}
\end{align}
Similarly, $\partial^{f_1 + f_2}(\emptyset) = \partial^{f_1}(\emptyset) + \partial^{f_2}(\emptyset)$ and $\partial^{f_1 + f_2}(V) = \partial^{f_1}(V) + \partial^{f_2}(V)$.
\end{lemma}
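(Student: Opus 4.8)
The plan is to exploit the fact that, thanks to Lemma~\ref{antisubpoly} and Lemma~\ref{thm:super_at_empty_and_full}, each of the three polyhedra appearing in the statement is described by only $n$ coordinate-wise inequalities, and hence is a translate of an orthant. Since the nonnegative (resp.\ nonpositive) orthant $K$ is a convex cone satisfying $K + K = K$, the Minkowski sum of two such translated orthants is again a translated orthant whose shift is the sum of the two individual shifts; the entire argument then reduces to checking that the relevant shift vectors add correctly.

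Concretely, for the first identity I would first note that $f_1 + f_2$ is submodular, so Lemma~\ref{antisubpoly} applies to all three of $f_1$, $f_2$, and $f_1 + f_2$, giving
\begin{align}
\mathcal P^{f_i} = \{ x : x(j) \geq f_i(j), \forall j \in V \} = a^i + \mathbb R^n_{\geq 0},
\end{align}
where $a^i_j = f_i(j)$, and likewise $\mathcal P^{f_1 + f_2} = a^{1+2} + \mathbb R^n_{\geq 0}$ with $a^{1+2}_j = (f_1+f_2)(j) = f_1(j) + f_2(j) = a^1_j + a^2_j$. Writing $K = \mathbb R^n_{\geq 0}$, I then compute $\mathcal P^{f_1} + \mathcal P^{f_2} = (a^1 + K) + (a^2 + K) = (a^1 + a^2) + (K + K) = (a^1 + a^2) + K = \mathcal P^{f_1 + f_2}$, where the middle equality uses that $K$ is a cone closed under addition.

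The $\partial^f(\emptyset)$ identity is then immediate, since Lemma~\ref{thm:super_at_empty_and_full} gives $\partial^f(\emptyset) = \mathcal P^f$ for every submodular $f$, so this case is literally the identity just proved. For $\partial^f(V)$ I would repeat the orthant argument with the opposite orientation: Lemma~\ref{thm:super_at_empty_and_full} gives $\partial^{f}(V) = \{ x : x(j) \leq f(j \mid V \setminus j), \forall j \} = c + \mathbb R^n_{\leq 0}$ with $c_j = f(j \mid V \setminus j)$, and the required additivity of shifts follows from $(f_1 + f_2)(j \mid V \setminus j) = f_1(j \mid V \setminus j) + f_2(j \mid V \setminus j)$, which holds because $g \mapsto g(V) - g(V \setminus j)$ is linear in $g$. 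The same cone computation, now with $K = \mathbb R^n_{\leq 0}$ (again $K + K = K$), finishes the proof.

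There is essentially no hard step here: the content is entirely that the submodular upper polyhedron and the two extremal superdifferentials collapse to translated orthants, at which point Minkowski addition is trivial. The only point requiring a small amount of care is verifying both inclusions of the Minkowski sum, but these are subsumed by the identity $K + K = K$ for a convex cone $K$ (every element of $K$ equals itself plus $0$, giving $K \subseteq K + K$, and $K$ is closed under addition, giving the reverse). It is worth remarking that this is exactly why the statement is restricted to $X = \emptyset$ and $X = V$: for a general $X$ the superdifferential $\partial^f(X)$ is not an orthant (indeed, by Lemma~\ref{NPsuperdiffmembership} it is NP-hard even to test membership), so the clean cone argument is unavailable and no analogous Minkowski identity should be expected.
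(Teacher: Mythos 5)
Your proof is correct and follows essentially the same route as the paper's: both reduce everything to the coordinate-wise descriptions of the three polyhedra (Lemmas~\ref{antisubpoly} and~\ref{thm:super_at_empty_and_full}), under which each polyhedron is determined by a single extreme point ($f(j)$ for $\mathcal P^f$ and $\partial^f(\emptyset)$, and $f(j \mid V \setminus j)$ for $\partial^f(V)$). If anything, your version is the more complete one: the paper merely observes that the unique extreme points add up ($f_1(j) + f_2(j)$) and asserts ``hence the Minkowski sum theorem holds,'' whereas you supply the step that actually justifies this, namely writing each polyhedron as a translated orthant and using $K + K = K$ for the common recession cone $K$, which is what turns additivity of the extreme points into the Minkowski-sum identity.
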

\begin{proof}
  This result follows directly by considering definitions. In
  particular, one can see that the extreme points of these polyhedra
  can be explicitly characterized by a submodular function. For
  example, the polyhedron $\mathcal P^{f_1}$ has a single extreme
  point defined by the vector $f_1(j), j \in V$. Similarly, the
  extreme point of $\mathcal P^{f_2}$ is $f_2(j), j \in V$, and the
  extreme point of $\mathcal P^{f_1 + f_2}$ is $f_1(j) + f_2(j), j \in
  V$, and hence the Minkowski sum theorem holds.
\end{proof}
Unlike the Minkowski sum theorem on the subdifferential and submodular
polyhedron, the result above may not hold for the superdifferential $\partial^f(X)$ of
an arbitrary set $X\subseteq V$, nor must it hold for the generalized submodular upper
polyhedron $\mathcal P_{\text{gen}}^f$. They do hold, however, for
certain subclasses of submodular functions, such as (once again) the class of
$M^{\natural}$-concave functions. This fact follows from
Theorem 3 in \cite{fujishigenarayanan2005}, and the fact that
$M^{\natural}$-concave functions satisfy the Fenchel duality theorem.

\section{Conclusions and Future Work}
\label{sec:concl-open-probl}

In this manuscript, we investigated several connections between convex
and concave aspects of submodular functions. We provided
characterizations of the superdifferentials, concave extensions and
separation and duality theorems related to concave aspects of a
submodular function, and connected these new results to existing
results on the convex aspects of submodular functions. To our
knowledge, this is the first work in this direction. We also show how
for specific subclasses of submodular functions, such as the class of
$M^{\natural}$-concave set functions, this characterization is exact,
while for other submodular functions, this can be done approximately.

We lastly discuss a few problems that remain open and that could be
considered for future work.
\begin{itemize}

\item Are there are other subclasses of submodular functions (apart
  from the class of $M^{\natural}$-concave set functions) for which
  the concave aspects, like the superdifferentials, concave extensions
  and characterizations like the discrete separation theorem, Fenchel
  duality theorem, and so on, can be provided exactly. In particular, we saw
  that the $M^{\natural}$-concave set functions satisfy the property
  that $\partial^f(X) = \partial^f_{\Delta(2, 2)}(X)$. An interesting
  question is whether there are other interesting subclasses of
  submodular functions exist that satisfy similar conditions on their
  superdifferential (e.g., $\partial^f(X) = \partial^f_{\Delta(k,
    k)}(X)$ for constant $k$).  Characterizing such functions could
  lead to exact polytime in $|V|$ (although exponential in $k$)
  algorithms for maximizing such subclasses of submodular functions.

\item In section~\ref{submodmaxopt}, we investigated optimality
  conditions related to submodular maximization and its connection to
  the superdifferential. An interesting open problem is if this
  characterization could provide insight into algorithms for
  submodular maximization, and conditions when submodular maximization
  can be done exactly. Moreover, it also is interesting that
  approximating the superdifferential provides different approximation
  algorithms for submodular maximization. It will be interesting if
  there is a principled relationship between these two.

\item In Section~\ref{submodfdtdstmax}, we study the Fenchel duality
  theorem, the discrete separation theorem, and the Minkowski sum theorem. We
  show that these results hold under restricted settings. An open
  question is if Edmonds intersection theorem (cf. Section 4.1
  in~\cite{fujishige2005submodular}) also holds under certain
  restricted settings.

\item In Section~\ref{sec:inner-outer-bounds} we defined inner bounds
  on the generalized submodular upper polyhedron and mentioned how it
  would also be easy to define outer bounds on this polyhedron. Then
  in Section~\ref{sec:polyh-char-conc} we used these inner bounds to
  produce upper bounds on the concave extension of a submodular
  function. An open question is if the aforesaid outer bounds, which
  would produce corresponding tractable lower bounds on the concave
  extensions, would be useful for optimization or certain
  applications.

\item Figure~\ref{fig:2d_superdifferential_figure} geometrically
  suggests that a rotation of the axes could transform the set of
  superdifferentials into a set of subdifferentials. It would be
  interesting to see if in the general case, for arbitrary size $V$,
  if some multi-axis rotation might perform a similar rotation, and to
  understand the complexity of identifying this rotation (something
  that at least must be NP-hard do the hardness of submodular
  maximization).

\item It may be interesting to consider the Lasserre
  hierarchy\cite{rothvoss2013lasserre}, and how it might be related to
  the complexity of the inequalities needed to characterize the
  superdifferential and bound the complexity of computing submodular
  maximization for certain submodular functions.

\rishabh{Jeff}{
 things to look at:
  - Sherali, Adams 
  - Lasserre hierarchy
    \url{http://www.math.washington.edu/~rothvoss/lecturenotes/lasserresurvey.pdf}
  this seems to be related to tree-width, but is it somehow
  related to the complexity of the inequalities needed to characterize
  the superdifferential (and thus bound the complexity of computing
  submodular maximization exactly)??
}
\rishabh{Jeff}{
I added a bullet for this, see above}

\item Additional analytical expressions of set functions are
  $M^{\natural}$-concave when the family of sets is restricted to a
  Laminar family of subsets of $2^V$ \cite{murota2008submodular}. It
  may be elucidating to consider how the superdifferential, and in
  particular $\partial^f_{\symmdiff(2, 2)}(X)$, relates to Laminar
  families.



\item Finally, thanks to the Minkowski sum theorem, the \lovasz{}
  extension of a submodular function satisfies that $\lex{f}_{1+2}(x)
  = \lex{f_1}(x) + \lex{f_2}(x)$, where $f_{1+2}(x) = f_1(x) +
  f_2(x)$, i.e., the \lovasz{} extension of a sum of two submodular
  functions is equal to the sum of the individual \lovasz{}
  extensions. An open problem is whether this relation holds (under
  restricted settings possibly) for the various concave extensions
  defined in this manuscript.
\end{itemize}

\rishabh{Jeff}{add text here on the open problems that we talked about, many of which are mentioned in the comments above.}\jeff{Rishabh}{I added a paragraph above. Let me know how this sounds.}\rishabh{Jeff}{I've added four more paragraphs, and edited everything.}

\rishabh{Jeff}{AUGTODO: double check that all of the references are
  correct, and clean, and full, and that we've not misspelled anyone's
  name or anything like that. This is important!}

\bibliographystyle{abbrv}      
\bibliography{../Combined_Bib/submod}
\end{document}